   \def\@citecolor{blue}%
   \def\@urlcolor{blue}%
   \def\@linkcolor{blue}%
\def\orcidID#1{\smash{\href{http://orcid.org/#1}{\protect\raisebox{-1.25pt}{\protect\includegraphics{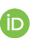}}}}}
\newcommand{\typec}[1]{{\color{RoyalBlue}{#1}}} 
\newcommand{\termc}[1]{{\color{RedViolet}{#1}}}
\newcommand{\kindc}[1]{{\color{RedOrange}{#1}}} 
\newcommand{\tsymc}[1]{{\color{OliveGreen}{#1}}} 
\newcommand{\nsymc}[1]{{\color{BrickRed}{#1}}} 
\newcommand{\blk}[1]{{\color{black}{#1}}}
\newcommand{\TT}{\typec{T}}
\newcommand{\UT}{\typec{U}}
\newcommand{\VT}{\typec{V}}
\newcommand{\XT}{\typec X}
\newcommand{\Xnt}{\nsymc X}
\newcommand{\Ynt}{\nsymc Y}
\newcommand{\Znt}{\nsymc Z}
\newcommand{\Ent}{\nsymc E}
\newcommand{\Fnt}{\nsymc F}
\newcommand{\Rnt}{\nsymc R}
\newcommand{\Ant}{\nsymc A}
\newcommand{\Bnt}{\nsymc B}
\newcommand{\ntdelta}{\nsymc\delta}
\newcommand{\tiota}{\typec \iota}
\newcommand{\talpha}{\typec \alpha}
\newcommand{\tbeta}{\typec \beta}
\newcommand{\grmeq}{\; ::= \;}
\newcommand{\grmor}{\; \mid \;}
\newcommand{\keyword}[1]{\ensuremath{\mathsf{#1}}\xspace}
\newcommand{\tkeyword}[1]{\keyword{\typec{#1}}}
\newcommand{\tekeyword}[1]{\keyword{\termc{#1}}}
\newcommand{\symkeyword}[1]{\keyword{\tsymc{#1}}}
\newcommand{\Label}[1]{\tkeyword{#1}}
\newcommand{\quitl}{\Label{Quit}}
\newcommand{\gol}{\Label{Go}}
\newcommand{\quits}{\symkeyword{Quit}}
\newcommand{\gos}{\symkeyword{Go}}
\newcommand{\nodel}{\Label{Node}}
\newcommand{\donel}{\Label{Done}}
\newcommand{\morel}{\Label{More}}
\newcommand{\dones}{\symkeyword{Done}}
\newcommand{\mores}{\symkeyword{More}}
\newcommand{\leafl}{\Label{Leaf}}
\newcommand{\fstl}{\Label{Fst}}
\newcommand{\sndl}{\Label{Snd}}
\newcommand{\intstreamt}{\tkeyword{IntStream}}
\newcommand{\streamt}{\tkeyword{Stream}}
\newcommand{\operator}[1]{\operatorname{#1}}
\newcommand{\fv}{\operator{fv}}      
\newcommand{\Eq}{\doteq} 
\newcommand{\Empty}{\varepsilon} 
\newcommand{\emptyword}{\nsymc\Empty}
\newcommand{\gequiv}{\approx} 
\newcommand{\Oplus}{\hspace{-.1ex}\oplus\hspace{-.1ex}} 
\newcommand{\nbb}{\mathbb{N}}
\newcommand{\E}{\mathcal{E}}
\newcommand{\N}{\mathcal{N}}
\newcommand{\R}{\mathcal{R}}
\newcommand{\T}{\mathcal{T}}
\newcommand{\X}{\mathcal X} 
\newcommand{\arity}{\operatorname{arity}}
\newcommand{\rename}{\operatorname{rename}}
\newcommand{\first}{\operatorname{first}}
\newcommand{\renameb}[1]{\blk{\rename(}#1\blk{)}}
\newcommand{\word}{\operatorname{word}}
\newcommand{\wordb}[1]{\blk{\word(}#1\blk{)}}
\newcommand{\tstate}{\operatorname{state}}
\newcommand{\tstateb}[1]{\blk{\tstate(}#1\blk{)}}
\newcommand{\subject}{\operatorname{subj}}
\newcommand{\subj}[1]{\subject(\termc{#1})}
\newcommand{\agree}[4]{\operatorname{agree}^{\termc{#1 #2}}(\termc{#3},\termc{#4})}
\newcommand{\declrel}[2]{#1\hfill\fbox{{#2}}}
\newcommand{\kind}{\kindc{\kappa}}
\newcommand{\kindp}{\kindc{\kappa'}}
\newcommand{\skind}{\kindc{\textsc s}}
\newcommand{\tkind}{\kindc{\textsc t}}
\newcommand{\karrow}[2]{\kindc{{#1}\Rightarrow{#2}}}
\newcommand{\kast}{\kindc *}
\newcommand{\End}{\tkeyword{End}}
\newcommand{\Skip}{\tkeyword{Skip}}
\newcommand{\Int}{\tkeyword{Int}}
\newcommand{\Bool}{\tkeyword{Bool}}
\newcommand{\Unit}{\tkeyword{Unit}}
\newcommand{\Dual}{\tkeyword{Dual}}
\newcommand{\Sharp}{\typec\sharp}
\newcommand{\Semi}{\typec;}
\newcommand{\Arrow}{\typec\rightarrow}
\newcommand{\INp}{?}
\newcommand{\INn}[1]{\typec{\INp{#1}}}		
\newcommand{\OUTp}{!}
\newcommand{\OUTn}[1]{\typec{\OUTp{#1}}}	
\newcommand{\MSGp}{\sharp}
\newcommand{\MSGn}[1]{\typec{\MSGp{#1}}}	
\newcommand{\IN}[2]{\typec{\INn{#1}.{#2}}}
\newcommand{\OUT}[2]{\typec{\OUTn{#1}.{#2}}}
\newcommand{\MSG}[2]{\typec{\MSGn{#1}.{#2}}}
\newcommand{\variants}[1]{\langle{#1}\rangle}
\newcommand{\recordf}[3]{\{{#1}\colon {#2}\}_{{#1}\in{#3}}} 
\newcommand{\records}[1]{\{#1\}}
\newcommand{\varrecf}[3]{\llparenthesis{#1}\colon {#2}\rrparenthesis_{{#1}\in{#3}}}
\newcommand{\varrecs}[1]{\llparenthesis{#1}\rrparenthesis}
\newcommand{\intchoicep}{\Oplus}
\newcommand{\intchoice}{\typec\intchoicep} 
\newcommand{\extchoicep}{\&}
\newcommand{\extchoice}{\typec\extchoicep}	
\newcommand{\choicep}{\odot}
\newcommand{\choice}{\typec\choicep}
\newcommand{\semit}[2]{\typec{{#1};{#2}}}
\newcommand{\function}[2]{\typec{#1 \rightarrow #2}}
\newcommand{\pairt}[2]{\typec{#1 \otimes #2}}
\newcommand{\tapp}[2]{\typec{{#1}\,{#2}}}
\newcommand{\tabs}[3]{\typec{\lambda\tbind{#1}{#2}.{#3}}}
\newcommand{\tlambda}[3]{\typec{\lambda\tbind{#1}{#2}.{#3}}}
\newcommand{\tforall}[1]{\typec{\forall}_\kindc{#1}}
\newcommand{\tmuinfix}[3]{\typec{\mu}\,\tbind{#1}{#2}\typec{.\,#3}}
\newcommand{\tmu}[1]{\typec{\mu}_\kindc{#1}}
\newcommand{\tbind}[2]{\typec{#1}\blk{\colon}\kindc{#2}} 
\newcommand{\foralltinfix}[3]{\typec{\forall\tbind{#1}{#2}.\,#3}}
\newcommand{\trecord}[2]{\typec{\{\overline{{#1}\colon{#2}}\}}}
\newcommand{\tvariant}[2]{\typec{\langle\overline{{#1}\colon{#2}}\rangle}}
\newcommand{\tbananas}[2]{\typec{\llparenthesis\overline{{#1}\colon{#2}}\rrparenthesis}}
\newcommand{\tchoice}[2]{\typec\odot\trecord{#1}{#2}}
\newcommand{\tinttchoice}[2]{\typec\oplus\trecord{#1}{#2}}
\newcommand{\tdual}[1]{\tapp\Dual{#1}}
\newcommand{\callt}[2]{\typec{{#1}\langle{#2}\rangle}}
\newcommand{\letk}{\tekeyword{let}}
\newcommand{\ink}{\tekeyword{in}}
\newcommand{\matchk}{\tekeyword{match}}
\newcommand{\withk}{\tekeyword{with}}
\newcommand{\casek}{\tekeyword{case}}
\newcommand{\ofk}{\tekeyword{of}}
\newcommand{\ask}{\tekeyword{as}}
\newcommand{\sendk}{\tekeyword{send}}
\newcommand{\receivek}{\tekeyword{receive}}
\newcommand{\forkk}{\tekeyword{fork}}
\newcommand{\newk}{\tekeyword{new}}
\newcommand{\closek}{\tekeyword{close}}
\newcommand{\reck}{\tekeyword{rec}}
\newcommand{\selectk}{\tekeyword{select}}
\newcommand{\selectc}[2]{\selectk\,\termc{#1}\,\ask\,\typec{#2}}
\newcommand{\ebind}[2]{\termc{#1}\colon\typec{#2}}
\newcommand{\eubind}[2]{\termc{#1}\colon^{\!\!\!\omega}\:\typec{#2}} 
\newcommand{\eunit}{\termc{\{\}}}
\newcommand{\eapp}[2]{\termc{#1}\,\termc{#2}}
\newcommand{\erec}[3]{\termc{\reck\,\ebind{#1}{#2}.{#3}}}
\newcommand{\eabs}[3]{\termc{\lambda\ebind{#1}{#2}.{#3}}}
\newcommand{\etabs}[3]{\termc{\Lambda\tbind{#1}{#2}.{#3}}} 
\newcommand{\etapp}[2]{\termc{{#1}[\typec{#2}]}}
\newcommand{\erecord}[2]{\termc{\{\overline{{#1}={#2}}\}}} 
\newcommand{\elet}[4]{\termc{\letk\,\{\overline{{#1}={#2}}\}={#3}\,\ink\,{#4}}}
\newcommand{\ematch}[2]{\termc{\matchk\,{#1}\,\withk\,{#2}}}
\newcommand{\evariant}[3]{\termc{\langle{#1}={#2}\rangle\,\ask\,\typec{#3}}}
\newcommand{\ecase}[2]{\termc{\casek\,{#1}\,\ofk\,{#2}}}
\newcommand{\epair}[2]{\termc{\{{#1},{#2}\}}}
\newcommand{\Endl}{\keyword{End}}
\newcommand{\Duall}{\keyword{Dual}}
\newcommand{\FM}[1]{F^{#1}}
\newcommand{\FMOmega}[1]{F^{#1}_\omega}
\newcommand{\FMu}{\FM{\mu}}
\newcommand{\FMuDot}{\FM{\mu\cdot}}
\newcommand{\FMuSemi}{\FM{\mu;}}
\newcommand{\FMuOmega}{\FMOmega{\mu}}
\newcommand{\FMuDotOmega}{\FMOmega{\mu\cdot}}
\newcommand{\FMuSemiOmega}{\FMOmega{\mu;}}
\newcommand{\FMuAstOmega}{\FMOmega{\mu_\ast}}
\newcommand{\FMuAstDotOmega}{\FMOmega{\mu_\ast\cdot}}
\newcommand{\FMuAstSemiOmega}{\FMOmega{\mu_\ast;}}
\newcommand{\emptyCtx}{\cdot}
\newcommand{\PAR}{\mid}
\newcommand{\NU}[3]{\termc{(\nu{#1}{#2}){#3}}}
\newcommand{\thread}[1]{\termc{\langle{#1}\rangle}}
\newcommand{\subs}[2]{\blk[{\typec{#1}}\blk/{\typec{#2}}\blk]} 
\newcommand{\vsubs}[2]{\blk[\termc{#1}\blk/\termc{#2}\blk]} 
\newcommand{\nsubs}[2]{\nsymc{#1}\blk/\nsymc{#2}}
\newcommand{\judgementlabel}[1]{\mathsf{#1}} 
\newcommand{\judgement}[2]{{#1} \: \judgementlabel{#2}}
\newcommand{\judgementrel}[3]{{#1} \; {#2} \; {#3}}
\newcommand{\judgementrelctxlab}[4]{{#1} \vdash \judgementrel{#2}{#3}{#4}}
\newcommand{\istype}[3]{\judgementrelctxlab{#1}{\typec{#2}}{:}{\kindc{#3}}}
\newcommand{\prekind}[3]{{#1} \vdash_{\keyword{pre}}\judgementrel{\typec{#2}}{:}{\kindc{#3}}}
\newcommand{\isterm}[4]{{#1}\mid{#2}\vdash\termc{#3}\colon\typec{#4}}
\newcommand{\termsynth}[5]{{#1}\mid{#2}\vdash\termc{#3}\Rightarrow\typec{#4}\mid{#5}}
\newcommand{\termagainst}[5]{{#1}\mid{#2}\vdash\termc{#3}\Leftarrow\typec{#4}\mid{#5}}
\newcommand{\isproc}[2]{{#1}\vdash{\termc{#2}}}
\newcommand{\isEquiv}[3]{\judgementrel{\typec{#1}}{#2}{\typec{#3}}}
\DeclareDocumentCommand{\isbisim} { m m } { \isEquiv{#1}\sim{#2} }
\DeclareDocumentCommand{\isbisimg} { m m } { \judgementrel{\nsymc{#1}}{\gequiv}{\nsymc{#2}} }
\newcommand{\iseqt}[2]{\judgementrel{\typec{#1}}{\Eq}{\typec{#2}}} 
\newcommand{\isscong}[2]{\judgementrel{\termc{#1}}{\equiv}{\termc{#2}}}
\newcommand{\ltsarrow}[1]{\stackrel{#1}{\longrightarrow}}
\newcommand{\betaarrow}{\longrightarrow}
\newcommand{\muarrow}{\longrightarrow_\mu}
\newcommand{\lseqarrow}{\longrightarrow_{\beta;D}}
\newcommand{\gltsred}[3]{\nsymc{#1}\ltsarrow{\tsymc{#2}}\nsymc{#3}}
\newcommand{\ltsred}[3]{\typec{#1}\ltsarrow{\tsymc{#2}}\typec{#3}}
\newcommand{\betared}[2]{\typec{#1}\betaarrow\typec{#2}} 
\newcommand{\betareds}[2]{\typec{#1}\betaarrow^*\typec{#2}} 
\newcommand{\mured}[2]{\typec{#1}\muarrow\typec{#2}} 
\newcommand{\lseqred}[2]{\typec{#1}\lseqarrow\typec{#2}} 
\newcommand{\lambdared}[2]{\typec{#1}\betaarrow\typec{#2}} 
\newcommand{\normalred}[2]{\typec{#1}\Downarrow\typec{#2}} 
\newcommand{\munormalred}[2]{\typec{#1}\Downarrow_\mu\typec{#2}} 
\newcommand{\lseqnormalred}[2]{\typec{#1}\Downarrow_{\beta;D}\typec{#2}} 
\newcommand{\expred}[2]{\termc{#1}\rightarrow\termc{#2}} 
\newcommand{\procred}[2]{\expred{#1}{#2}} 
\newcommand{\iswhnf}[1]{\judgement{\typec{#1}}{whnf}} 
\newcommand{\isnormalised}[1]{\judgement{\typec{#1}}{norm}} 
\newcommand{\vv}[1]{\marginpar{\textcolor{blue}{#1}}}
\renewcommand{\vv}[1]{\marginpar{\textcolor{blue}{#1}}}
\newcommand{\ie}{i.e.,\xspace} 
\newcommand{\eg}{e.g.\xspace}  
\newcommand{\st}{s.t.\xspace}  
\newcommand{\etal}{et al.\xspace} 
\newcommand{\cf}{cf.~} 
\lstdefinestyle{eclipse}{
  breaklines=true,
  basicstyle=\sffamily\Small,
  emphstyle=\color{RoyalBlue}\bfseries,
  keywordstyle=\color{RedViolet}\bfseries,
  showstringspaces=false,
}
\newcommand{\construlename}{Const}
\newcommand{\tabsrulename}{TAbs}
\newcommand{\tapprulename}{TApp}
\newcommand{\varrulename}{Var}
\newcommand{\typingrulename}[1]{T-{#1}\xspace}
\newcommand{\tconstr}{\typingrulename{Const}}
\newcommand{\tvarr}{\typingrulename{Var}}
\newcommand{\tabsr}{\typingrulename{Abs}}
\newcommand{\trecr}{\typingrulename{Rec}}
\newcommand{\tappr}{\typingrulename{App}}
\newcommand{\ttabsr}{\typingrulename{TAbs}}
\newcommand{\ttappr}{\typingrulename{TApp}}
\newcommand{\trecordr}{\typingrulename{Record}}
\newcommand{\tprojr}{\typingrulename{Proj}}
\newcommand{\tcaser}{\typingrulename{Case}}
\newcommand{\tmatchr}{\typingrulename{Match}}
\newcommand{\teqr}{\typingrulename{Eq}}
\newcommand{\tweakr}{\typingrulename{Weakening}}
\newcommand{\tcontrr}{\typingrulename{Contraction}}
\newcommand{\tderr}{\typingrulename{Dereliction}}
\newcommand{\rulename}[1]{\text{\small\sc #1}\xspace}
\newcommand{\kindrulename}[1]{\rulename{K-{#1}}}
\newcommand{\kconst}{\kindrulename{\construlename}}
\newcommand{\kvar}{\kindrulename{\varrulename}}
\newcommand{\ktabs}{\kindrulename{\tabsrulename}}
\newcommand{\ktapp}{\kindrulename{\tapprulename}}
\newcommand{\redrulename}[1]{\rulename{R-{#1}}}
\newcommand{\rmu}{\redrulename{$\mu$}}
\newcommand{\rtappl}{\redrulename{TAppL}}
\newcommand{\rseqone}{\redrulename{Seq1}}
\newcommand{\rseqtwo}{\redrulename{Seq2}}
\newcommand{\rassoc}{\redrulename{Assoc}}
\newcommand{\rbeta}{\redrulename{$\beta$}}
\newcommand{\rdin}{\redrulename{D$?$}}
\newcommand{\rdout}{\redrulename{D$!$}}
\newcommand{\rddvar}{\redrulename{DDVar}}
\newcommand{\rdextc}{\redrulename{D$\&$}}
\newcommand{\rdintc}{\redrulename{D$\oplus$}}
\newcommand{\rdsemi}{\redrulename{D$;$}}
\newcommand{\rdskip}{\redrulename{DSkip}}
\newcommand{\rdend}{\redrulename{DEnd}}
\newcommand{\rdctx}{\redrulename{DCtx}}
\newcommand{\whnfrulename}[1]{\rulename{W-{#1}}}
\newcommand{\wconszero}{\whnfrulename{Const0}}
\newcommand{\wconsone}{\whnfrulename{Const1}}
\newcommand{\wvar}{\whnfrulename{Var}}
\newcommand{\wabs}{\whnfrulename{Abs}}
\newcommand{\wseqone}{\whnfrulename{Seq1}}
\newcommand{\wseqtwo}{\whnfrulename{Seq2}}
\newcommand{\wdual}{\whnfrulename{Dual}}
\newcommand{\normalisationrulename}[1]{\rulename{N-{#1}}}
\newcommand{\premspace}{\quad\;}
\begin{document}

\title{\texorpdfstring{System $F^\mu_\omega$ with Context-free Session Types%
\thanks{Support for this research was provided by the Fundação para a Ciência e a
Tecnologia through project SafeSessions, ref.\ PTDC/CCI-COM/6453/2020, and by the
LASIGE Research Unit, ref.\ UIDB/00408/2020 and ref.\ UIDP/00408/2020.
}}
{System F-mu-omega with Context-free Session Types}
}

\author{Diana Costa\orcidID{0000-0002-8312-429X} \and
Andreia Mordido\orcidID{0000-0002-1547-0692} \and
Diogo Poças\orcidID{0000-0002-5474-3614} \and
Vasco T. Vasconcelos\orcidID{0000-0002-9539-8861}}
\authorrunning{D. Costa et al.}
\institute{LASIGE, Faculdade de Ciências, Universidade de Lisboa, Portugal
\email{\{dfdcosta,afmordido,dmpocas,vmvasconcelos\}@ciencias.ulisboa.pt}}

\maketitle

\begin{abstract}
  We study increasingly expressive type systems, from $F^\mu$---an extension of
  the polymorphic lambda calculus with equirecursive types---to
  $F^{\mu;}_\omega$---the higher-order polymorphic lambda calculus with
  equirecursive types and context-free session types.
  Type equivalence is given by a standard bisimulation defined over a novel
  labelled transition system for types.
  Our system subsumes the contractive fragment of $F^\mu_\omega$ as studied in
  the literature.
  %
  Decidability results for type equivalence of the various type languages are
  obtained from the translation of types into objects of an appropriate
  computational model: finite-state automata, simple grammars and deterministic
  pushdown automata.
  We show that type equivalence is decidable for a significant fragment of the
  type language.
  We further propose a message-passing, concurrent functional language equipped
  with the expressive type language and show that it enjoys preservation and
  absence of runtime errors for typable processes.

\keywords{System F, Higher-order kinds, Context-free session types}
\end{abstract}



\section{Introduction}

Extensions of the $\lambda$-calculus to include increasingly sophisticated type
structures have been extensively studied and have led to systems whose
importance is widely recognized: System $F$~\cite{DBLP:conf/programm/Reynolds74}, System $F^{\mu}$~\cite{DBLP:conf/icfp/GauthierP04}, System
$F_{\omega}$~\cite{girard1972interpretation}, System $\FMuOmega$~\cite{DBLP:conf/popl/CaiGO16}. Ideally, we would like to combine a
\emph{wishlist} of type structures and get a super-powerful system with vast
expressiveness. However, the expressiveness of types is naturally limited by the
universe where they are supposed to live: programming languages. Expressive type
systems pose challenges to compilers that other (less expressive) types do not
even reveal; one such example is type equivalence checking.

System $F$ can be enriched with different type constructors for specifying
communication protocols. We analyse the impact of combinations of such
constructors on the type equivalence problem.
In order to do so, we extend System $F$ with session
types~\cite{DBLP:conf/concur/Honda93,DBLP:conf/esop/HondaVK98,DBLP:conf/parle/TakeuchiHK94}.
Session types
provide for detailed protocol specifications in the form of types.
%
%
%
Traditional recursive session types are limited to tail recursion, thus failing
to capture all protocols whose traces cannot be characterized by regular
languages. Context-free session types overcome this limitation by extending
types with a notion of sequential composition, $\semit T U$
\cite{Almeida22,DBLP:conf/icfp/ThiemannV16}. The set of types together with the
binary operation $\typec ;$ constitutes a monoid, for which a new type, $\Skip$,
acts as the neutral element and $\End$ acts as an absorbing element.

The regular recursive type
$\tmuinfix{\alpha}{\skind}{\extchoice\records{\donel\colon\End,\morel\colon?\Int;\alpha}}$
describes an integer \emph{stream} as seen from the point of view of the
consumer. It offers a choice between $\donel$---after which the channel must
be closed, as witnessed by type $\End$---and $\morel$---after which an integer
value must be received, followed by the rest of the stream.  Types are
categorised by \emph{kinds}, so that we know that the recursion variable
$\typec\alpha$ is of kind session---denoted by $\skind$---and, thus, can be used
with semicolon.
Instead, we might want to write a type with a more \emph{context-free} flavour.
The type
$\tmuinfix{\alpha}{\skind}{\&\{\leafl: \Skip, \nodel:
  \alpha;?\Int;\alpha\}}\typec;\End$ describes a protocol for the type-safe
streaming of integer \emph{trees} on channels. The continuation to the $\leafl$
option is $\Skip$, where no communication occurs but the channel is still open
for further composition. The continuation to the $\nodel$ choice receives a left
subtree, an integer at the root and a right subtree. In either case, once the
whole tree is received, the channel must be closed, as witnessed by the final
$\End$.
Beyond first-order context-free session types (where only basic types are
exchanged)~\cite{Almeida22,DBLP:conf/icfp/ThiemannV16} we may be interested in
higher-order session types capable of exchanging values of complex
types~\cite{DBLP:journals/corr/abs-2203-12877}.
%
%
%
A goal of this paper is the integration of higher-order context-free session
types into system $\FMuOmega$. We want to be able to abstract the type that
is received on a tree channel, which is now possible by writing $\tabs \alpha \tkind
{\tmuinfix{\beta}{\skind}{\&\{\leafl: \Skip, \nodel: \beta;?\alpha;\beta\}}}\typec;\End$,
where $\tkind$ is the kind of functional types.

A form of abstraction over session types was formerly proposed by Das
\etal~\cite{DBLP:conf/esop/DasDMP21,DBLP:journals/toplas/DasDMP22} via (nested)
parametric polymorphism. 
In the notation of Das \etal, we can write a type equation 
$\iseqt{\callt\streamt\alpha}{\extchoice\records{\donel\colon\End,\morel\colon?\alpha;\callt\streamt\alpha}}$ 
that abstracts the type being received on a stream channel. 
To write the same type using abstraction, we can think of $\streamt$ as a function of its parameter $\typec\alpha$, 
$\iseqt{\streamt}{\tabs\alpha\tkind{\extchoice\records{\donel\colon\End,\morel\colon?\alpha;\tapp\streamt\alpha}}}$; 
we can then rewrite $\streamt$ using the $\typec\mu$-operator,
$\streamt = \tabs\alpha\tkind{(\tmuinfix{\beta}{\skind}{\extchoice\records{\donel\colon\End,\morel\colon\IN\alpha\beta}})}.$
Das \etal proved that parametrized type definitions
over regular session types are strictly more expressive than context-free
session types. To some extent, this analogy guides our approach: if adding
abstraction (via parametric polymorphism) to regular types leads to nested types, what exactly does it
mean to add abstraction (via a type-level $\lambda$-operator) to context-free types? Throughout this paper we analyse
several increments to System $F^\mu$ that culminate in adding
$\lambda$-abstraction to context-free session types.

One of our focuses is necessarily the analysis of the type equivalence problem.
The uncertainty about the decidability of this problem over recursive parametric
types goes back to the
1970s~\cite{DBLP:journals/csur/CardelliW85,DBLP:conf/popl/Solomon78}. Although the
type equivalence problem for parametric (nested) session types and context-free
session types is decidable, that for the combination of abstractions over
context-free types may no longer be. In fact, this analysis constitutes an
interesting journey towards a better understanding of the role of higher-order
polymorphic recursion in presence of sequential composition, as well as the
gains (and losses) resulting from combining abstraction with arbitrary (rather
than tail) recursion.

Ultimately, decidability is not a sufficiently valuable measure regarding a type
system's \emph{practicality}. We look for type systems that may be incorporated
into compilers. For that reason, we are interested in algorithms for type
equivalence checking. Equivalence in $\FMuOmega$ alone is already at least as
expressive as deterministic pushdown automata. If we restrict recursion to the
monomorphic case (requiring recursion variables to denote proper types, that is
of kind $\skind$ or $\tkind$, collectively denoted by $\kast$)
we lower the complexity of type equivalence to that of equivalence for
finite-state automata. The extension with context-free session types is slightly
more complex. In order to obtain ``good'' algorithms, we restrict the recursion
to the monomorphic case,
arriving at classes $\FMuAstOmega, \FMuAstSemiOmega$. Now the type equality
problem for $\FMuAstSemiOmega$ translates to the equivalence problem for simple
grammars, which is still decidable \cite{DBLP:conf/tacas/AlmeidaMV20,DBLP:conf/fossacs/GayPV22}.
Since $\FMuAstSemiOmega$ subsumes $\FMuAstOmega$, our proof of the
decidability of type equivalence serves as an alternative to that of Cai
\etal~\cite{DBLP:conf/popl/CaiGO16} (restricted to contractive types).

Higher-order polymorphism allows for the definition of type operators and the
internalisation of various (session-type) constructs that would otherwise be
offered as built-in constructors. In this way, we are able to internalise basic
session-type constructors such as sequential composition $\typec;$ and the
$\Dual$ type operator (which reverses the direction of communication between
parties).
Duality is often treated as an external macro. Gay
\etal\cite{DBLP:journals/corr/abs-2004-01322} explore different ways of handling
the dual operator, all in a monomorphic setting. In the presence of polymorphism
the dual operator cannot be fully eliminated without introducing co-variables.
Internalisation offers a much cleaner solution.



Due to the presence of sequential composition, regular trees are \emph{not} a
powerful enough model for representing types (\lstinline|type TreeC a| in
\cref{sec:motivation} is an example).
The main technical challenge
when combining System $\FMuOmega$ and context-free session types is making sure
that the resulting model can still be represented by simple grammars, so that
type equivalence may be decided by a practical algorithm.
The difficulties arise with renaming bound variables.
For infinite types, both renaming with fresh variables and using de Bruijn
indices may create an infinite number of distinct variables, which makes the
construction of a simple grammar simply impossible. For example, take the type
$\tabs\alpha\tkind{\tmuinfix\gamma\tkind{\tabs\beta\tkind{\function\alpha\gamma}}}$,
which stands for the infinite type
$\tabs\alpha\tkind{\tabs\beta\tkind{\function\alpha{\tabs\beta\tkind{\function\alpha{\tabs\beta\tkind..}}}}}$
Renaming this type using a fresh variable at each step would result in a type of
the form
$\tabs{\upsilon_1}\tkind{\tabs{\upsilon_2}\tkind{\function{\upsilon_1}{\tabs{\upsilon_3}\tkind{\function{\upsilon_1}{\tabs{\upsilon_4}\tkind..}}}}}$,
requiring infinitely many variables. Similarly, de Bruijn
indices~\cite{debruijn:1972:lambda} yield a type of the form
$\typec{\lambda_\tkind\lambda_\tkind\function 1{\lambda_\tkind\function
    2{\lambda_\tkind\function 3\ldots}}}$ that requires an infinite number of
natural indices.
We thus introduce \emph{minimal renaming} that uses the least amount of variable
names as possible (\cf Gauthier and Pottier~\cite{DBLP:conf/icfp/GauthierP04}).
This ensures that only finitely many terminal symbols are necessary, allowing
for translating types into simple grammars.




Type languages live in term languages and we propose a term language to consume
$\FMuSemiOmega$ types. Based on Almeida
\etal~\cite{DBLP:journals/corr/abs-2106-06658}, we introduce a message-passing
concurrent programming language. Type checking is decidable if type equivalence
is, and it is, in particular, for $\FMuAstSemiOmega$. 

The main contributions of this paper are as follows.
\begin{itemize}
\item The integration of (higher-order) context-free session types into system
  $\FMuOmega$, dubbed $\FMuSemiOmega$.
\item A semantic definition of type equivalence via a labelled transition system.
\item The identification of a suitable fragment of System $\FMuSemiOmega$ for which type equivalence is reduced to the bisimilarity of simple grammars.
\item A proof that type equivalence on the full System $\FMuSemiOmega$ is at least as hard as bisimilarity of deterministic pushdown automata.
\item The first internalisation of the $\Dual$ type operator in a type language.
\item A term language to consume $\FMuSemiOmega$ types and an accompanying metatheory.
\end{itemize}

The type system presented in the paper combines three constructions: sequential composition of session types, higher-order kinds via type-level abstraction and application, and higher-order recursion. 
Prior to our work there is the system by Almeida \etal\cite{DBLP:conf/tacas/AlmeidaMV20} which incorporates sequential composition and (first-order) recursion, but no higher-order kinds. 
There is also the system by Cai \etal~\cite{DBLP:conf/popl/CaiGO16} which incorporates higher-order kinds and higher-order recursion, but no sequential composition. 
Our system is the first to incorporates all three constructions. 
Although some of the results are incremental and generalize results from the literature, the main technical challenge is understanding the border past which they don't hold anymore. 
For example, ``just'' including higher-order kinds into the system by Almeida \etal does not work, since we need to pay close attention to variable names, making sure that type equivalence is invariant with respect to alpha-conversion (renaming of bound variables). 
This required us to define a novel notion of renaming, inspired by Gauthier and Pottier~\cite{DBLP:conf/icfp/GauthierP04}. 
Similarly, ``just'' including sequential composition into the system of Cai \etal does not work, since finite-state automata (or regular trees) are not enough to capture the expressive power of the new type system, \emph{even} when restricted to first-order recursion. 
This required us to look at the more expressive framework of simple grammars, and introduce a translation from types to words of a simple grammar.

The rest of the paper is organised as follows. The next section motivates the
type language and introduces the term language with an example. \Cref{sec:types}
introduces System $\FMuSemiOmega$, \cref{sec:type-equiv} discusses type
equivalence and \cref{sec:equivalence-decidable} shows that type equivalence is
decidable for a fragment of the type language. \Cref{sec:processes}
presents the term language and its metatheory. \Cref{sec:related} discusses
related work and \cref{sec:conclusion} concludes the paper with pointers for
future work. 



\section{Motivation}
\label{sec:motivation}


\begin{figure}[t!]
  \begin{align*}
    \typec{\varrecs{}} \grmeq{} \typec{\records{}} \grmor \typec{\variants{}}
    &&
    \typec\sharp \grmeq{} \typec? \grmor{} \typec!
    &&
    \choice \grmeq{} \extchoice \grmor \intchoice
  \end{align*}
  \begin{align*}
    \typec T \grmeq & \function TT
    \grmor \tbananas{l_i}{T_i} 
    \grmor  \foralltinfix\alpha\kind T 
    \grmor \tmuinfix{\alpha}{\kind}{T} 
    \grmor \typec{\alpha}
    && (\FMu)
    \\
    \typec T \grmeq & (\FMu)
    \grmor \MSG TT
    \grmor \tchoice{l_i}{T_i}
    \grmor \End && (\FMuDot)
    \\
    \typec T \grmeq & (\FMu)
    \grmor{} \MSGn T
    \grmor{} \tchoice{l_i}{T_i}
    \grmor \End
    \grmor{} \semit TT 
    \grmor \Skip
    && (\FMuSemi)
    \\
    \typec T \grmeq & (\FM{M})
    \grmor \tabs\alpha\kind T
    \grmor \tapp TT && (\FMOmega{M}),\quad M \grmeq \mu\ , \mu\cdot\ , \mu;
  \end{align*}
  \caption{Six $F$-systems.}
  \label{fig:F-systems}
\end{figure}

Our goal is to study type systems that combine equirecursion, higher-order
polymorphism, and higher-order context-free session types and their
incorporation in programming languages.

\paragraph{Extensions of $F$.}
\Cref{fig:F-systems} motivates the construction by proposing six different type
syntaxes, culminating with $\FMuSemiOmega$.
%
The initial system, $\FMu$, includes well-known basic type
operators~\cite{DBLP:books/daglib/0005958}: functions $\function TU$, records
$\trecord{l_i}{T_i}$ and variants $\tvariant{l_i}{T_i}$. Type $\Unit$ is short
for $\typec{\{\}}$, the empty record; we can imagine that $\Unit$ stands in place of an arbitrary scalar type such as $\Int$ and
$\Bool$. 
We also include variable names $\typec\alpha$, type quantification
$\foralltinfix\alpha\kind T$ and recursion $\tmuinfix{\alpha}{\kind}{T}$.
%
%
In order to control type formation, all variable bindings must be kinded with
some kind $\kind$, even if for the initial system, $\FMu$, we only use the
functional kind $\tkind$.

We then build on $\FMu$ by considering (regular, tail recursive) session types;
we represent the resulting system by $\FMuDot$.
For example $\IN\Int{\OUT\Bool\End}$ is a type for a channel endpoint that
receives an integer, sends a boolean, and terminates.
At this point we introduce a kind $\skind$ of session types to
restrict the ways in which we can combine session and functional types together.
For example, a well-formed type $\IN TU$ is of kind $\skind$ and requires $\UT$
to be also of kind $\skind$ (whereas $\TT$ can be of kind $\kast$, that is $\skind$
or $\tkind$).
An example of an infinite session type is
$\tmuinfix{\alpha}{\skind}{\OUT\Int\alpha}$ that endlessly outputs integer
values. For a more elaborate example consider the type 
$\intstreamt = \tmuinfix{\alpha}{\skind}{\extchoice\records{\donel\colon\End,\morel\colon\IN\Int\alpha}}$
that specifies a channel endpoint for receiving a (finite or infinite) stream of
integer values.
Communication ends after choice $\donel$ is selected.

The next step of our construction takes us to context-free session types; the
resulting system is denoted by $\FMuSemi$. We introduce a new construct for
sequential composition $\semit TU$, and a new type $\Skip$, acting as the
neutral element of sequential composition~\cite{DBLP:conf/icfp/ThiemannV16}. The
message constructors are now unary ($\INn T$ and $\OUTn T$) rather than binary.
In System $F^{\mu;}$ we distinguish between the traditional $\End$ type and the
$\Skip$ type. These types have different behaviours: $\End$ terminates a
channel, while $\Skip$ allows for further communication. Type equality is more
subtle for context-free session types, because of the monoidal semantics of
sequential composition. It is derivable from the following axioms:
\begin{equation}\label{eq:axiomssemi}\begin{aligned}
  \typec{\semit\Skip T} &\sim \typec{T} && \text{Neutral element}
  \\
  \typec{\semit\End T} &\sim \typec{\End} && \text{Absorbing element}
  \\
  \typec{\semit{(\semit TU)}{V}} &\sim {\semit{T}{(\semit UV)}} && \text{Associativity}
  \\
  \typec{\semit{\tchoice{l_i}{T_i}} U} &\sim {\tchoice{l_i}{\semit{T_i}U}}
   && \text{Distributivity}
\end{aligned}\end{equation}


Although the syntax of $\FMuDot$ is not formally included in the syntax of
$\FMuSemi$, we can embed recursive session types into context-free session types
by mapping $\MSG TU$ into $\semit{\MSGn T} U$. It is well-known that
context-free session types allow for higher computational expressivity: while
$\FMu$ and $\FMuDot$ can be represented via finite-state automata, $\FMuSemi$
can only be represented with simple grammars
\cite{DBLP:conf/tacas/AlmeidaMV20,DBLP:conf/fossacs/GayPV22}.

To finalise our construction, we include type abstraction $\tabs\alpha\kind T$
and type application $\TT\ \UT$. Again, type abstraction binds a variable which
must be kinded. Kinds can now be of higher-order $\karrow{\kind}{\kind'}$. For
each of the three systems $\FMu$, $\FMuDot$, $\FMuSemi$ we arrive at a
higher-order version, respectively $\FMuOmega$, $\FMuDotOmega$, $\FMuSemiOmega$
(all of which we represent as $\FMOmega{M}$). In System $\FMuDotOmega$, for
example, we can specify channels for receiving (finite or infinite) sequences of
values of arbitrary (but fixed) types,
\begin{equation*}
\streamt = \tabs\alpha\tkind{(\tmuinfix{\beta}{\skind}{\extchoice\records{\donel\colon\End,\morel\colon\IN\alpha\beta}})}
\end{equation*}
where $\typec \alpha$ can be instantiated with the desired type; in particular,
$\tapp \streamt \Int$ would be equivalent to the aforementioned $\intstreamt$.

%

It turns out that the expressive power of general higher-order systems
$\FMOmega{M}$ is too large for practical purposes. Even the simplest case
$\FMuOmega$ is at least as expressive as deterministic pushdown automata (or
equivalently, first-order grammars), for which known equivalence algorithms are
notoriously impractical. By impractical we mean that, although there exists a
proof of decidability (due to Sénizergues~\cite{DBLP:conf/icalp/Senizergues97},
later improved by Stirling and
Jancar~\cite{DBLP:journals/corr/abs-1010-4760,DBLP:conf/icalp/Stirling02}), the
underlying algorithm is rather complex. To the best of our knowledge, there
is no practical implementation of an algorithm to decide the equivalence of
deterministic pushdown automata. This is essentially due to polymorphic
recursion, which can be encoded by a higher-order $\mu$-operator (we provide an example at the end of \cref{sec:equivalence-decidable}). Therefore, it makes sense to restrict the kind $\kind$ of the
recursion operator $\tmuinfix\alpha\kind T$. We use the notation $\mu_\ast$ to
mean the subclass of types written using only $\kast$-kinded recursion, \ie
$\tmuinfix\alpha\tkind T$ or $\tmuinfix\alpha\skind T$.

\begin{figure}[t!]
  \centering
\begin{tikzpicture}[thick, node distance = 1.2cm, every node/.style = {font = \normalsize}]
\node (fmu) {$\FMu$};
\node[below of = fmu] (fmudot) {$\FMuDot$};
\node[below of = fmudot] (fmusemi) {$\FMuSemi$};
\node[right of = fmu, xshift = 0.4cm] (fmuomega0) {$\FMuAstOmega$};
\node[below of = fmuomega0] (fmudotomega0) {$\FMuAstDotOmega$};
\node[below of = fmudotomega0] (fmusemiomega0) {$\FMuAstSemiOmega$};
\node[right of = fmuomega0, xshift = 0.4cm] (fmuomega) {$\FMuOmega$};
\node[below of = fmuomega] (fmudotomega) {$\FMuDotOmega$};
\node[below of = fmudotomega] (fmusemiomega) {$\FMuSemiOmega$};
\node[above right of = fmuomega0, xshift = -0.1cm, yshift = -0.4cm] (blue0) {};
\node[below of = blue0] (blue1) {};
\node[below of = blue1, yshift = .25cm] (blue2) {};
\node[left of = blue2, xshift = -0.3cm] (blue3) {};
\node[left of = blue3, xshift = -1cm] (blue4) {};
\draw[blue, line width = 0.1cm, rounded corners] (blue0.center) -- (blue1.center) -- (blue2.center) -- (blue3.center) -- (blue4.center);
\node[blue, above of = blue4, yshift = -0.7cm, align = right, font = \small] (bluel) {\textbf{finite-state}\\\textbf{automata}};
\node[below right of = fmusemiomega0, xshift = -0.1cm, yshift = 0.1cm] (red0) {};
\node[above of = red0, yshift = 0.1cm] (red1) {};
\node[left of = red1, xshift = -0.3cm] (red2) {};
\node[left of = red2, xshift = -1cm] (red3) {};
\draw[red, line width = 0.1cm, rounded corners] (red0.center) -- (red1.center) -- (red2.center) -- (red3.center);
\node[red, below of = red3, xshift = .2cm, yshift = 0.7cm, align = right, font = \small] (redl) {\textbf{simple}\\\textbf{grammars}};
\node[above right of = fmuomega0, xshift = 0.05cm, yshift = -0.4cm] (green0) {};
\node[below of = green0] (green1) {};
\node[below of = green1] (green2) {};
\node[below of = green2, yshift = .1cm] (green3) {};
\node[right of = green3, xshift = .7cm] (green4) {};
\node[OliveGreen, above right of = green4, yshift = -0.4cm, align = left, font = \small] (greenl) {$\geq$ \textbf{deterministic}\\\textbf{pushdown} \textbf{automata}};
\draw[OliveGreen, line width = 0.1cm, rounded corners] (green0.center) -- (green1.center) -- (green2.center) -- (green3.center) -- (green4.center);
\draw[->] (fmu) edge (fmudot);
\draw[->] (fmudot) edge (fmusemi);
\draw[->] (fmu) edge (fmuomega0);
\draw[->] (fmudot) edge (fmudotomega0);
\draw[->] (fmusemi) edge (fmusemiomega0);
\draw[->] (fmuomega0) edge (fmudotomega0);
\draw[->] (fmudotomega0) edge (fmusemiomega0);
\draw[->] (fmuomega0) edge (fmuomega);
\draw[->] (fmudotomega0) edge (fmudotomega);
\draw[->] (fmusemiomega0) edge (fmusemiomega);
\draw[->] (fmuomega) edge (fmudotomega);
\draw[->] (fmudotomega) edge (fmusemiomega);
\end{tikzpicture}
\caption{Relation between the main classes of types in this paper (arrows denote strict inclusions).}
\label{fig:fmudiagram}
\end{figure}
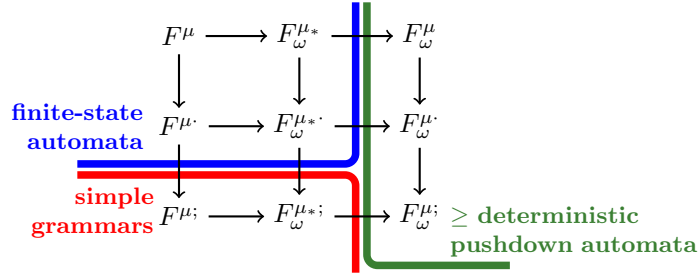


\Cref{fig:fmudiagram} summarizes the main relations between the classes of types
in our paper. Firstly, we obtain a lattice where the expressive power increases
as we travel down (from functional to session to context-free session types) and
right (from simple polymorphism to higher-order polymorphism with monomorphic recursion
to arbitrary recursion). Four of the classes can be represented
using finite-state automata (up to $\FMuAstDotOmega$). By including sequential
composition 
($\FMuSemi$ and $\FMuAstSemiOmega$) we are still able to represent types using
simple grammars. Once we allow for arbitrary recursion, the expressiveness of
our model requires the computational power of deterministic pushdown automata.


\paragraph{Programming with $\FMuSemiOmega$.}

We now turn our attention to the term language, a message passing, concurrent
functional language, equipped with context-free session types.
Start with a stream of values of type \lstinline|a|. Such a stream, when seen
from the side of the reader, offers two choices: \lstinline|Done| and
\lstinline|More|. In the former case the interaction is over; in the latter the
reader reads a value of type \lstinline|a|, as in \lstinline|?a|, and recurses.
This is the stream type we have seen before only that, rather than closing the
channel endpoint (with type \lstinline|End|), it terminates with type
\lstinline|Skip|, so that it may be sequentially composed with other types. In
this informal introduction to the term language we omit the kinds of type
variables.
\begin{lstlisting}
type Stream a = &{Done: Skip, More: ?a ; Stream a}
\end{lstlisting}

A fold channel, as seen from the side of the folder, is a type of the following
form. We assume that application binds tighter than semicolon, that is, type
\lstinline|Stream a ; !b ; End| is interpreted as \lstinline|(Stream a)  ; !b ; End|.
\begin{lstlisting}
type Fold a b = ?(b -> a -> b) ; ?b ; Stream a ; !b ; End
\end{lstlisting}
Consumers of this type first receive the folding function, then the starting
element, then the elements to fold in the form of a stream, and finally output
the result of the fold. The type terminates with \lstinline|End| for we do not
expect type \lstinline|Fold| to be further composed. Compare \lstinline|Fold|
with the type for a conventional functional left fold:
\lstinline|(b -> a -> b) -> b -> List a -> b|.

We now develop a function that consumes a \lstinline|Fold| channel. Syntax
\lstinline@x |> f@ is for the inverse function application with low priority,
that is \lstinline@x |> f |> g = g (f x)@. Recall that \lstinline|Unit| is an
alternative notation for the empty record type, \lstinline|{}|.
\begin{lstlisting}
foldServer : foralla.forallb. Fold a b -> Unit
foldServer c = let (f, c) = receive c in
               let (e, c) = receive c in foldS f e c

foldS : foralla.forallb. (b -> a -> b) -> b -> Stream a;!b;End -> Unit
foldS f e c = match c with
  { Done c -> c |> send e |> close
  , More c -> let (x, c) = receive c in foldS f (f e x) c
  }
\end{lstlisting}
Function \lstinline|foldServer| consumes the initial part of the channel and
passes the rest of the channel to the recursive function \lstinline|foldS| that
consumes the whole stream while accumulating the fold value. In the end, when
branch \lstinline|Done| is selected, the fold value is written on the channel
and the channel closed.
In general, the channel operators---\lstinline|receive|, \lstinline|send|,
\lstinline|select|---return the same channel in the form of a new identifier. It
is customary to reuse the identifier name---\lstinline|c| in the example, as in
\lstinline|let (f, c) = receive c|---since it denotes the same channel. Syntax
\lstinline@c|>...@ hides the continuation channel. The case for the external
choice---\lstinline|match|---also returns the continuation (in each branch) so
that interaction on the channel endpoint may proceed.





We may now write different clients for the \lstinline|foldServer|. Examples
include a client that generates a stream from a pair of integer values (denoting
an interval); another that generates the stream from a list of values; and yet
another that generates the stream from a binary tree.
We propose a further client. Consider the type of a channel that exchanges trees
in a serialized format \cite{DBLP:conf/icfp/ThiemannV16}. Its polymorphic
version, as seen from the point of view of the reader, is as follows:
\begin{lstlisting}
type TreeChannel a = TreeC a ; End
type TreeC a = &{Leaf: Skip, Node: TreeC a;?a;TreeC a}
\end{lstlisting}

We transform trees as we read from tree channels into streams. Function
\lstinline|flatten| receives a tree channel and a stream channel (as seen from
the point of view of the writer, hence the \lstinline|Dual|) and returns the
unused part of the stream channel.
\begin{lstlisting}
flatten : foralla.forallc. TreeChannel a -> (Dual Stream a);c -> c
\end{lstlisting}

We are now in a position to write a client that checks whether all values in a
tree channel are positive.
\begin{lstlisting}
allPositive : TreeChannel Int -> Dual (Fold Int Bool) -> Bool
allPositive t c =
  let c = send (lambdax:Bool.lambday:Int. x && y > 0) c in
  let c = send True c in
  let c = flatten [Int] [?Bool;End] t c in
  let (x, c) = receive c in
  close c; x
\end{lstlisting}
The client sends a function and the starting value on the fold channel. Then, it
flattens the given tree \lstinline|t|, receives the folded value and closes the
channel.
Syntax \lstinline|flatten [Int] [?Bool;End]| is for term-level type application.
We mean to flatten a tree of \lstinline|Int| values on a stream channel whose
continuation is of type \lstinline|?Bool;End|. The continuation channel is bound
to \lstinline|c| so that we may further receive the fold value and thereupon
close the channel.
Syntax \lstinline|e1;e2| is for sequential composition and abbreviates
\lstinline|(lambdax:Unit.let {} = x in e2) e1| given that \lstinline|{}|, the
\lstinline|Unit| value, is linear and hence must be consumed.

Finally, a simple application creates a new \lstinline"TreeC" channel, passing
one end to a thread that produces a tree channel. It then creates a
\lstinline|Fold| channel, distributes one end to a thread \lstinline|foldServer|
and the other to function \lstinline|allPositive|. The \lstinline|fork|
primitive receives a suspended computation (a thunk, of the form
\lstinline|lambdax:Unit.e|) and creates a new thread that runs in parallel with
that from where the \lstinline|fork| was issued.
\begin{lstlisting}
system : Bool
system = let (tr, tw) = new TreeC Int in
  fork (lambda_:Unit. produce tw);
  let (fr, fw) = new Fold Int Bool in
  fork (lambda_:Unit. foldServer fr);
  allPositive tr fw
\end{lstlisting}





\section{Kinds and Types}
\label{sec:types}


\begin{figure}[!t]
	\begin{minipage}{.45\textwidth}
		\centering
		\begin{align*}
			\kast \grmeq& && \text{\small Kind of proper types}\\
			& \skind && \text{\small session}\\
			& \tkind && \text{\small functional}\\
			\kind \grmeq& &&\text{\small Kind}\\
			& \kast && \text{\small kind of proper types}\\
			& \karrow{\kind}{\kind} && \text{\small kind of type operators}\\
			\TT \grmeq& &&\text{\small Type}\\
			& \tiota && \text{\small type constant}\\
			& \talpha && \text{\small type variable}\\
			& \tabs \alpha\kind T && \text{\small type-level abstraction}\\
			& \tapp TT && \text{\small type-level application}
		\end{align*}
		\caption{The syntax of types.}
		\label{fig:syntax-types}
	\end{minipage}
	\begin{minipage}{0.45\textwidth}
		\centering
		  \begin{align*}
			\tiota \grmeq& &&&& \text{\small{Type constant}}\\
			&\Arrow && \karrow \kast {\karrow {\kast}{\tkind}} && \text{\small arrow}\\
			&\typec{\varrecs{\overline{l_i}}} &&\kindc{\overline{{\kast \Rightarrow}}\,\tkind} && \text{\small record, variant}\\
			&\tmu \kind && \karrow {(\karrow \kind \kind)} {\kind} &&
                                                                      \text{\small recursive type}\\
			&\tforall\kind && \karrow {(\karrow \kind \kast)} \tkind &&
                                                                        \text{\small universal type}\\
			&\Skip && \skind && \text{\small{skip}}\\
			&\End && \skind && \text{\small end}\\
			&\Sharp &&\karrow \kast \skind && \text{\small input, output}\\
			&\Semi && \karrow \skind {\karrow \skind \skind} && \text{\small seq.\ composition}\\
			&\typec{\odot{\{\overline{l_i}\}}} && \kindc{\overline{{\skind
						\Rightarrow}}\,\skind} && \text{\small choice operators}\\
			&\Dual && \karrow{\skind}{\skind} && \text{\small dual operator}
		\end{align*}
		\caption{Type constants and kinds.}
		\label{fig:constants}
	\end{minipage}
\end{figure}



This section introduces in detail System $\FMuSemiOmega$, an extension of
System $\FMuOmega$ incorporating higher-order context-free session types. 
The syntax of types is presented in~\cref{fig:syntax-types}.
A type is either a constant~$\typec{\iota}$
(as in \cref{fig:constants}), a type variable $\typec{\alpha}$, an abstraction
$\tlambda{\alpha}{\kind}{\TT}$ or an application $\tapp TU$. 
Besides incorporating the standard session type constructors as constants, system
$\FMuSemiOmega$ also includes $\Dual$ as a constant for a type operator mapping a
session type to its dual. Note also that $\foralltinfix{\alpha}{\kind}{T}$ is
syntactic sugar for $\typec{\forall_\kind (\tlambda{\alpha}{\kind}{T})}$.
Analogously, $\tmuinfix{\alpha}{\kind}{T}$ abbreviates
$\typec{\tmu\kind (\tlambda{\alpha}{\kind}{T})}$. This simplifies our analysis
as lambda abstraction becomes the only binding operator. 

A distinction between session and functional types is made resorting to
kinds $\skind$ and $\tkind$, respectively. These are the kinds of proper types,
$\kast$; we use the symbol $\kind$ to represent either the kind of a proper
type or that of a type operator, of the form $\karrow {\kind} {\kind'}$. A
kinding context $\Delta$ stores kinds for type variables using bindings of the
form $\tbind \alpha \kind$. Notation $\Delta+\tbind \alpha \kind$ denotes the
update of kinding context $\Delta$, defined as
$(\Delta,\tbind \alpha \kind)+\tbind \alpha {\kind'} = \Delta,\tbind
\alpha{\kind'}$ and $\Delta+\tbind \alpha \kind = \Delta,\tbind \alpha \kind$
when $\typec\alpha\not\in\Delta$.

\begin{figure}[t!]
  \declrel{Type renaming}{$\rename_S(\TT)$}
  \begin{align*}
    \rename_S(\typec\iota)&=\typec\iota
    \\
    \rename_S(\typec\alpha)&=\typec\alpha
    \\
    \rename_S(\tabs{\alpha}{\kind}{T}) &=
    \tabs{\upsilon}{\kind}{{\blk{\rename_S(}}T\subs\upsilon\alpha})
    \quad\text{where } \typec{\upsilon} = \first_S(\tabs{\alpha}{\kind}{T})
    \\
    \rename_S(\tapp{T}{U}) &= \rename_{S\cup\fv(\UT)}(\TT)\rename_S(\UT)
  \end{align*}
  \caption{Type renaming.}
  \label{fig:rename}
\end{figure}

To define type formation, we require a few notions. Firstly comes the
notion of \emph{renaming}, adapted from Gauthier and Pottier~\cite{DBLP:conf/icfp/GauthierP04} and presented in \cref{fig:rename}. 
Renaming essentially replaces a type $\TT$ by a minimal alpha-conversion of $\TT$. By alpha-conversion we mean that $\rename_S(\TT)$ renames bound variables in $\TT$. By ``minimal'' we mean that each bound variable is renamed to its lowest possible value. 
We assume at our disposal a countable well-ordered set of type variables
$\{\typec{\upsilon_1},\ldots,\typec{\upsilon_n},\ldots\}$.
In $\rename_S(\TT)$, parameter $S$ is a set containing type variables
unavailable for renaming; in the outset of the renaming process $S$ is the empty
set, since all variables are available. In that case the subscript $S$ is often
omitted. 
The case for lambda abstraction renames the bound variable by the smallest variable not in the set
$S \cup \fv(\tabs{\alpha}{\kind}{T})$, which we denote by $\first_S(\tabs{\alpha}{\kind}{T})$.


Renaming is what allows us to check whether type abstractions $\tabs\alpha\kind T$, $\tabs\beta\kind U$ are equivalent. The types are equivalent if both bound variables $\typec\alpha$ and $\typec\beta$ are renamed to the same variable
$\typec{\upsilon_j}$. In summary, renaming provides a syntax-guided approach to the equivalence of lambda-abstractions, where the names of bound variables should not matter. Our notion of type equivalence preserves alpha-conversions up to renaming: if $\TT$, $\UT$ are alpha-conversions of one another, then $\rename(\TT) = \rename(\UT)$ and in particular $\isbisim{\renameb\TT}{\renameb\UT}$. We will come back to this point after we define type equivalence in \cref{sec:type-equiv}.

We can easily see that renaming uses the minimum amount of variable names
possible; for example,
$\rename(\tlambda{\alpha}{\tkind}{\tlambda{\beta}{\skind}{\beta}}) =
\tlambda{\upsilon_1}{\tkind}{\tlambda{\upsilon_1}{\skind}{\upsilon_1}}$. Notice
how both bound variables $\typec\alpha$ and $\typec\beta$ are renamed to
$\typec{\upsilon_1}$, the first variable available for replacement.
Also, renaming blatantly violates the Barendregt's variable
convention~\cite{DBLP:books/daglib/0067558} used in so many works; for example
$\rename(\tapp{\upsilon_1}{(\tlambda{\alpha}{T}{\alpha})}) =
\tapp{\upsilon_1}{(\tlambda{\upsilon_1}{T}{\upsilon_1})}$, where variable
$\typec{\upsilon_1}$ is both free and bound in the resulting type.
Even if renaming violates the variable convention, substitution can still be
performed without resorting to the ``on-the-fly'' renaming of Curry and
Feys~\cite{CurryFeys58,DBLP:books/cu/HindleyS86}. When $\typec{\upsilon_1}\neq\typec{\upsilon_2}$,
%
we have
\begin{equation*}
  \tapp{(\tlambda{\upsilon_1}{\kind}{\tlambda{\upsilon_2}{\kind'}{U}})}{T} \betaarrow
(\tlambda{\upsilon_2}{\kind'}{U})\subs T{\upsilon_1} = \tlambda{\upsilon_2}{\kind'}{(U\subs T{\upsilon_1})}
\end{equation*}
since the renaming rule for application guarantees that
$\typec{\upsilon_2}\notin\fv(\TT)$. 
Otherwise if $\typec{\upsilon_1}=\typec{\upsilon_2}$, we have $(\tlambda{\upsilon_1}{\kind'}{U})\subs T{\upsilon_1} = \tlambda{\upsilon_1}{\kind'}{U}$. 
This justifies the inclusion of set $S$ in the
renaming process.
From now on, we assume that all types have gone through the renaming process.


\begin{figure}[t!]
	\declrel{Type reduction}{$\lambdared TT$}
	\begin{mathpar}
		\infer[\rseqone]{}{\lambdared{\semit \Skip T}{T}}
		
		\infer[\rseqtwo]{\lambdared{T}{V}}{\lambdared{\semit TU}{\semit VU}}
		
		\infer[\rassoc]{}{\lambdared{\semit {(\semit TU)}V}{\semit T {(\semit UV)}}}
		
		\infer[\rmu]{}{\lambdared{\tapp{\tmu{k}}\TT}{\tapp{T}({\tapp{\tmu{k}} \TT})}}

		\infer[\rbeta]{}{\lambdared{\tapp{(\tabs{\alpha}{\kind}{T})}{U}}{\blk{\rename(}\TT\subs U\alpha)}}
		
		\infer[\rtappl]{\lambdared TU}{\lambdared{TV}{UV}}

		\infer[\rdsemi]{}{\lambdared{\tapp \Dual {(\semit T U)}}{\semit {\tapp \Dual
					T}{\tapp \Dual U}}}
		
		\infer[\rdskip]{}{\lambdared {\tapp \Dual \Skip}{\Skip}}
		
		\infer[\rdend]{}{\lambdared {\tapp \Dual \End}{\End}}
		
		\infer[\rdin]{}{\lambdared{\tapp \Dual {(\tapp ?T)}}{\tapp !T}}
		
		\infer[\rdout]{}{\lambdared{\tapp \Dual {(\tapp !T)}}{\tapp ?T}}
		
		\infer[\rdextc]{}{\lambdared{\tapp {\Dual} {( {\&\{\overline{l_i:T_i}\}})}}{\oplus\{\overline{l_i:\Dual(T_i)}\}}}
		
		\infer[\rdintc]{}{\lambdared{\tapp {\Dual} {( {\oplus\{\overline{l_i:T_i}\}})}}{\&\{\overline{l_i:\Dual(T_i)}\}}}
%
		
		\infer[\rdctx]{\lambdared TU}{\lambdared{\tapp \Dual T}{\tapp \Dual U}}
		
		\infer[\rddvar]{}{\lambdared{\tapp {\Dual} {(\tapp \Dual (\alpha\ \TT_1\ldots \TT_m))}}{\alpha\ \TT_1\ldots \TT_m}}
	\end{mathpar}
	\caption{Type reduction.}
	\label{fig:reduction}
\end{figure}


Next comes the notion of \emph{type reduction} (\cref{fig:reduction}). Apart
from beta reduction (rule \rbeta), the definition provides for
%
sequential composition, for unfolding recursive types and for reducing
$\tapp \Dual {\TT}$ types. Note that renaming is further invoked in rule \rbeta
for beta reduction does not preserve renaming: consider the renamed type
$\tapp{(\tlambda{\upsilon_1}{\tkind}{\tlambda{\upsilon_2}{\tkind}{\function{\upsilon_1}{\upsilon_2}}})}{\Unit}$.
The type resulting from the substitution
$(\tlambda{\upsilon_2}{\tkind}{\function{\upsilon_1}{\upsilon_2}})\subs\Unit{\upsilon_1}$
is $\tlambda{\upsilon_2}{\tkind}{\function{\Unit}{\upsilon_2}}$ which is not
renamed and, therefore, not equivalent to
$\tlambda{\upsilon_1}{\tkind}{\function{\Unit}{\upsilon_1}}$ according to our
rules in \cref{sec:type-equiv}. 
Thanks to our modified rule \rbeta, we preserve preservation of renaming under reductions: if $\TT = \rename(\TT)$ and $\lambdared TU$ then $\UT = \rename(\UT)$.

\begin{figure}[t!]
	\declrel{Weak head normal form }{$\iswhnf T$}
	\begin{mathpar}
		\infer[\wconszero]{}{\iswhnf {\iota}}
		
		\infer[\wconsone]{\typec{\iota}\neq \typec{;}\ , \tmu{\kind}\ , \Dual \premspace m\geq 1}{\iswhnf {\tapp{\iota}{\TT_1\ldots \TT_m}}}
		
		\infer[\wseqone]{}{\iswhnf {\tapp ; T}}
		
		\infer[\wseqtwo]{\iswhnf{T_1} \\ 
			\typec{T_1} \neq \semit{U}{V}\ , \Skip \\ m\ge2}{\iswhnf{\tapp \Semi {T_1\ldots T_m}}}
		
		\infer[\wvar]{m\geq 0}{\iswhnf {\tapp{\alpha}{\TT_1\ldots \TT_m}}}
		
		\infer[\wabs]{}{\iswhnf {\tabs{\alpha}{\kind}{T}}}
		
		\infer[\wdual]{\iswhnf{T_1} \\ \typec{T_1} \neq \Skip\ , \End\ , \tapp \sharp \UT \ , \semit U V\ , \typec{\odot\{\overline{l_i: U_i}\}} \ , \tapp{\Dual}{(\alpha\ U_1 \ldots U_n)} \\ m\ge2}{\iswhnf {\tapp \Dual {\TT_1 \ldots \TT_m}}}
	\end{mathpar}
	%
		 


	\caption{Weak head normal form. 
	}
    \label{fig:whnf}
    \label{fig:normalisation}
    \label{fig:normalises}
\end{figure}

We also need the notion of \emph{weak head normal form}, $\iswhnf T$, borrowed from the lambda calculus~\cite{DBLP:books/daglib/0067558,barendregt1977type}. \cref{fig:whnf} provides a rule-based characterisation (which can be used in a compiler).
Notice that in rules \rulename{W-Seq1} and \rulename{W-Seq2}, the sequential composition operator $\semit{}{}$ appears in prefix notation. However, for type formation (kinding), it turns out that $\tapp \Semi {T_1\ldots T_m}$ is a (well formed) type only when $m\leq 2$. 
The following result shows that the rule-based characterization is equivalent to irreducibility.

\begin{lemma}
  \label{thm:whnfresults}\
    $\iswhnf T$ iff $\TT \centernot\betaarrow$.
\end{lemma}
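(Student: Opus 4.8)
The plan is to prove both directions by a single induction on the size of $\TT$, exploiting the spine decomposition of types: every type can be written as $\tapp H{T_1\ldots T_m}$ with $m\ge0$, where the head $H$ is a constant $\tiota$, a variable $\talpha$, or an abstraction $\tabs\alpha\kind U$. The structural fact that makes the argument go through is that $\betaarrow$ is a \emph{weak-head} reduction: inspecting \cref{fig:reduction}, the only congruence rules are \rtappl (which reduces the function part of an application), \rdctx (which reduces the argument of $\Dual$), and \rseqtwo (which reduces the left operand of $\Semi$); no rule reduces an argument in a general position, and no rule reduces under an abstraction. Hence both irreducibility and reducibility propagate along spines, and the size induction will let me appeal to strictly smaller subterms.

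For the forward direction $\iswhnf T \Rightarrow T\centernot\betaarrow$ I would proceed by cases on the rule deriving $\iswhnf T$, assuming $\betared TU$ and refuting every possible last reduction step. Rules \wconszero and \wabs are immediate: a bare constant is the left-hand side of no rule of \cref{fig:reduction}, and a bare abstraction can be consumed only by \rbeta, which needs the abstraction in the function position of an application. For the neutral and partially applied spines \wconsone, \wvar, and \wseqone, the head ($\tiota\notin\{\Semi,\tmu\kind,\Dual\}$, a variable, or a one-argument $\Semi$) matches no head-reduction rule, so the only threat is \rtappl; but that would require the shorter prefix $\tapp H{T_1\ldots T_{m-1}}$ to reduce, and since that prefix is again a whnf of smaller size, the induction hypothesis forbids it.

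The delicate cases are \wseqtwo and \wdual, whose side conditions are tuned precisely to the reduction rules. In \wseqtwo the premises $T_1\ne\Skip$ and $T_1\ne\semit UV$ rule out \rseqone and \rassoc, the premise $\iswhnf{T_1}$ together with the induction hypothesis makes $T_1$ irreducible and so rules out \rseqtwo, and \rtappl is handled as above. In \wdual the list of shapes forbidden for $T_1$---$\Skip$, $\End$, $\tapp\sharp U$, $\semit UV$, a choice, and $\tapp\Dual{(\alpha\,U_1\ldots U_n)}$---stands in exact correspondence with the redexes of \rdskip, \rdend, \rdin and \rdout, \rdsemi, \rdextc and \rdintc, and \rddvar, while $\iswhnf{T_1}$ with the induction hypothesis blocks the congruence rule \rdctx. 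I expect this matching to be the main obstacle, since one must check that the exclusion list is simultaneously broad enough to forbid every firing $\Dual$-reduction and tight enough that nothing irreducible is wrongly declared non-normal.

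For the converse I would prove the contrapositive, $\neg\iswhnf T\Rightarrow T$ reduces, again by size induction on the spine $\tapp H{T_1\ldots T_m}$. If $H$ is a variable, or a constant outside $\{\Semi,\tmu\kind,\Dual\}$, then $T$ is already a whnf by \wvar, \wconszero, or \wconsone, so these cases are vacuous. If $H$ is an abstraction with $m\ge1$ the head is a \rbeta-redex. Otherwise $H\in\{\tmu\kind,\Semi,\Dual\}$ and I read the missing whnf premise off to locate a redex: $\tapp{\tmu\kind}{T_1}$ reduces by \rmu; for $\Semi$ with $m\ge2$, failure of \wseqtwo forces $T_1=\Skip$, $T_1=\semit UV$, or $T_1$ not a whnf, which fire \rseqone, \rassoc, or (via the induction hypothesis) \rseqtwo; and dually, failure of \wdual forces $T_1$ into one of the excluded shapes---firing the matching $\Dual$-rule---or makes $T_1$ non-normal, firing \rdctx through the induction hypothesis. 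In every case the redex sits at the head of the spine and is lifted to $T$ by finitely many uses of \rtappl, which closes the induction.
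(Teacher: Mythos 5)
Your proposal is correct and, in substance, it is the same proof as the paper's: both amount to checking that the side conditions of the weak head normal form rules exactly complement the applicability of the reduction rules, with an induction hypothesis discharging the three congruence rules \rseqtwo, \rtappl and \rdctx (the paper handles \rtappl and \rseqtwo this way; your spine observation makes the same point explicitly). The only difference is organizational — you argue the forward direction by cases on the whnf derivation and the backward direction contrapositively by spine head, whereas the paper argues the forward direction contrapositively by rule induction on the reduction and the backward direction directly by cases on the type's shape — i.e., the same case analysis viewed from the dual side.
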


\begin{proof}
    ($\Rightarrow$) 
    Assume that $\betared{\TT}{\UT}$ for some $\UT$. We prove that it is not the case that $\iswhnf{\TT}$.
    
    \begin{itemize}
      \item $\lambdared{\semit{\Skip}{\TT}}{\TT}$ and by inspection of $\iswhnf{}$ rules, we conclude that $\semit{\Skip}{\TT}$ is not a weak head normal form.
      
      \item $\lambdared{\semit \TT \UT}{\semit V \UT}$ only if $\lambdared{\TT}{V}$. However, by hypothesis, given that $\TT$ reduces, $\TT$ is not a weak head normal form. Therefore, none of the $\iswhnf{}$ rules can be applied. So, $\semit \TT \UT$ is not a weak head normal form.
      
      \item $\lambdared{\semit{(\semit \TT \UT)}{V}}{\semit{T}{(\semit U V)}}$ and by inspection of $\iswhnf{}$ rules, we conclude that $\semit{(\semit \TT \UT)}{V}$ is not a weak head normal form.
      
      \item $\lambdared{\tapp{\tmu{\kind}}{T}}{\tapp{T}{(\tapp{\tmu{\kind}}{T})}}$  and by inspection of $\iswhnf{}$ rules, we conclude that $\tapp{\tmu{k}}{T}$ is not a weak head normal form.
      
      \item $\lambdared{\tapp{(\tabs{\alpha}{\kind}{T})}{U}}{\renameb{T\subs{U}{\alpha}}}$ and by inspection of $\iswhnf{}$ rules, we conclude that $\tapp{(\tabs{\alpha}{\kind}{T})}{U}$ is not a weak head normal form.
      
      \item $\lambdared{T\ V}{U\ V}$ only if $\lambdared{T}{U}$.
      However, by hypothesis, given that $\TT$ reduces, $\TT$ is not a weak head normal form. Therefore, by inspection of $\iswhnf{}$ rules, we conclude that $\tapp{T}{V}$ is not a weak head normal form.
      
      \item Cases of $\TT$ for which $\lambdared{\tapp {\Dual} {\TT}}{\UT}$ for some $\UT$ are automatically excluded in rule $\wdual$, therefore none of those $\tapp \Dual \TT$ types is a weak head normal form.
    \end{itemize}
    
    ($\Leftarrow$) We must investigate all types $\TT$ such that $\TT \centernot{\betaarrow}$. We illustrate with a couple of cases:
    
    \begin{itemize}
      \item No constant $\typec{\iota}$ reduces, and according to $\wconszero$, they are all in weak head normal form. Analogously for $\typec{; \TT}$, $\tapp{\alpha}{T_1,\ldots, T_m} (m\geq 0)$ and  $\tlambda{\alpha}{\kind}{\TT}$, using rules $\wseqone$, $\wvar$ and $\wabs$, respectively.
      
      \item The cases of $\TT$ for which $\semit \TT \UT$ does not reduce are, according to rule $\rseqtwo$, those where $\TT$ does not reduce either. By induction, $\iswhnf \TT$. Also, $\TT \neq \Skip, \semit {T_1}{T_2}$, otherwise the type $\semit \TT \UT$ would reduce via $\rseqone$ or $\rassoc$, respectively. This case is covered in $\wseqtwo$ and therefore the type is in weak head normal form.
      
      \item The cases of $\TT$ for which $\typec{\Dual(T)}$ does not reduce are all such that $\TT\neq \Skip$, $\End$, $\semit{T_1}{T_2}$, $\MSGn{T}$, $\typec{\odot\{\overline{t_i:T_i}\}}$, $\typec{\Dual(\tapp{\alpha}{T_1\ldots T_m})}$ and $\TT$ does not reduce.  By induction, $\iswhnf \TT$. Each of those cases for $\TT$ is covered by $\wdual$.
      
    \end{itemize}
\end{proof}

We say that type $\TT$ \emph{normalises} to type $\UT$, written $\normalred TU$, if $\iswhnf \UT$ and $\UT$ is reached from $\TT$ in a finite number of
reduction steps (note that any term which is already whnf normalises to itself). We write $\isnormalised T$ to denote that $\normalred TU$ for some $\UT$. 

For example, suppose we want to normalise the type $\tapp{\tmu\skind}{T}$, where
$\TT$ is the type
$\tabs {\upsilon_1} \skind {\semit{\intchoice\records{\donel\colon\End,
\morel\colon\OUTn\alpha}}{\tdual {\upsilon_1}}}$.
By computing all reductions from ${\tmu{\skind}}{\TT}$, we obtain the sequence
\begin{equation*} {\tmu{\skind}}{\TT} \betaarrow \tapp \TT {(\tmu{\skind} \TT)}
  \betaarrow \semit{\intchoice\records{\donel\colon\End,
      \morel\colon\OUTn\alpha}}{\tdual (\tmu{\skind}T)} \centernot\betaarrow
\end{equation*}
Finally, we can show that $\iswhnf {\semit{\intchoice\records{\donel\colon\End, \morel\colon\OUTn\alpha}}{\tdual
  (\tmu{\skind}T)}}$:
\begin{equation*}
  \label{eq:whnf}
  \inferrule*[right= \wseqtwo]{
    \inferrule*[right= \wconsone]{
    }{
      \iswhnf{\intchoice\records{\donel\colon\End, \morel\colon\OUTn\alpha}}
    }
  }{
    \iswhnf{\semit{\intchoice\records{\donel\colon\End, \morel\colon\OUTn\alpha}}{\tdual (\tmu{\skind}T)}}
  }
\end{equation*}
Hence, we conclude that $\normalred{\tapp{\tmu\skind}{T}}{\semit{\intchoice\records{\donel\colon\End, \morel\colon\OUTn\alpha}}{\tdual
  (\tmu{\skind}T)}}$.
Similarly, we can reason that
$\tapp{\tmu{\tkind}}{(\tlambda{{\upsilon_1}}{\tkind}{{\upsilon_1}})}$,
$\tapp{\tmu{\skind}}{(\tlambda{{\upsilon_1}}{\skind}{\semit\Skip{\upsilon_1}})}$
and
$\tapp{\tmu{\skind}}{(\tlambda{{\upsilon_1}}{\skind}{\tapp\Dual{\upsilon_1}})}$
are all examples of non-normalising expressions.


\begin{figure}[t!]
  \declrel{Type formation}{$\istype \Delta T \kind$}
  \begin{mathpar}
    \infer[\kconst]{}{\istype \Delta \iota {\kind_\iota}}
    \quad
    \infer[\kvar]{\tbind\alpha\kind\in{\Delta}}
    {\istype \Delta \alpha \kind}
    \quad
    \infer[\ktabs]{\istype {\Delta+\tbind\alpha \kind} {T} {\kind'}}
    {\istype \Delta {\tabs\alpha \kind \TT} {\karrow \kind {\kind'}}}
    \quad
    \inferrule[\ktapp]{\istype \Delta {T} {\karrow \kind {\kind'}} \;\;\, \istype \Delta U \kind \;\;\, \isnormalised {\tapp TU}}
    {\istype \Delta {\tapp T U} {\kind'}}
  \end{mathpar}
  \caption{Type formation.}
  \label{fig:type-formation}
\end{figure}


Equipped with normalisation, we can introduce \emph{type formation}, which we do
via the rules in \cref{fig:type-formation}. Rule \kconst introduces constants as
types whose kinds match those of \cref{fig:constants}. Rule \kvar reads the kind
of a type variable from context $\Delta$.
An abstraction $\tabs{\alpha}{\kind}{\TT}$ is a well-formed type with kind
$\karrow{\kind}{\kind'}$ if $\TT$ is well formed in context $\Delta$ updated
with entry $\tbind{\alpha}{\kind}$ (rule \ktabs). The update is necessary since
we are dealing with renamed types and the same type variable may appear with
different kinds in nested abstractions.

It is not until we reach rule \ktapp that we find a proviso about the normalisation of a type. 
This is standard and analogous to a condition on
contractivity. 
The goal is to eliminate types that reduce indefinitely without reaching a weak head normal form.


\begin{theorem}
  \label{thm:normalisation}
  Let $\istype \Delta T \kind$.
  \begin{description}
  \item[Preservation.] If $\betared{T}{U}$, then $\istype \Delta U \kind$.
  \item[Confluence.] If $\betared{T}{U}$ and $\betared{T}{V}$, then
    $\betareds{U}{W}$ and $\betareds{V}{W}$. 
  \item[Weak normalisation.] $\normalred{T}{U}$ for some $\UT$. Furthermore,
    if $\normalred{T}{V}$, then $\typec U=\typec V$.
  \end{description}
\end{theorem}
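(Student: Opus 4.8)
The plan is to prove the three items in the order \textbf{Confluence}, \textbf{Preservation}, \textbf{Weak normalisation}, because the uniqueness half of weak normalisation is a direct corollary of confluence, and because re-establishing the normalisation side conditions of rule \ktapp during preservation is cleanest once confluence is available. Existence of a normal form, by contrast, is almost immediate and does not depend on the other parts.

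For \textbf{Confluence} I would use the Tait--Martin-L\"of technique. I introduce a parallel reduction $\betaarrow^{\!\parallel}$ that is reflexive and contracts any chosen set of redexes simultaneously, sandwiched as $\betaarrow\ \subseteq\ \betaarrow^{\!\parallel}\ \subseteq\ \betaarrow^{\!*}$, so that the diamond property for $\betaarrow^{\!\parallel}$ transfers to $\betaarrow$. The technical heart is two commutation lemmas: that $\betaarrow^{\!\parallel}$ commutes with substitution $\TT\subs U\alpha$, and --- the ingredient specific to this system --- that it commutes with the minimal renaming $\rename$ invoked in \rbeta, so that the global invariant ``every type is renamed'' survives reduction (the remark following \cref{fig:reduction} already notes that $\TT=\renameb\TT$ and $\betared TU$ give $\UT=\renameb\UT$). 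One then checks pairwise joinability of redexes: unfolding \rmu and beta \rbeta through the head-congruence \rtappl, and the $\Dual$-distribution rules \rdsemi, \rdskip, \rdend, \rdin, \rdout, \rdextc, \rdintc, \rddvar against one another and against \rdctx. The delicate overlap --- and what I expect to be the main obstacle --- is between associativity \rassoc and the left-congruence rule \rseqtwo on a nested sequential composition $\semit{(\semit TU)}V$: reducing the outer $\Semi$ and reducing its left operand must reconverge. Since \rseqtwo reduces only the left operand (nothing reduces the right operand of $\Semi$ directly), the proof must show that reduction maintains, or can always reach, the right-associated form of a sequential composition; concretely I would treat $\Semi$ modulo associativity, or argue that \rassoc completes to a canonical right-nesting, so that the two reducts meet. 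The left-unit rule \rseqone interacts smoothly with both.

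For \textbf{Preservation} I argue by induction on the derivation of $\betared TU$, inverting the type-formation rules of \cref{fig:type-formation}. The congruence rules \rseqtwo, \rtappl and \rdctx follow from the induction hypothesis. The base cases rest on a substitution lemma --- if $\istype{\Delta+\tbind\alpha\kind}{T}{\kind'}$ and $\istype{\Delta}{U}{\kind}$ then $\istype{\Delta}{\TT\subs U\alpha}{\kind'}$ --- together with the fact that $\rename$, being an alpha-conversion, preserves kinds; this handles \rbeta. Rule \rmu uses the kind $\karrow{(\karrow\kind\kind)}{\kind}$ of $\tmu\kind$, and the dual rules use the kinds assigned in \cref{fig:constants} to $\Sharp$, $\Semi$, the choice operators and $\Dual$ (so that, e.g., $\tapp\Dual{(\semit TU)}$ and $\semit{\tapp\Dual T}{\tapp\Dual U}$ are both of kind $\skind$). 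The one genuinely delicate point is re-establishing the premise $\isnormalised{\tapp TU}$ of \ktapp at every application node of $U$: each such node is well formed in $T$, hence normalises, and confluence together with the correspondence between the redexes of $T$ and those of $U$ lets one transport normalisation to $U$.

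Finally, for \textbf{Weak normalisation}, existence is immediate by cases on the head of $T$: a constant, a variable, or an abstraction is already in weak head normal form by \wconszero, \wvar or \wabs and normalises to itself, while an application carries the premise $\isnormalised{\tapp TU}$ of \ktapp, which is exactly $\normalred{\tapp TU}{W}$ for some $W$. Uniqueness is the payoff of confluence: if $\normalred TU$ and $\normalred TV$ then $\UT$ and $\VT$ are whnf and both reachable from $\TT$, so confluence provides a common reduct $W$; but a whnf is irreducible by \cref{thm:whnfresults}, whence $\UT=W=\VT$.
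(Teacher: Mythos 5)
Your plan for \textbf{Preservation} and \textbf{Weak normalisation} is essentially the paper's own: preservation is proved there by the same rule induction on $\betared TU$ (using the kinds of the constants and, implicitly, a substitution lemma plus kind-invariance of $\rename$ for \rbeta), and existence of a weak head normal form is obtained exactly as you do it, by cases on the head of $\TT$, with the application case discharged by the $\isnormalised{\tapp TU}$ premise of \ktapp; uniqueness is likewise attributed to confluence.

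Where you genuinely diverge is \textbf{Confluence}. The paper proves only what the statement literally asserts---local confluence---by a direct case analysis on pairs of diverging one-step reductions, and its sketch exhibits a single critical pair, $\semit{(\semit{\Skip}{T})}{U}$, which rejoins via \rseqone. You instead set up Tait--Martin-L\"of parallel reduction and aim at the full diamond property. That is heavier machinery than the stated claim requires, but it buys something the paper's argument does not deliver: the uniqueness half of weak normalisation genuinely needs full confluence, since local confluence together with mere weak normalisation does not imply uniqueness of normal forms (Newman's lemma requires strong normalisation, which fails here, e.g.\ for $\tapp{\tmu\skind}{(\tabs\alpha\skind{\semit\Skip\alpha})}$). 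Your ordering makes that dependency explicit and sound, whereas the paper's ``the second part follows from confluence'' is, strictly speaking, loose.

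One caution about the obstacle you flagged---and you were right to single it out, because the paper's case analysis never treats it. The overlap between \rassoc and \rseqtwo does \emph{not} rejoin syntactically under the given rules: from $\semit{(\semit{(\semit{\alpha}{U})}{V})}{W}$ one branch reaches $\semit{\alpha}{(\semit{U}{(\semit{V}{W})})}$ and the other reaches $\semit{\alpha}{(\semit{(\semit{U}{V})}{W})}$; both are weak head normal forms by \wseqtwo, they are distinct, and neither reduces further because reduction never enters the right operand of $\Semi$. Consequently, of your two proposed remedies only the first is viable: you must work modulo associativity of $\Semi$ (reading the joinability and uniqueness claims up to re-association of the right-hand spine). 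The second remedy---completing with \rassoc to a canonical right-nesting---cannot be carried out inside the reduction relation, precisely because the relation is weak-head and never rewrites under the right argument of $\Semi$. So commit to the modulo-associativity treatment; with it your diamond lemma and the rest of the plan go through, and without it neither your proof nor, for that matter, the theorem's literal syntactic reading survives this example.
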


\begin{proof}
  \textbf{Preservation:} By rule induction on $\lambdared{T}{U}$. Let us inspect some cases of (well-formed) types $\TT$ that reduce as an illustration:
  
  \begin{itemize}
    \item $\lambdared{\semit\Skip U}{U}$; from the assumption that $\istype{\Delta}{\TT}{\kind}$ follows that $\kind = \skind$ and that $\istype{\Delta}{\UT}{\skind}$.
    
    \item $\lambdared{\semit{U_1}{U_2}}{\semit V{U_2}}$ only if $\lambdared{U_1}{V}$; from the assumption that $\istype{\Delta}{\TT}{\kind}$ follows that $\kind = \skind$, $\istype{\Delta}{\UT_1}{\skind}$ and $\istype{\Delta}{\UT_2}{\skind}$. By hypothesis, $\istype{\Delta}{V}{\skind}$. Therefore, $\istype{\Delta}{\semit V{U_2}}{\skind}$.
    
    \item $\lambdared{\tmu{k}\UT}{\tapp \UT {(\tmu{k}\UT)}}$; from the assumption that $\istype{\Delta}{\TT}{\kind}$ and since $\istype{\Delta}{\tmu{\kind}}{\karrow{(\karrow{\kind}{\kind})}{\kind}}$, follows that $\istype{\Delta}{\UT}{\karrow{\kind}{\kind}}$. Therefore, $\istype{\Delta}{\tapp \UT {(\tmu{\kind}\UT)}}{\kind}$.
  \end{itemize}
  \par
  \textbf{Confluence:} By a case analysis on the various possible reductions
  from $\TT$. We sketch a simple case.
  $\lambdared{(\semit \Skip\TT);\UT}{\semit TU}$ (via $\rseqtwo$) and
  $\lambdared{(\semit \Skip\TT);\UT}{\Skip; (\semit TU)}$ (via $\rassoc$).
  Nonetheless, the latter reduces again via $\rseqone$:
  $\lambdared{\Skip; (\semit TU)}{\semit TU}$.
  \par
  \textbf{Weak normalisation:} By a case analysis on $\TT$. If $\TT=\typec{\iota},\typec{\alpha}$ or $\tabs\alpha \kind \UT$, then $\iswhnf{\TT}$ by rules $\whnfrulename{Const0}, \whnfrulename{Var}, \whnfrulename{Abs}$, respectively. Therefore $\normalred{\TT}{\TT}$ according to $\normalisationrulename{Whnf}$. If $\TT = \tapp{U}{V}$, then according to $\ktapp$ $\isnormalised{UV}$, so $\normalred{T}{T'}$ for some $\typec{T'}$ by definition.
  The second part follows from confluence.
\end{proof}

We finally arrive at the main decidability result in this section. 
In its proof, we make use of the fact that recursion is restricted to kind
$\kast$ to limit the possible subexpressions of the form
$\tapp{\tmu{\kast}}{U}$ that might appear in the normalisation of $\TT$.

\begin{restatable}[Decidability of type formation]{theorem}{kindingdecidable}
\label{thm:kindingdecidable}
  $\istype{\Delta}{\TT}{\kind}$ is decidable for types in $\FMuAstSemiOmega$. 
\end{restatable}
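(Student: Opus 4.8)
The plan is to proceed by structural induction on the type $\TT$ (\cref{fig:syntax-types}), exploiting the fact that type formation (\cref{fig:type-formation}) is syntax-directed: each syntactic form of $\TT$ matches exactly one rule. The cases \kconst and \kvar are immediate, amounting respectively to a table lookup in \cref{fig:constants} and a context lookup in $\Delta$. For an abstraction $\tabs\alpha\kind T$, rule \ktabs requires the single recursive check $\istype{\Delta+\tbind\alpha\kind}{T}{\kind'}$ on the structurally smaller body, decidable by the induction hypothesis. The only delicate case is the application $\tapp TU$: rule \ktapp first performs the recursive (and, by induction, decidable) checks that $\TT$ has an operator kind $\karrow\kind{\kind'}$ and that $\UT$ has kind $\kind$, and then imposes the side condition $\isnormalised{\tapp TU}$. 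Everything therefore reduces to an auxiliary claim: given that $\TT$ and $\UT$ are already known to be well formed, the predicate $\isnormalised{\tapp TU}$ is decidable.

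To decide $\isnormalised{\tapp TU}$ I would run weak head reduction from $\tapp TU$, computing a leftmost-outermost sequence $\tapp TU \betaarrow W_1 \betaarrow W_2 \betaarrow \cdots$ according to \cref{fig:reduction}, and test at each step whether the current $W_i$ is a weak head normal form via the syntax-directed rules of \cref{fig:whnf}. By \cref{thm:whnfresults}, $\iswhnf{W_i}$ holds exactly when $W_i$ is $\betaarrow$-irreducible, and by confluence (\cref{thm:normalisation}, applicable since $\TT$ and $\UT$ are well formed) whnf-reachability under head reduction coincides with $\isnormalised{\tapp TU}$. The procedure simultaneously records the recursive subterms $\tapp{\tmu\kast}V$ that it unfolds at the head via rule \rmu, and halts with answer \emph{no} as soon as such an unfolding recurs with no intervening progress towards a whnf: by determinacy of head reduction this signals a non-productive cycle and hence divergence. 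The procedure is thus correct provided it always halts, i.e.\ provided the head reduction either reaches a whnf or exhibits a repeated recursive unfolding after finitely many steps.

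The heart of the argument---and the main obstacle---is therefore to prove that only finitely many distinct subexpressions of the form $\tapp{\tmu\kast}V$ can arise during the head reduction of $\tapp TU$, so that the loop detection is guaranteed to trigger on non-normalising inputs. Two ingredients secure this finiteness. First, minimal renaming (\cref{fig:rename}), reinstated by rule \rbeta after every substitution, guarantees that only finitely many variable names ever occur, so head configurations cannot proliferate merely by generating fresh binders. Second, and crucially, the restriction of $\FMuAstSemiOmega$ to $\kast$-kinded recursion bounds the recursive subterms reaching head position: unfolding a monomorphic recursive type by \rmu rewrites $\tapp{\tmu\kast}V$ into $\tapp V{(\tapp{\tmu\kast}V)}$, reproducing the \emph{same} recursive subterm $\tapp{\tmu\kast}V$; since its argument $\tapp{\tmu\kast}V$ has proper kind $\kast$, substituting it into the body of $V$ by \rbeta cannot create a recursion operator applied to a new argument. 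A structural induction over the finitely many $\tmu\kast$-subexpressions of the original $\tapp TU$ then shows that the set of recursive subterms that can occupy the head is finite; combined with the bounded variable names and the finite set of head operators (the message, choice, sequential-composition and $\Dual$ symbols manipulated by rules \rseqone, \rseqtwo, \rassoc and the $\Dual$ rules), the relevant head configurations are drawn from a finite set, whence the procedure terminates.

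Finally, I would indicate where the hypothesis is indispensable. With higher-order recursion $\tmu{(\karrow\kast\kast)}$ the recursive argument produced by \rmu carries a \emph{function} kind, so iterated unfolding followed by application can grow the head spine without bound and generate infinitely many distinct recursive subterms---precisely the polymorphic-recursion phenomenon responsible for the deterministic-pushdown-automata lower bound of the full system. Restricting recursion to kind $\kast$ is exactly what excludes this, rendering $\isnormalised{\cdot}$ decidable and, by the structural induction above, type formation for $\FMuAstSemiOmega$ decidable.
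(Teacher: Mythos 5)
Your overall strategy coincides with the paper's: isolate the side condition $\isnormalised{\tapp TU}$ of rule \ktapp as the crux, decide it by iterating head reduction while tagging every $\kast$-kinded recursive redex that gets unfolded, and use the restriction to $\tmu\kast$ to argue that the tag set stays finite, so that cycle detection must fire on divergent inputs. Your scaffolding (structural induction on the type, so that $\TT$ and $\UT$ are already known well formed when the application is examined) is a legitimate alternative to the paper's pre-kinding device, which serves the same purpose of breaking the circularity between kinding and normalisation. The genuine gap is in the termination argument for the normalisation procedure itself. Completeness of your cycle detector needs the fact that every infinite reduction sequence performs infinitely many \rmu-unfoldings; equivalently, the $\mu$-free fragment of reduction --- $\beta$ together with the sequential-composition and $\Dual$ rules, written $\lseqarrow$ in the paper --- must terminate on (pre-)kinded types. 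This is precisely the paper's \cref{lem:prekindstrongnormal}, an adaptation of the normalisation theorem for the simply-typed $\lambda$-calculus, and it is a substantive ingredient, not a formality: without it a type could a priori diverge using $\beta$-steps alone, no tagged redex would ever recur, and your procedure would never halt. Your substitute for this lemma --- that ``the relevant head configurations are drawn from a finite set'' because variable names, head operators and recursive subterms are finitely many --- is not true: reducts grow without bound under reduction (substitution duplicates subterms, sequential compositions accumulate), so the configuration space is infinite; at best the set of head $\mu$-redexes is finite, which says nothing about reduction steps that never touch a $\mu$.

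There are also two inaccuracies in the finiteness argument itself. First, the claim that substituting $\tapp{\tmu\kast}{\VT}$ into the body of $\VT$ ``cannot create a recursion operator applied to a new argument'' is false when read literally: if the body contains $\tapp{\tmu\kast}{\WT}$ with the bound variable free in $\WT$, substitution produces $\tmu\kast$ applied to a brand-new operand, so fresh $\mu$-expressions do appear along the reduction. What the kind restriction actually yields --- and what the paper uses --- is weaker but sufficient: the unfolded $\tapp{\tmu\kast}{\VT}$ itself reappears verbatim, and since it has proper kind it can never stand in function position, so a $\lseqarrow$-normal form that still reduces must have one of the four shapes listed in the paper's proof, with that same $\tmu\kast\ \VT$ as its head redex; tagging then catches the repetition. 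Second, reduction here is confluent but not deterministic (both \rassoc and \rseqtwo apply to $\semit{(\semit TU)}{V}$ when $\TT$ reduces), so ``determinacy of head reduction'' should be replaced by fixing a reduction strategy and appealing to confluence (\cref{thm:normalisation}). If you add the $\lseqarrow$-termination lemma and recast the finiteness claim in this corrected form, your argument becomes essentially the proof given in the paper.
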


\begin{proof}
In \cref{app:kinding}.
\end{proof}


\section{Type equivalence} 
\label{sec:type-equiv}

This section introduces type bisimulation as our notion of type equivalence.
We define a labelled transition system (LTS) on the space of all types and write
$\ltsred TaU$ to denote that $\TT$ has a transition by label $\tsymc a$ to
$\UT$. The grammar for labels and the LTS rules are in \cref{fig:lts}.


\begin{figure}[t!]
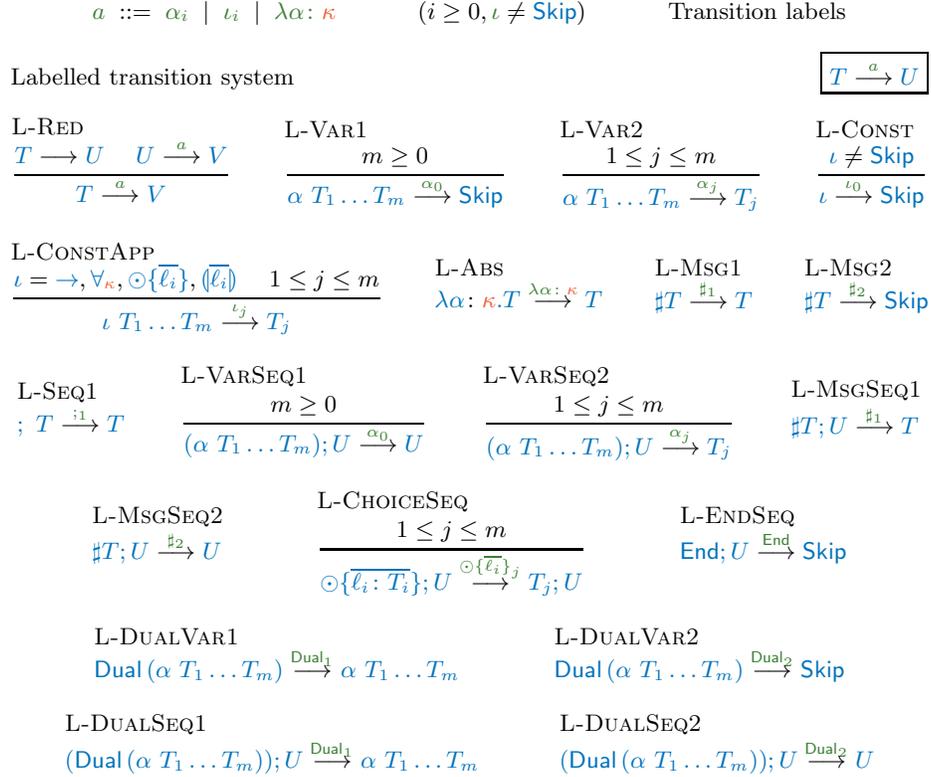

  \begin{align*}
	\tsymc a &\grmeq
     \tsymc{\alpha_i} \grmor \tsymc{\iota_i} \grmor
               \tsymc{\lambda\alpha\colon\kind}
  & (i\ge0, \tsymc\iota\neq\Skip)
  && \text{Transition labels}
  \end{align*}
  \declrel{Labelled transition system}{$\ltsred{T}{a}{U}$}
  \begin{mathpar}
  \infer[\rulename{L-Red}]{
    \lambdared TU 
    \premspace 
    \ltsred{U}{a}{V}
  }{
    \ltsred{T}{a}{V}
  }
  
  \infer[\rulename{L-Var1}]{
    m\geq 0
  }{
    \ltsred{\alpha\ T_1\ldots T_m}{\alpha_0}\Skip
  }
  
  \infer[\rulename{L-Var2}]{
    1\leq j\leq m
  }{
    \ltsred{\alpha\ T_1\ldots T_m}{\alpha_j}{T_j}
  }
  
  \infer[\rulename{L-Const}]{
    \tiota \neq \Skip
  }{
    \ltsred\iota{\iota_0}\Skip
  }
  
  \infer[\rulename{L-ConstApp}]{
    \typec \iota= \function{}{}, \typec{\forall}_\kind, \typec{\odot{\{\overline{\ell_i}\}}}, \typec{\varrecs{\overline{\ell_i}}}
    \premspace
    1\leq j\leq m
  }{
    \ltsred{\iota\ T_1\ldots T_m}{\iota_j}{T_j}
  }
  
  \infer[\rulename{L-Abs}]{
    \ltsred{\tlambda \alpha\kind T}{\lambda\alpha\colon\kind}{T}
  }{}
  
  \infer[\rulename{L-Msg1}]{
    \ltsred{\MSGn T}{\sharp_1}{T}
  }{}
  
  \infer[\rulename{L-Msg2}]{
    \ltsred{\MSGn T}{\sharp_2}{\Skip}
  }{}
  
  \infer[\rulename{L-Seq1}]{
    \ltsred{;\ T}{;_1}{T}
  }{}
  
  \infer[\rulename{L-VarSeq1}]{
    m\geq 0
  }{
    \ltsred{\semit{(\alpha\ T_1\ldots T_m)}U}{\alpha_0}{U}
  }
  
  \infer[\rulename{L-VarSeq2}]{
    1\leq j\leq m
  }{
    \ltsred{\semit{(\alpha\ T_1\ldots T_m)}U}{\alpha_j}{T_j}
  }
  
  \infer[\rulename{L-MsgSeq1}]{
    \ltsred{\semit{\MSGn T}U}{\sharp_1}{T}
  }{}
  
  \infer[\rulename{L-MsgSeq2}]{
    \ltsred{\semit{\MSGn T}U}{\sharp_2}{U}
  }{}
  
  \infer[\rulename{L-ChoiceSeq}]{
    1\leq j\leq m
  }{
    \ltsred{\semit{\choice\records{\overline{\ell_i\colon T_i}}}U}{\tsymc{\odot{\{\overline{\ell_i}\}}_j}}{\semit{T_j}U}
  }
  
  \infer[\rulename{L-EndSeq}]{
    \ltsred{\semit\End U}{\keyword{End}}{\Skip}
  }{}
  
  \infer[\rulename{L-DualVar1}]{
    \ltsred{\tdual{(\alpha\ T_1\ldots T_m)}}{\Duall_1}{\alpha\ T_1\ldots T_m}
  }{}
  
  \infer[\rulename{L-DualVar2}]{
    \ltsred{\tdual{(\alpha\ T_1\ldots T_m)}}{\Duall_2}{\Skip}
  }{}
  
  \infer[\rulename{L-DualSeq1}]{
    \ltsred{\semit{(\tdual{(\alpha\ T_1\ldots T_m)})} U}{\Duall_1}{\alpha\ T_1\ldots T_m}
  }{}
  
  \infer[\rulename{L-DualSeq2}]{
    \ltsred{\semit{(\tdual{(\alpha\ T_1\ldots T_m)})} U}{\Duall_2}{U}
  }{}
  \end{mathpar}
  \caption{Labelled transition system for types.}
  \label{fig:lts}
\end{figure}




If $\TT$ is not in weak head normal form, then we must normalise it to some type $\UT$, so that $\TT$ has the same transitions as $\UT$ (rule \rulename{L-Red}). 
Otherwise if $\iswhnf T$, then the transitions of $\TT$ can be immediately derived by looking at the corresponding rule for $\TT$ as follows. 
If $\TT$ is a variable, use rule \rulename{L-Var1} (with $m=0$). 
If $\TT$ is a constant (other than $\Skip$), use rule \rulename{L-Const}. 
Note that if $\TT$ is a lone $\Skip$, it has no transitions. 
If $\TT$ is an abstraction, use rule \rulename{L-Abs}. 

If $\TT$ is an application, then we need to look inside the head. We write $\TT$
as $\typec{T_0\ T_1\ldots T_m}$ with $m\geq 1$ where $\typec{T_0}$ is not an
application, and look at $\typec{T_0}$. If $\typec{T_0}$ is a variable, use
rules \rulename{L-Var1} and \rulename{L-Var2}. If $\typec{T_0}$ is one of the
constants $\function{}{}$, $\typec{\forall}_\kind$, $\typec{\odot\{\overline{l_i}\}}$ or
$\typec{\varrecs{\overline{l_i}}}$, use rule \rulename{L-ConstApp}. Note that if $\typec{T_0}$
is an abstraction or $\tmu\kind$, then $\TT$ is not in weak head normal form and
it must be normalised. If $\typec{T_0}$ is $\typec\sharp$, we use rules
\rulename{L-Msg1} and \rulename{L-Msg2}. If $\typec{T_0}$ is $\Dual$, then the
only way for $\TT$ to be well-formed and in weak head normal form is if $m=1$
and $\typec{T_1}$ is $\talpha$ or $\typec{\alpha\ U_1\ldots U_m}$, in which case
we use rules \rulename{L-DualVar1} and \rulename{L-DualVar2}.

If $\typec{T_0}$ is $\typec;$ , we require an additional case analysis on
$\typec{T_1}$. If $m=1$, use rule \rulename{L-Seq1}. Otherwise $m=2$ due to
kinding. If $\typec{T_1}$ is a variable, use rule \rulename{L-VarSeq1} (with
$m=0$). If $\typec{T_1}$ is a constant, then it must be of kind $\skind$, so it
must be either $\Skip$ or $\End$. If $\typec{T_1}$ is $\Skip$, then $\TT$ is not
in weak normal form and it must be normalised. If $\typec{T_1}$ is $\End$, use
rule \rulename{L-EndSeq} ($\End$ is an absorbing element, so $\semit\End\UT$ simply makes a transition to $\Skip$ without executing $\UT$). Note that $\typec{T_1}$ cannot be an abstraction due
to kinding.

If $\typec{T_1}$ is an application, then again we write $\typec{T_1}$ as $\typec{U_0\ U_1\ldots U_n}$ with $n\geq 1$ where the head $\typec{U_0}$ is not an application, and look at $\typec{U_0}$.
If $\typec{U_0}$ is a variable, use rules \rulename{L-VarSeq1} and \rulename{L-VarSeq2}.
If $\typec{U_0}$ is a constant, it must be one of $\typec;$ , $\tmu\kind$, $\typec\sharp$, $\typec{\odot\{\overline{l_i}\}}$ or $\typec\Dual$ due to kinding.
If $\typec{U_0}$ is $\typec\sharp$, use rules \rulename{L-MsgSeq1} and \rulename{L-MsgSeq2}. 
If $\typec{U_0}$ is $\typec{\odot\{\overline{l_i}\}}$, use rule \rulename{L-ChoiceSeq}. 
If $\typec{U_0}$ is $\Dual$, the only way for $\TT$ to be well-formed and in weak head normal form is if $n=1$ and $\typec{U_1}$ is $\talpha$ or $\typec{\alpha\ V_1\ldots V_\ell}$, in which case we use rules \rulename{L-DualSeq1} and \rulename{L-DualSeq2}. 
Finally if $\typec{U_0}$ is $\typec;$ , $\tmu\kind$ or an abstraction, $\TT$ is not in weak normal form and it must be normalised.\smallskip


\begin{figure}[t!]
\begin{tikzpicture}[thick, node distance = 0.8cm, every node/.style = {font = \small}]
\node at (0.6,1.1) (T) {$\tabs {\upsilon_1} \tkind U$};
\node at (2.6,1.1) (U) {$\UT$};
\node at (4.6,0) (Un)
{$\semit{\intchoice\records{\donel\colon\End, \morel\colon\OUTn{\upsilon_1}}}{\tdual \UT}$};
\node at (5.6,1.1) (EDU)
{$\semit\End{\tdual U}$};
\node at (7.8,1.1) (S1) {$\Skip$};
\node at (5.5,-1.1) (SaDU)
{$\semit{\OUTn{\upsilon_1}}{\tdual U}$};
\node at (6.6,-2.2) (a) {$\typec{\upsilon_1}$};
\node at (6.6,-3.4) (S2) {$\Skip$};
\node at (4.6,-2.2) (DU) {$\tdual U$};
\node at (0,-2.2) (DUn)
{$\semit{\extchoice\records{\donel\colon\End, \morel\colon\INn{\upsilon_1}}}{\tdual{(\tdual \UT)}}$};
\node at (-0.6,-3.4) (EDDU)
{$\semit\End{\tdual{(\tdual U)}}$};
\node at (-3.3,-3.4) (S3) {$\Skip$};
\node at (-0.6,-1.1) (RaDDU)
{$\semit{\INn{\upsilon_1}}{\tdual{(\tdual U)}}$};
\node at (-1.6,0) (a2) {$\typec{\upsilon_1}$};
\node at (-1.6,1.1) (S4) {$\Skip$};
\node at (0,0) (DDU) {$\tdual{(\tdual U)}$};

\draw[->] (T) -- node [above] {\scriptsize $\tsymc{\lambda{\upsilon_1}\colon\tkind}$} (U);
\draw[double, ->] (U) -- (Un);
\draw[->] (Un) -- node [right] {\scriptsize $\tsymc{\intchoicep\{\dones,\mores\}_1}$}(EDU);
\draw[->] (EDU) -- node [above] {\scriptsize $\tsymc{\Endl}$} (S1);
\draw[->] (Un) -- node [right] {\scriptsize $\tsymc{\intchoicep\{\dones,\mores\}_2}$}(SaDU);
\draw[->] (SaDU) -- node [right] {\scriptsize $\tsymc{!_1}$} (a);
\draw[->] (a) -- node [right] {\scriptsize $\tsymc{{\upsilon_1}}$} (S2);
\draw[->] (SaDU) -- node [right] {\scriptsize $\tsymc{!_2}$} (DU);
\draw[double, ->] (DU) -- (DUn);
\draw[->] (DUn) -- node [left] {\scriptsize $\tsymc{\extchoicep\{\dones,\mores\}_1}$} (EDDU);
\draw[->] (EDDU) -- node [above] {\scriptsize $\tsymc{\Endl}$} (S3);
\draw[->] (DUn) -- node [left] {\scriptsize $\tsymc{\extchoicep\{\dones,\mores\}_2}$} (RaDDU);
\draw[->] (RaDDU) -- node [left] {\scriptsize $\tsymc{?_1}$} (a2);
\draw[->] (a2) -- node [left] {\scriptsize $\tsymc{{\upsilon_1}}$} (S4);
\draw[->] (RaDDU) -- node [left] {\scriptsize $\tsymc{?_2}$} (DDU);
\draw[double, ->] (DDU) -- (Un);

\end{tikzpicture}
\caption{The LTS for type $\tabs {\upsilon_1} \tkind U$. Normalisation
  $\normalred{T_1}{T_2}$ is represented as $\typec{T_1}\Rightarrow\typec{T_2}$ and $\UT$ is a shorthand for type
  $\tmu\skind\, \typec(\tabs {\upsilon_2} \skind
  {\semit{\intchoice\records{\donel\colon\End, \morel\colon\OUTn{\upsilon_1}}}{\tdual
      {\upsilon_2}}}\typec)$.}
\label{fig:lts-example}
\end{figure}


Let us clarify our LTS rules with an example. Consider the following type
$\tabs {\upsilon_1} \tkind {\tmuinfix {\upsilon_2} \skind
  {\semit{\intchoice\records{\donel\colon\End,
        \morel\colon\OUTn{\upsilon_1}}}{\tdual {\upsilon_2}}}}$ and call it
$\TT$. Clearly $\TT$ is of kind $\karrow\tkind\skind$.
For a given functional type $\typec{\upsilon_1}$, this type specifies a channel that alternates between: offer a choice and output a value of type $\typec{\upsilon_1}$; or select a choice and input a value of type $\typec{\upsilon_1}$. The polarity is swapped thanks to the application of constant $\Dual$ to the recursion variable $\typec{\upsilon_2}$.
To construct the (fragment of the) LTS generated by this type, let us first
desugar $\TT$ into $\tabs {\upsilon_1} \tkind U$ where $\UT$ is the type $\tmu\skind\,\typec(\tabs {\upsilon_2} \skind {\semit{\intchoice\records{\donel\colon\End, \morel\colon\OUTn{\upsilon_1}}}{\tdual {\upsilon_2}}}\typec)$.
%
%
Notice that $\UT$ normalises to
${\semit{\intchoice\records{\donel\colon\End, \morel\colon\OUTn{\upsilon_1}}}{\tdual \UT}}$.
The LTS for the example is sketched in \cref{fig:lts-example}. In this case,
only finitely many types appear. However, more elaborate examples involving
sequential composition or higher-order recursion may lead to an infinite graph
of transitions.\smallskip

Given the LTS rules, we can define, in the standard way, a notion of bisimulation.
\label{def:bisimulation}
  A binary relation $R$ on types is called a \emph{bisimulation} if, for every
  $(\TT,\UT)\in R$ and every transition label $\tsymc a$:
\begin{enumerate}
	\item if $\ltsred Ta{T'}$, then there exists $\typec{U'}$ \st $\ltsred Ua{U'}$ and $(\typec{T'},\typec{U'})\in R$;
	\item if $\ltsred Ua{U'}$, then there exists $\typec{T'}$ \st $\ltsred Ta{T'}$ and $(\typec{T'},\typec{U'})\in R$.
\end{enumerate}
We say that types $\TT$ and $\UT$ are bisimilar, written $\isbisim TU$, if there exists a bisimulation $R$ such that $(\TT,\UT)\in R$. 

Intuitively, a notion of type equivalence must preserve and reflect the syntax of type constructors: for example, a type $\function TU$ is equivalent to a type $\function{T'}{U'}$ iff $\TT$, $\typec{T'}$ are equivalent and $\UT$, $\typec{U'}$ are equivalent. 
Using the bisimulation technique, we achieve this by considering a labelled transition system on types: $\function TU$ has a transition labelled $\tsymc{\rightarrow_1}$ to $\TT$ and a transition labelled $\tsymc{\rightarrow_2}$ to $\UT$. 
In this way, $\function TU$ can only be equivalent to another type which has two transitions with those same labels. 
For each of the type constructors (
$\typec\rightarrow$, $\typec{\forall_\kind}$, 
$\typec{!}$, $\typec{?}$, $\typec{\odot{\{\overline{\ell_i}\}}}$,
and so on) we have suitable transition rules. 
Moreover, a type sometimes needs to be reduced before a type constructor is found at the root of the syntax tree. 
If $\TT$ normalizes to $\UT$, then we expect $\TT$ and $\UT$ to be bisimilar, which is achieved thanks to rule \rulename{L-Red}. 
This handles the various reductions: 
beta-reductions arising from lambda-abstraction and applications (e.g., $\tapp{(\tabs\alpha\kind T)}{U}$ reduces to $\subs\UT\alpha \TT$), 
reductions arising from the monoidal structure of sequential composition (e.g., $\semit\Skip\TT$ reduces to $\TT$), 
reductions arising from the internalization of duality as a type constructor (e.g., $\tdual{(\OUTn\TT)}$ reduces to $\INn\TT$) 
and reductions arising from the recursion (e.g., $\tapp{\mu_\kind}\TT$ reduces to $\tapp \TT {(\tapp{\mu_\kind}\TT)}$).

Let us look at some examples of bisimilarity.

\begin{itemize}
\item Consider the types $\TT = \typec{\alpha\ (! \Int)}$ and
  $\UT = \semit{\alpha\ (! \Int)}{\Skip}$. These are bisimilar, since they both
  exhibit a transition with label $\tsymc{\alpha_1}$ to $\OUTn\Int$ (rules
  \rulename{L-Var2} and \rulename{L-VarSeq2} resp.) and another transition with
  label $\tsymc{\alpha_0}$ to $\Skip$ (rules \rulename{L-Var1} and
  \rulename{L-VarSeq1} resp.). In general, two types $\talpha\ \TT$ and
  $\semit{\alpha\ \UT}{\VT}$ are bisimilar iff $\isbisim{\TT}{\UT}$ and
  $\isbisim{\Skip}{\VT}$. This justifies our choice for the rules with
  variables.
\item \sloppy{Similar considerations apply to $\MSGn T$ vs $\semit {\MSGn T} \Skip$ and
  to $\tdual{(\alpha\ T_1\ldots T_m)}$ vs
  $\semit{\tdual{(\alpha\ T_1\ldots T_m)}}{\Skip}$.}
\item For internal and external choices, the situation is slightly different.
  Types $\typec{\odot\records{\gol\colon\semit TV, \quitl\colon\semit UV}}$ and
  $\semit{\odot\records{\gol\colon T, \quitl\colon U}}V$ are bisimilar. The
  former exhibits transitions $\tsymc{\odot\{\gos,\quits\}_1}$ and $\tsymc{\odot\{\gos,\quits\}_2}$ to
  $\semit TV$ and $\semit UV$ resp., due to rule \rulename{L-ConstApp}. The
  latter exhibits the same transitions, but due to rule \rulename{L-ChoiceSeq}
  instead. This is as desired (distributivity of sequential composition against
  choice) and justifies our choice for those rules.
\item $\End$ is bisimilar to any type $\semit \End U$, due to our choice of rules \rulename{L-Const} and \rulename{L-EndSeq}.
\item $\Skip$ is the only well-formed type in weak head normal form that has no transitions.
  Hence, if $\istype \Delta T \kind$, we have $\isbisim T \Skip$ iff $\normalred{T}{\Skip}$.
\item Consider the types $\tabs\alpha\kind T$ and $\tabs\beta\kind U$. 
Before generating the LTS, we must rename them to $\tabs{\upsilon_i}\kind{\subs{\upsilon_i}\alpha T}$ and $\tabs{\upsilon_j}\kind{\subs{\upsilon_j}\beta U}$. 
According to rule \rulename{L-Abs}, the renamed types are bisimilar only if $\typec{\upsilon_i}=\typec{\upsilon_j}$, in which case we check the equivalence of $\subs{\upsilon_i}\alpha \TT$ and $\subs{\upsilon_j}\beta \UT$.
Here, the choice of $\typec{\upsilon_i}$ depends only on $\TT$ (not on $\UT$). 
This allows us to define an LTS for $\TT$ (independently of $\UT$), and an LTS for $\UT$ (independently of $\TT$), in a way that the types are equivalent iff the corresponding LTS are bisimilar. 
Without using renaming, we would have to somehow ``remember'' that variables $\typec\alpha$, $\typec\beta$ should be ``linked'', which would make the generation of independent LTS impossible. 
This is the main technical issue solved by our notion of renaming.
\end{itemize}

Our notion of type equivalence enjoys natural properties and behaves as expected with respect to the notions of reduction, normalisation and kinding from \cref{sec:types}. 
We can derive rules for type equivalence, that could be used to define another coinductive notion of equivalence, via effective syntax-directed rules. 
In other words, one could prove that $\sim$ is the greatest relation satisfying properties 1-15 below.

\begin{lemma}
    \ 
    \begin{enumerate}
    \item $\isbisim T\alpha$ iff $\normalred T\alpha$.
    \item $\isbisim T{\tabs\alpha\kind U}$ iff $\normalred T{\tabs\alpha\kind V}$ for some $\VT$ \st $\isbisim UV$.
    \item $\isbisim T\End$ iff
        \begin{itemize}
        \item $\normalred T\End$; or
        \item $\normalred T{\semit\End U}$ for some $\UT$.
        \end{itemize}
    \item Let $\tiota$ be a constant other than $\End$. Then $\isbisim T\iota$ iff $\normalred T\iota$.
    \item $\isbisim T{\alpha\ U_1\ldots U_m}$ iff:
        \begin{itemize}
        \item $\normalred T{\alpha\ U'_1\ldots U'_m}$ for some $\typec{U'_1},\ldots,\typec{U'_m}$ \st $\isbisim{U_j}{U'_j}$ for each $j$; or
        \item $\normalred T{\semit{(\alpha\ U'_1\ldots U'_m)}W}$ for some $\typec{U'_1},\ldots,\typec{U'_m},\VT$ \st $\isbisim{U_j}{U'_j}$ and $\isbisim \Skip V$.
        \end{itemize}
    \item $\isbisim T{\function UV}$ iff $\normalred T{\function{U'}{V'}}$ for some $\typec{U'},\typec{V'}$ \st $\isbisim U{U'}$ and $\isbisim V{V'}$.
    \item $\isbisim T{\forall_\kind U}$ iff $\normalred T{\forall_\kind U'}$ for some  $\typec{U'}$ \st $\isbisim U{U'}$.
    \item $\isbisim T{\varrecf \ell {U_\ell} L}$ iff $\normalred T{\varrecf \ell {U'_\ell} L}$ for some $\typec{U'_\ell}$ \st $\isbisim{U_\ell}{U'_\ell}$ for each $\ell$.
    \item $\isbisim T{\MSGn U}$ iff:
        \begin{itemize}
        \item $\normalred T{\MSGn{U'}}$ for some $\typec{U'}$ \st $\isbisim U{U'}$; or
        \item $\normalred T{\semit{\MSGn{U'}}V}$ for some $\typec{U'},\VT$ \st $\isbisim U{U'}$ and $\isbisim\Skip V$.
        \end{itemize}
    \item $\isbisim T{\odot\recordf \ell {U_\ell} L}$ iff:
        \begin{itemize}
        \item $\normalred T{\odot\recordf \ell {U'_\ell} L}$ for some $\typec{U'_\ell}$ \st $\isbisim {U_\ell}{U'_\ell}$ for each $\ell$; or
        \item $\normalred T{\semit{\odot\recordf \ell {U'_\ell} L}V}$ for some $\typec{U'_\ell},\VT$ \st $\isbisim {U_\ell}{\semit{U'_\ell}V}$ for each $\ell$.
        \end{itemize}
    \item $\isbisim T{\semit\End U}$ iff
        \begin{itemize}
        \item $\normalred T\End$; or
        \item $\normalred T{\semit\End U'}$ for some $\typec{U'}$.
        \end{itemize}
    \item $\isbisim T{\semit{\MSGn U}V}$ iff:
        \begin{itemize}
        \item $\normalred T{\MSGn{U'}}$ for some $\typec{U'}$ \st $\isbisim U{U'}$ and $\isbisim V\Skip$; or
        \item $\normalred T{\semit{\MSGn{U'}}{V'}}$ for some $\typec{U'},\typec{V'}$ \st $\isbisim U{U'}$ and $\isbisim V{V'}$.
        \end{itemize}
    \item $\isbisim T{\semit{\odot\recordf \ell {U_\ell} L}V}$ iff:
        \begin{itemize}
        \item $\normalred T{\odot\recordf \ell {U'_\ell} L}$ for some $\typec{U'_\ell}$ \st $\isbisim {\semit{U_\ell}V}{U'_\ell}$ for each $\ell$; or
        \item $\normalred T{\semit{\odot\recordf \ell {U'_\ell} L}V'}$ for some $\typec{U'},\typec{V'}$ \st $\isbisim {\semit{U_\ell}V}{\semit{U'_\ell}V}$ for each $\ell$.
        \end{itemize}
    \item $\isbisim T{\tdual{(\alpha\ U_1\ldots U_m)}}$ iff
        \begin{itemize}
        \item $\normalred T{\tdual{(\alpha\ U'_1\ldots U'_m)}}$ for some $\typec{U'_j}$ \st $\isbisim {U_j}{U'_j}$ for each $j$; or
        \item $\normalred T{\semit{\tdual{(\alpha\ U'_1\ldots U'_m)}}V}$ for some $\typec{U'_j},\VT$ \st $\isbisim {U_j}{U'_j}$ for each $j$ and $\isbisim\Skip V$.
        \end{itemize}
    \item $\isbisim T{\semit{\tdual{(\alpha\ U_1\ldots U_m)}}V}$ iff
        \begin{itemize}
        \item $\normalred T{\tdual{(\alpha\ U'_1\ldots U'_m)}}$ for some $\typec{U'_j}$ \st $\isbisim {U_j}{U'_j}$ for each $j$ and $\isbisim V\Skip$; or
        \item $\normalred T{\semit{\tdual{(\alpha\ U'_1\ldots U'_m)}}V'}$ for some $\typec{U'_j},\typec{V'}$ \st $\isbisim {U_j}{U'_j}$ for each $j$ and $\isbisim V{V'}$.
        \end{itemize}
    \end{enumerate}
    \end{lemma}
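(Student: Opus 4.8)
The plan is to reduce all fifteen equivalences to a single preliminary observation about the labelled transition system and then to dispatch each item by a routine analysis of the finitely many transitions of the right-hand type. The key preliminary fact is that the transitions of a type coincide with those of its weak head normal form. Indeed, if $\betared{T}{U}$ then, by rule \rulename{L-Red}, $T$ inherits exactly the outgoing transitions of $U$; and, as the narrative following \cref{fig:lts} makes explicit, every other rule of \cref{fig:lts} applies only to a type already in weak head normal form. Combining this with confluence and weak normalisation from \cref{thm:normalisation} (so that $\normalred{T}{T'}$ holds for a unique $T'$) and with \cref{thm:whnfresults} (weak head normal form coincides with irreducibility), I would first prove that $\ltsred{T}{a}{V}$ iff $\ltsred{T'}{a}{V}$ whenever $\normalred{T}{T'}$, and hence that $\isbisim{T}{T'}$. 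This lets me assume throughout that the left-hand type has been normalised, and it is exactly this step that licenses the occurrences of $\normalred{T}{\cdots}$ in every clause.

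With this transition lemma in hand, each item is an instance of the same schema, so I would carry them out by a uniform case analysis on the shape of the right-hand type $U$, noting that in every clause $U$ is already in weak head normal form (for example $\semit{\MSGn{U}}{V}$, $\semit{\End}{U}$ and $\semit{\odot\recordf{\ell}{U_\ell}{L}}{V}$ are all irreducible by \cref{thm:whnfresults}, hence their transitions are read off directly). For the forward direction I read the finite set of labelled transitions of $U$ straight from \cref{fig:lts}; since $\isbisim{T}{U}$, the normal form $T'$ with $\normalred{T}{T'}$ must offer exactly the same labels, and an inspection of the whnf-generating rules of \cref{fig:whnf} (pruned by kinding, which for instance forces $\tapp{\Semi}{T_1\ldots T_m}$ to have $m\le 2$) shows that the only shapes of $T'$ producing that label set are the ones appearing in the disjuncts; matching the targets then yields the stated component bisimilarities. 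For the backward direction I would assemble an explicit bisimulation: from the witnessing bisimulations for each component hypothesis (e.g.\ $\isbisim{U_j}{U'_j}$), together with the reduction-closure supplied by the transition lemma, the relation containing $(T,U)$ and all the component pairs is closed under the bisimulation game, whence $\isbisim{T}{U}$. I would present two or three representative cases in full — a variable-application clause (item~5), a choice clause (item~10) and a message-with-continuation clause (item~12) — and observe that the remaining clauses follow by identical bookkeeping.

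The main difficulty will be the clauses mixing sequential composition with choices and messages (items~10, 12, 13) together with the absorbing behaviour of $\End$ (items~3, 11). Here two distinct whnf shapes of $T'$ can emit the same labels but with different targets: a bare constructor such as $\odot\recordf{\ell}{U'_\ell}{L}$ versus a sequential composition $\semit{\odot\recordf{\ell}{U'_\ell}{L}}{V}$, the latter pushing the continuation $V$ into each branch via rule \rulename{L-ChoiceSeq}. Establishing that these are the only two possibilities — and that the continuation is distributed correctly, so that the component condition is $\isbisim{U_\ell}{\semit{U'_\ell}{V}}$ rather than $\isbisim{U_\ell}{U'_\ell}$ — requires an exhaustive reading of \cref{fig:whnf} together with the sequential rules of \cref{fig:lts}, and is where the bulk of the verification lies. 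The $\End$ cases are delicate for the dual reason: $\semit{\End}{U}$ discards its continuation entirely via rule \rulename{L-EndSeq}, so I must check that the single label it emits is exactly the one emitted by $\End$ under rule \rulename{L-Const}, which is what makes the tail $U$ irrelevant to equivalence and accounts for the two-way split in items~3 and~11.
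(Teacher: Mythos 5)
You should know at the outset that the paper offers no proof of this lemma to compare against: it is stated bare, with the surrounding prose only remarking that ``one could prove that $\sim$ is the greatest relation satisfying properties 1--15 below.'' Judged on its own merits, your plan is sound and is the natural argument. Your first stage---that a type has exactly the transitions of its weak head normal form, because \rulename{L-Red} is the only rule applicable to a reducible type, whence $\isbisim{T}{T'}$ whenever $\normalred{T}{T'}$---is precisely item~2 of \cref{lem:bisimprop}, which the paper proves (immediately \emph{after} this lemma) by the very observation you make; so your scaffolding is consistent with the paper's own machinery, just reordered. Your forward direction (match label inventories, then enumerate the whnf shapes of \cref{fig:whnf} that can emit those labels) and backward direction (assemble an explicit bisimulation from the component witnesses) both go through, and you correctly identify that the whole content of the lemma lies in the clauses where two distinct whnf shapes emit the same labels with differently-distributed targets (items 3, 10, 11, 12, 13).

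Two points need explicit care in the write-up. First, the forward directions presuppose that $T$ normalises; this comes either from weak normalisation in \cref{thm:normalisation}, which requires $T$ to be well-kinded, or, absent kinding, from the observation that a transition derivation is a finite tree, so any type possessing a transition reduces to a whnf in finitely many steps, while a type with no transitions cannot be bisimilar to any of the fifteen right-hand shapes, each of which has at least one transition---so the implication is vacuous there. You gesture at both ingredients but should pick one and state it. Second, items 3 and 11 depend on the label emitted by \rulename{L-EndSeq} being literally the same symbol as the label emitted by \rulename{L-Const} at $\End$; the paper's notation is loose on this point (it writes the two labels differently), but its worked example ``$\End$ is bisimilar to any type $\semit\End U$'' confirms the intended identification, and you rightly flag this as a check rather than assuming it silently. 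With those caveats discharged, your proof is correct.
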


Next, we show that type equivalence is preserved under renaming, reduction and normalization. 

\begin{lemma}\label{lem:bisimprop}
\ 
\begin{enumerate}
\item\label{lem:bisimrename} If $\isbisim TU$, then $\isbisim{\renameb{T}}{\renameb{U}}$.
\item\label{lem:bisimred} If $\betared TU$, then $\isbisim TU$. If $\normalred TU$, then $\isbisim TU$.
\item\label{lem:bisimnormal} Suppose that $\normalred T{T'}$ and $\normalred U{U'}$. Then $\isbisim TU$ iff $\isbisim {T'}{U'}$.
\end{enumerate}
\end{lemma}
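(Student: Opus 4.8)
The plan is to prove parts~\ref{lem:bisimred} and~\ref{lem:bisimnormal} together as one package, and then attack part~\ref{lem:bisimrename}, which is the genuinely hard one. The engine for the first two is the relation
\[
  R = \{(\TT,\UT) : \normalred TW \text{ and } \normalred UW \text{ for some } \WT\},
\]
``$\TT$ and $\UT$ share a common weak head normal form''. By the weak normalisation clause of \cref{thm:normalisation} every well-formed type has a whnf and it is unique, so $R$ is reflexive, symmetric and transitive. I would show that $R$ is a bisimulation; everything else then falls out. If $\betared TU$, then confluence (\cref{thm:normalisation}) together with uniqueness of the whnf forces $\TT$ and $\UT$ to normalise to the same $\WT$, so $(\TT,\UT)\in R$ and hence $\isbisim TU$; and $\normalred TU$ is simply the special case where $\UT$ is already the whnf of $\TT$ (recall any whnf normalises to itself), which gives the second half of \ref{lem:bisimred}. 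Part~\ref{lem:bisimnormal} is then immediate: from $\normalred T{T'}$ and $\normalred U{U'}$ we obtain $\isbisim T{T'}$ and $\isbisim U{U'}$ by \ref{lem:bisimred}, and since $\sim$ is an equivalence relation we conclude $\isbisim TU$ iff $\isbisim{T'}{U'}$.

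The crux of showing that $R$ is a bisimulation is a transition-tracking lemma: if $\normalred TW$ then (i) every transition $\ltsred TaV$ is matched by a transition $\ltsred Wa{V_0}$ of the whnf with $(\VT,\typec{V_0})\in R$, and (ii) conversely every transition of $\WT$ lifts to a transition of $\TT$ with the identical target. Direction (ii) is just repeated application of rule \rulename{L-Red} along the reduction $\betareds TW$. Direction (i) I would prove by induction on the \emph{height} of the derivation of $\ltsred TaV$. If the last rule is anything other than \rulename{L-Red}, then $\TT$ matches a fixed non-redex form, hence $\TT\centernot\betaarrow$ and $\iswhnf\TT$ by \cref{thm:whnfresults}, so $\TT=\WT$ by uniqueness and the transition is literally a transition of $\WT$. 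If the last rule is \rulename{L-Red}, then $\lambdared T{T_1}$ and $\ltsred{T_1}aV$ by a shorter derivation, where $\typec{T_1}$ is well formed by Preservation; since $\betared T{T_1}$ and $\WT$ is irreducible, confluence forces $\normalred{T_1}W$, so the induction hypothesis applied to $\typec{T_1}$ delivers the required transition of $\WT$. With this lemma the bisimulation game closes: a transition of $\TT$ is pushed down to the shared whnf $\WT$ and then pulled back up to $\UT$. The payoff of inducting on derivation height (rather than reduction length) is that I never invoke strong normalisation, only the weak normalisation and confluence already in hand.

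For part~\ref{lem:bisimrename} the obstacle is that the LTS is deliberately name-sensitive: rule \rulename{L-Abs} emits the label $\tsymc{\lambda\alpha\colon\kind}$ and the variable rules emit labels $\tsymc{\alpha_i}$, so $\TT$ and $\renameb\TT$ are in general \emph{not} bisimilar and I cannot simply quotient by alpha-conversion. Instead, given a bisimulation $R$ witnessing $\isbisim TU$, I would build the candidate $R'=\{(\renameb A,\renameb B) : (A,B)\in R\}$ (more precisely its parametrised form using $\rename_S$, carrying the set $S$ of unavailable names so that the two sides are renamed in lockstep) and show $R'$ is a bisimulation. This rests on a correspondence lemma between the transitions of $\TT$ and of $\renameb\TT$: each transition of $\renameb\TT$ arises from one of $\TT$ by applying the same canonical renaming to both label and target, and vice versa, which I would establish by induction on the derivation of $\ltsred{\renameb T}aV$, reusing parts~\ref{lem:bisimred}--\ref{lem:bisimnormal} to pass through \rulename{L-Red} (renaming is invoked inside $\rbeta$ but commutes with normalisation up to $R$).

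I expect the name-bookkeeping in this last step to be the principal obstacle. The two delicate cases are the \rulename{L-Abs} case, where the bound name $\typec\alpha$ is rewritten to $\first_S(\tabs\alpha\kind T)$ and one must check that two $R$-related, hence structurally matched, abstractions force the \emph{same} canonical choice so that the relabelled $\tsymc{\lambda\alpha\colon\kind}$-labels still agree; and the application case of \cref{fig:rename}, where $\rename_{S\cup\fv(\UT)}(\TT)$ threads an enlarged name set, requiring me to verify that the threading stays consistent across the two related types precisely because $R$-relatedness pins down their free-variable and transition structure. Once this correspondence is in place, the fact that $R$ is a bisimulation transfers to $R'$ essentially verbatim, yielding $\isbisim{\renameb\TT}{\renameb\UT}$.
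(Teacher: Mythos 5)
Your proposal is correct. For part 1 you follow essentially the same route as the paper: it proves the claim by coinduction, showing that $\{(\renameb{T},\renameb{U}) : \isbisim TU\}$ (implicitly in its $\rename_S$-parametrised form) is a bisimulation, with the abstraction case resolved exactly as you anticipate — bisimilarity forces equal free variables, hence $\first(\TT)=\first(\UT)$, so both bound variables are renamed to the same $\typec{\upsilon_k}$; your two ``delicate cases'' (the \rulename{L-Abs} name choice and the threading of $S\cup\fv(\UT)$ in applications) are precisely the points the paper handles, the second only implicitly. For parts 2 and 3 your decomposition is genuinely different. The paper argues per reduction step: if $\betared TU$ then, by \rulename{L-Red}, $\TT$ and $\UT$ have the same transitions and are therefore bisimilar; $\normalred TU$ then follows by chaining single steps using transitivity of $\sim$, and part 3 is pure equivalence-relation bookkeeping. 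You instead exhibit one master relation $R$ (``share a common whnf'') and prove it is a bisimulation via a transition-tracking lemma, by induction on the height of the derivation of $\ltsred TaV$. What your route buys is rigour at exactly the point the paper elides: a transition of a reducible $\TT$ may, via \rulename{L-Red}, factor through a reduct different from $\UT$, so the paper's ``and vice-versa'' secretly needs confluence plus uniqueness of the whnf — your derivation-height induction makes this precise, using only weak normalisation and confluence from \cref{thm:normalisation} and the characterisation $\iswhnf T$ iff $\TT\centernot\betaarrow$ from \cref{thm:whnfresults}, and it yields parts 2 and 3 simultaneously from a single bisimulation. What the paper's route buys is brevity: once bisimilarity of one-step reducts is granted, everything else is transitivity, with no auxiliary relation or lemma required.
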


\begin{proof}
\cref{lem:bisimrename} follows by coinduction on $\isbisim{T}{U}$, \ie we prove that the set $\{(\renameb{\TT},\renameb{\UT}) : \isbisim TU\}$ is a bisimulation. 
The only interesting case is that in which $\TT=\tabs\alpha\kind{T'}$ and $\UT=\tabs\beta\kindp{U'}$. 
Since $\isbisim TU$, we get that $\talpha=\tbeta$, $\kind=\kindp$ and $\isbisim{T'}{U'}$. 
It is straightforward to see that $\isbisim{T}{U}$ implies $\fv(\TT)=\fv(\UT)$ since free variables are eventually `seen' in the process of consuming $\TT$, $\UT$. 
Therefore, $\first(\TT)=\first(\UT)=\typec{\upsilon_k}$ for some $k$, and $\renameb{\TT}=\tabs{\upsilon_k}\kind{\renameb{T'\subs{\upsilon_k}{\alpha}}}$, $\renameb{\UT}=\tabs{\upsilon_k}\kind{\renameb{U'\subs{\upsilon_k}{\alpha}}}$. 
Finally, we apply \rulename{L-Abs} to the pair $(\renameb{\TT},\renameb{\UT})$, arriving at the pair $(\renameb{\typec{T'\subs{\upsilon_k}{\alpha}}},\renameb{\typec{U'\subs{\upsilon_k}{\alpha}}})$.
This lies in our set since $\isbisim{T'}{U'}$ implies $\isbisim{T'\subs{\upsilon_k}{\alpha}}{U'\subs{\upsilon_k}{\alpha}}$.

The first part of \cref{lem:bisimred} is straightforward: if $\betared TU$, then by \rulename{L-Red} every transition $\ltsred TaV$ is matched by a transition $\ltsred UaV$ and vice-versa. To prove the second part of \cref{lem:bisimred}, assume that $\normalred TU$ and consider a finite sequence of reductions $\TT=\betared{T_0}{T_1}\betared{}{\ldots}\betared{}{T_n}=\UT$. At each step of the sequence we have $\isbisim{T_k}{T_{k+1}}$ and thus (since $\isbisim{}{}$ is an equivalence relation) we get $\TT=\isbisim{T_0}{T_{n}}=\UT$.

\cref{lem:bisimnormal} is a direct consequence of \cref{lem:bisimred}. $\normalred T{T'}$ and $\normalred U{U'}$imply $\isbisim T{T'}$ and $\isbisim U{U'}$. Since $\isbisim{}{}$ is an equivalence relation, we get $\isbisim TU$ iff $\isbisim {T'}{U'}$.

\end{proof}

Finally, we show that the axioms for sequential composition in the introduction \eqref{eq:axiomssemi} are derivable from our notion of bisimulation. 

\begin{lemma}\label{lem:axsemi}
\ 
\begin{enumerate}
  \item\label{lem:axsemineutral} $\isbisim{\semit\Skip T}{T}$ (Neutral element).
  \item\label{lem:axsemiabsorb} $\isbisim{\semit\End T}{\End}$ (Absorbing element).
  \item\label{lem:axsemiassoc} $\isbisim{\semit{(\semit TU)}{V}}{\semit{T}{(\semit UV)}}$ (Associativity).
  \item\label{lem:axsemidistr} $\isbisim{\semit{\tchoice{l_i}{T_i}} U}{\tchoice{l_i}{\semit{T_i}U}}$ (Distributivity).
\end{enumerate}
\end{lemma}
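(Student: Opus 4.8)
The plan is to separate the four identities into those that coincide with an existing type reduction and those that must be proved directly on the labelled transition system. \emph{Neutral element} and \emph{Associativity} each match a reduction rule, so the metatheory already in place applies immediately; \emph{Absorbing element} and \emph{Distributivity} have no corresponding reduction and so I would establish them by exhibiting explicit bisimulations.

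For \emph{Neutral element}, rule \rseqone gives $\betared{\semit\Skip T}{T}$, and \cref{lem:bisimred} (if $\betared TU$ then $\isbisim TU$) yields $\isbisim{\semit\Skip T}{T}$ at once. Symmetrically, for \emph{Associativity} rule \rassoc gives $\betared{\semit{(\semit TU)}V}{\semit T{(\semit UV)}}$, and \cref{lem:bisimred} closes the case. Neither requires any further work.

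For \emph{Absorbing element} I would first observe that $\semit\End T$ is in weak head normal form: by \wseqtwo, since $\End$ is a constant and hence in whnf and $\End\neq\Skip,\semit{}{}$. Thus no \rulename{L-Red} step intervenes and, inspecting the $\Semi$-headed LTS rules, its unique outgoing transition is the one produced by \rulename{L-EndSeq}, a step to $\Skip$. The type $\End$ is likewise in weak head normal form and its unique transition, by \rulename{L-Const}, also targets $\Skip$ under the same ($\End$) label. Hence $R=\{(\semit\End T,\End)\}\cup\{(\Skip,\Skip)\}$ is a bisimulation: the matching transitions of the first pair land in $(\Skip,\Skip)$, and $\Skip$ emits no transitions, so the second pair is vacuously sound. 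This gives $\isbisim{\semit\End T}{\End}$.

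For \emph{Distributivity} I would note that both $\semit{\tchoice{l_i}{T_i}}U$ and $\tchoice{l_i}{\semit{T_i}U}$ are already in weak head normal form (by \wseqtwo and \wconsone respectively), so the intended rules fire without normalisation. The left-hand side uses \rulename{L-ChoiceSeq}, giving for each $j$ a transition $\ltsred{\semit{\tchoice{l_i}{T_i}}U}{\odot\{\overline{l_i}\}_j}{\semit{T_j}U}$, while the right-hand side uses \rulename{L-ConstApp} with the same head constant $\odot\{\overline{l_i}\}$, giving for each $j$ a transition with the identical label to the syntactically identical target $\semit{T_j}U$. Since the two types then have exactly the same transition structure, the diagonal relation augmented with the pair $(\semit{\tchoice{l_i}{T_i}}U,\tchoice{l_i}{\semit{T_i}U})$ is a bisimulation, and the argument applies uniformly whether $\odot$ stands for $\extchoice$ or $\intchoice$. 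The only point demanding attention, and the main (mild) obstacle throughout, is checking that each relevant type is already in weak head normal form so that \rulename{L-EndSeq}, \rulename{L-ChoiceSeq} and \rulename{L-ConstApp} apply directly rather than behind an \rulename{L-Red} premise; this is dispatched by a direct appeal to the whnf rules of \cref{fig:whnf}.
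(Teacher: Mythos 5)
Your proof is correct and takes essentially the same approach as the paper: the neutral-element and associativity cases follow from \rseqone and \rassoc together with \cref{lem:bisimred} of \cref{lem:bisimprop}, and the absorbing-element and distributivity cases follow by matching the transitions given by \rulename{L-EndSeq}/\rulename{L-Const} and \rulename{L-ChoiceSeq}/\rulename{L-ConstApp}. The explicit bisimulation relations and weak-head-normal-form checks you add merely spell out details the paper leaves implicit.
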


\begin{proof}
\cref{lem:axsemineutral} follows from the observation that $\lambdared{\semit \Skip T}{T}$ (\rseqone), together with \cref{lem:bisimred} of \cref{lem:bisimprop}.
\cref{lem:axsemiabsorb}, follows from the observation that $\ltsred{\semit\End T}{\Endl}{\Skip}$ (\rulename{L-EndSeq}) and $\ltsred{\End}{\Endl}{\Skip}$ (\rulename{L-Const}). 
\cref{lem:axsemiassoc} follows from the observation that $\lambdared{\semit{(\semit TU)}{V}}{\semit{T}{(\semit UV)}}$ (\rassoc), together with \cref{lem:bisimred} of \cref{lem:bisimprop}.
\cref{lem:axsemidistr} follows from the observation that $\ltsred{\semit{\tchoice{l_i}{T_i}} U}{\tsymc{\odot\{\overline{l_i}\}_j}}{\semit{T_j}U}$ (\rulename{L-ChoiceSeq}) and $\ltsred{\tchoice{l_i}{\semit{T_i}U}}{\tsymc{\odot\{\overline{l_i}\}_j}}{\semit{T_j}U}$ (\rulename{L-ConstApp}).
\end{proof}



\section{Decidability of type equivalence}
\label{sec:equivalence-decidable}

This section presents results on decidability of type equivalence. Our approach
consists in translating types to objects in some computational model. We look at
finite-state automata (for types in $\FMu$, $\FMuAstOmega$, $\FMuDot$, and
$\FMuAstDotOmega$), simple grammars (for types in $\FMuSemi$ and
$\FMuAstSemiOmega$) and deterministic pushdown automata (for types in
$\FMuOmega$, $\FMuDotOmega$ and $\FMuSemiOmega$).

  \label{def:grammar}
  We say that a \emph{grammar in Greibach normal form} is a tuple
  $(\T, \N, \nsymc\gamma, 
  \R)$ where: $\T$ is a set of terminal symbols, denoted by
  $\tsymc{a},\tsymc{b},\tsymc{c}$; $\N$ is a set of nonterminal symbols, denoted
  by $\nsymc{X}, \nsymc{Y}, \nsymc{Z}$; $\nsymc\gamma\in \N^\ast$ is the
  starting word;
and $\R \subseteq \N \times \T \times \N^\ast$ is a set of productions. A
grammar is said to be \emph{simple} if, for every nonterminal $\nsymc{ X}$ and every
terminal $\tsymc{a}$, there is at most one production
$(\nsymc{X},\tsymc{a},\nsymc\delta)\in\R$ \cite{DBLP:conf/focs/KorenjakH66}.

Greek letters $\nsymc\gamma$ and $\nsymc\delta$ denote (possibly empty) words of
nonterminal symbols.
Productions are written as $\gltsred Xa\delta$.
We define a notion of bisimulation for grammars via a
labelled transition system. The system comprises a set of states $\N^\ast$
corresponding to words of nonterminal symbols. For each production
$\gltsred Xa\gamma$ and each word of nonterminal symbols
$\nsymc{\delta}$, we have a labelled transition
$\gltsred{X\delta}{a}{\gamma\delta}$. We let $\gequiv$ denote the bisimulation
relation for grammars (the definition is similar to that in \cref{def:bisimulation}).

For the moment we focus on the class $\FMuAstSemiOmega$ and we explain how to
convert a type $\TT$ into a simple grammar
$(\T_{\typec{T}},\N_{\typec{T}},\word(\TT),\R_{\typec{T}})$. 
The conversion is based
on a function $\word(\TT)$ that maps each type $\TT$ into a word of nonterminal
symbols, while introducing fresh nonterminals and productions. In our construction, following the approach by Costa \etal~\cite{DBLP:journals/corr/abs-2203-12877}, we use a nonterminal symbol with no productions, denoted by $\nsymc\bot$, in order to separate the two descendants of a send/receive operation such as $\semit{\OUTn\TT}{\UT}$.
The sequence of nonterminal symbols $\word(\TT)$ is defined as follows. 
First consider the cases in which $\iswhnf\TT$.
\begin{itemize}
\item For any $m\geq 0$:
  $\word(\talpha\ \typec{T_1}\typec{\ldots}\typec{T_m})=\Ynt$ for $\Ynt$ a fresh
  nonterminal symbol with a production $\gltsred Y{\alpha_0}{\emptyword}$ as well as
  $\gltsred Y{\alpha_j}{\wordb{\typec{T_j}}\bot}$ for each $1\leq j\leq m$.
\item $\word(\Skip)=\emptyword$.
\item $\word(\End)=\Ynt$ for $\Ynt$ a fresh symbol with a single
  production $\gltsred Y\Endl{\bot}$.
\item for any $\tiota\neq\Skip,\End$: $\word(\tiota) = \Ynt$ for $\Ynt$ a fresh
  nonterminal symbol with a single production $\gltsred Y\iota\emptyword$.
\item $\word(\tlambda{\alpha}\kind T) = \Ynt$ for $\Ynt$ a fresh 
  symbol with a production $\gltsred Y {\lambda\alpha\colon\kind}{\wordb{\TT}}$.
\item for any $m\geq 1$ and for $\tiota$ one of $\function{}{}$,
  $\tforall\kind$, $\typec{\choice\{\overline{l_i}\}}$, $\typec{\varrecs{\overline{l_i}}}$:
  $\word(\typec{\iota\ T_1\cdots T_m})=\Ynt$ for a fresh nonterminal
  $\Ynt$ with a production $\gltsred Y{\iota_j}{\wordb{\typec{T_j}}}$ for each
  $1\leq j\leq m$.
\item $\word(\MSGn\TT)=\Ynt$ for $\Ynt$ a fresh symbol with productions $\gltsred Y{\tsymc{\sharp_1}}{\wordb{\TT}\bot}$ and $\gltsred Y{\sharp_2}{\emptyword}$.
\item $\word(\Semi\ \TT)=\Ynt$ for $\Ynt$ a fresh symbol with a production $\gltsred Y{;_1}{\wordb{\TT}}$.
\item $\word(\semit \TT \UT) = \word(\TT)\word(\UT)$.
\item $\word(\tdual{(\talpha\ \typec{T_1}\typec{\ldots}\typec{T_m})}) = \Ynt$ for $\Ynt$ a fresh symbol with productions $\gltsred Y{\Duall_1}{\wordb{\talpha\ \typec{T_1}\typec{\ldots}\typec{T_m}}}$ and $\gltsred Y{\Duall_2}\emptyword$.
\end{itemize}
Finally, let us handle the cases where $\TT$ is not in weak head normal form.
\begin{itemize}
\item If $\normalred T\Skip$, then $\word(\TT)=\emptyword$.
\item Otherwise if $\normalred TU\neq\Skip$, then $\word(\TT)=\Ynt$ for $\Ynt$ a fresh nonterminal symbol. Let $\Znt\ntdelta = \word(\UT)$. Then $\Ynt$ has a production $\gltsred Y a {\gamma\delta}$ for each production $\gltsred Z a \gamma$.
\end{itemize}

In the above construction, we create fresh symbols each time we encounter a weak head normal form other than $\Skip$. In other words, $\N_\TT$ is the set containing $\nsymc\bot$ and all nonterminals $\Ynt$ created during the computation of $\word(\TT)$. Another key insight is that the sequential composition of types is translated into a concatenation of words: $\word(\typec{T_1\Semi T_2\Semi\ldots\Semi T_n}) = \word(\typec{T_1})\word(\typec{T_2})\ldots\word(\typec{T_n})$. This allows our construction to terminate: even if the transitions lead to infinitely many types, they are split on the sequential composition operator, and so we only need to consider finitely many subexpressions.

For the last case in our construction to be well-defined, \ie when $\normalred TU\neq\Skip$, we require $\word(\UT)$ to be non-empty. Indeed, if $\iswhnf\UT$, then we can observe (by inspecting all cases) that $\word(\UT)=\emptyword$ iff $\UT=\Skip$.

We also need to argue that the construction of $\word(\TT)$ eventually terminates. For this, we keep track of all types visited during the construction, and we only add a fresh nonterminal $\Ynt$ to our grammar if the type visited is syntactically different from all types visited so far. Therefore, we reuse the same symbol $\Ynt$ with the same productions each time we revisit a type. With all these observations, we get the following result.

\begin{lemma}
\label{lem:sgtermination}
Suppose that $\TT\in \FMuAstSemiOmega$. Then the construction of $\word(\TT)$ terminates producing a simple grammar.
\end{lemma}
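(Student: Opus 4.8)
The plan is to establish two things separately: that the recursive procedure computing $\word(\TT)$ halts, and that the productions it emits meet the determinacy condition of a simple grammar. The termination argument is the substantial one; simplicity will follow by a direct inspection of the clauses.

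\textbf{Termination.} First I would note that each individual call to $\word$ is well founded modulo normalisation: by the weak normalisation part of \cref{thm:normalisation}, computing a weak head normal form of any well-formed type terminates, and once a whnf is reached the construction recurses only on \emph{structural subterms} of that whnf (the arguments $\typec{T_j}$ of an applied head, the body of an abstraction, and the two factors $\TT$, $\UT$ of a sequential composition $\semit\TT\UT$, which are treated independently thanks to the clause $\word(\semit\TT\UT)=\word(\TT)\word(\UT)$). Since the construction reuses the already-assigned nonterminal whenever it revisits a syntactically identical type, termination reduces to showing that only \emph{finitely many} syntactically distinct types are ever presented to $\word$. I would package these as the least set $\mathcal R(\TT)$ containing $\TT$ and closed under (i) passing to the weak head normal form and (ii) the head/subterm decomposition dictated by each whnf case above. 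The whole computation visits only types in $\mathcal R(\TT)$, and each fresh nonterminal corresponds to one such type, so it suffices to prove $\mathcal R(\TT)$ finite.

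\textbf{Finiteness of $\mathcal R(\TT)$ (the crux).} This is where the restriction to $\FMuAstSemiOmega$ is indispensable, and I expect it to be the main obstacle. The danger is recursion: unfolding $\tapp{\tmu\kast}V$ via rule \rmu and then beta-reducing via \rbeta substitutes recursive types into subterms, a priori generating unboundedly many shapes, which sequential composition could moreover nest as in $\semit A{(\semit A{(\cdots)})}$. Two safeguards rescue finiteness. The concatenation clause for $\Semi$ peels sequential factors apart instead of descending into an ever-growing right spine, so no nonterminal accumulates an unbounded tail; and confining recursion to kind $\kast$ bounds the subexpressions of the form $\tapp{\tmu\kast}U$ that can occur in any normal form --- exactly the property already exploited in the proof of \cref{thm:kindingdecidable}. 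I would reuse that bound to show that every whnf reached is, up to substituting the finitely many recursive subterms of $\TT$ for recursion variables, itself a subterm of $\TT$; as the syntactic subterms of a fixed type form a finite set and the pool of substitutable recursive types is finite, $\mathcal R(\TT)$ is finite. Hence only finitely many nonterminals are created and the construction halts.

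\textbf{Simplicity.} Finally I would verify the determinacy condition case by case. Every nonterminal built for a whnf type carries productions with pairwise distinct labels: $\tsymc{\alpha_0},\ldots,\tsymc{\alpha_m}$ for an applied variable, $\tsymc{\sharp_1},\tsymc{\sharp_2}$ for a message, the distinct $\tsymc{\iota_j}$ for an applied constant, $\tsymc{\Duall_1},\tsymc{\Duall_2}$ for a dualised variable, and a single label in every remaining case. In the non-whnf case, where $\normalred\TT\UT\neq\Skip$, the fresh $\Ynt$ merely copies the productions of the leading nonterminal $\Znt$ of $\word(\UT)$, appending the common tail $\ntdelta$; since $\Znt$ was constructed earlier and is deterministic by induction, so is $\Ynt$. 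Because the reuse mechanism assigns each type a nonterminal at most once, each nonterminal's production set is fixed and, by the above, contains at most one production per terminal. Therefore the output is a simple grammar, which completes the proof.
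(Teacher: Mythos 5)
Your proposal takes essentially the same route as the paper --- memoization of visited types, sequential composition translated to concatenation of words so that factors are processed independently rather than accumulated, and the kind-$\kast$ restriction on recursion to bound the subexpressions $\tapp{\tmu\kast}{U}$ arising under normalisation (the same device invoked for \cref{thm:kindingdecidable}) --- and indeed the paper states this lemma with no displayed proof, its justification being exactly the informal discussion preceding it, so your reconstruction is if anything more detailed than the paper's own treatment. The one point to repair is your crux invariant: ``every whnf reached is a subterm of $\TT$ up to substituting recursive subterms for recursion variables'' is literally false, because $\beta$-reduction also substitutes arbitrary (non-recursive) arguments for $\lambda$-bound variables (e.g.\ $\tapp{(\tabs{\alpha}{\tkind}{\function\alpha\alpha})}{\Unit}$ normalises to $\function\Unit\Unit$, which is not of that form), so the invariant must allow substitution of reachable types for all bound variables, with the $\kast$ restriction then guaranteeing that recursion only ever substitutes the same fixed $\mu$-subexpressions and hence that the iterated closure stays finite.
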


We illustrate the above construction with the polymorphic tree exchanging
example from \cref{sec:motivation},
\begin{lstlisting}
type TreeC a = &{Leaf: Skip, Node: TreeC a; ?a ; TreeC a}
\end{lstlisting}
that is written in $\FMuAstSemiOmega$ as
$
\typec{T_0} = \tabs{\upsilon_1}\tkind{\tmuinfix{\upsilon_2}\skind{\extchoice\records{\leafl\colon\Skip,\nodel\colon\semit{{\upsilon_2}}{\semit{\INn{\upsilon_1}}{{\upsilon_2}}}}}}.
$
Since $\typec{T_0}$ is in weak head normal form, $\word(\typec{T_0})$ returns a
fresh symbol, which we call $\nsymc{X_0}$. We also have a production
$\gltsred{X_0}{\lambda{\upsilon_1}\colon\kind}{\wordb{\typec{T_1}}}$, where
$\typec{T_1}$ is the type $\tmuinfix{\upsilon_2}\skind{\extchoice\records{\leafl\colon\Skip,\nodel\colon\semit{{\upsilon_2}}{\semit{\INn{\upsilon_1}}{{\upsilon_2}}}}}$.
Since $\typec{T_1}$ is not in whnf, we must normalise it, to get $\typec{T_2} = \typec{\extchoice\records{\leafl\colon\Skip,\nodel\colon\semit{T_1}{\semit{\INn {\upsilon_1}}{T_1}}}}$. Therefore $\word(\typec{T_1})$ returns a fresh symbol, which we  call $\nsymc{X_1}$. To obtain the transitions of $\nsymc{X_1}$, we must first compute $\word(\typec{T_2})$, which is a fresh symbol $\nsymc{X_2}$ with transitions $\gltsred{X_2}{\&_1}{\wordb{\Skip}}$ and $\gltsred{X_2}{\&_2}{\wordb{\semit{T_1}{\semit{\INn{\upsilon_1}}{T_1}}}}$. Thus we also get $\gltsred{X_1}{\&_1}{\wordb{\Skip}}$ and $\gltsred{X_1}{\&_2}{\wordb{\semit{T_1}{\semit{\INn{\upsilon_1}}{T_1}}}}$.

\begin{sloppypar}
  We have $\word(\Skip)=\emptyword$, but we still need to compute
  $\word(\semit{T_1}{\semit{\INn{\upsilon_1}}{T_1}})$. This type normalises to
  $\typec{T_3}=\semit{T_2}{\semit{\INn{\upsilon_1}}{T_1}}$ since
  $\normalred{T_1}{T_2}$. Thus
  $\word(\semit{T_1}{\semit{\INn{\upsilon_1}}{T_1}})$ is a fresh symbol
  $\nsymc{X_3}$. To obtain the productions of $\nsymc{X_3}$ we must compute
  $\word(\semit{T_2}{\semit{\INn{\upsilon_1}}{T_1}}) =
  \word(\typec{T_2})\word(\INn{\upsilon_1})\word(\typec{T_1})$. At this point we
  already have $\word(\typec{T_1})=\nsymc{X_1}$ and
  $\word(\typec{T_2})=\nsymc{X_2}$.
\end{sloppypar}
We still need to compute $\word(\INn{\upsilon_1})$, which is a fresh symbol $\nsymc{X_4}$ with productions $\gltsred{X_4}{?_1}{\wordb{\typec{\upsilon_1}}\bot}$ and $\gltsred{X_4}{?_2}{\emptyword}$. In turn, $\word(\typec{\upsilon_1})$ is a fresh symbol $\nsymc{X_5}$ with a production $\gltsred{X_5}{{\upsilon_1}}{\emptyword}$. Finally, we get $\word(\semit{T_2}{\semit{\INn{\upsilon_1}}{T_1}}) = \nsymc{X_2X_4X_1}$, which means we can write the productions for $\nsymc{X_3}$: $\gltsred{X_3}{\&_1}{X_4X_1}$ and $\gltsred{X_3}{\&_2}{X_3X_4X_1}$.

Putting all this together, we can finally obtain the simple grammar:
\begin{align*}
\gltsred{X_0}{\lambda{v_1}\colon\tkind}{X_1}
&&
\gltsred{X_1}{\&_1}{\emptyword}
&&
\gltsred{X_1}{\&_2}{X_3}
&&
\gltsred{X_2}{\&_1}{\emptyword}
&&
\gltsred{X_2}{\&_2}{X_3}
\\
\gltsred{X_3}{\&_1}{X_4X_1}
&&
\gltsred{X_3}{\&_2}{X_3X_4X_1}
&&
\gltsred{X_4}{?_1}{X_5\bot}
&&
\gltsred{X_4}{?_2}{\emptyword}
&&
\gltsred{X_5}{v_1}{\emptyword}
\end{align*}

Next, we argue that type equivalence (\ie bisimilarity on types) corresponds to bisimilarity on the corresponding grammars. This is achieved by the following lemma, that asserts that the LTS of a type and the LTS of the corresponding word of nonterminals have exactly the same transitions.

\begin{lemma}[Full abstraction]
\label{lem:fullyabstracttranslation}
Let $\TT\in \FMuAstSemiOmega$ and $(\T_{\typec{T}},\N_{\typec{T}},\word(\TT),\R_{\typec{T}})$ the corresponding simple grammar. Suppose also that $\isbisimg{\wordb{\TT}}\gamma$.
\begin{enumerate}
  \item If $\ltsred TaU$ then there exists $\nsymc{\gamma'}$ such that $\gltsred \gamma a {\gamma'}$ and $\isbisimg{\wordb{\UT}}{\gamma'}$.
  \item If $\gltsred \gamma a {\gamma'}$ then there exists $\UT$ such that $\ltsred TaU$ and $\isbisimg{\wordb{\UT}}{\gamma'}$.
\end{enumerate}
\end{lemma}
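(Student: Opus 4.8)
The plan is to factor the statement through an \emph{adequacy} property relating a type directly to its \emph{own} translation, and then transport it along the hypothesis $\isbisimg{\wordb{\TT}}{\gamma}$. Concretely, I would first establish, for every $\TT\in\FMuAstSemiOmega$: (a) if $\ltsred TaU$ then $\gltsred{\wordb\TT}{a}{\gamma''}$ for some $\gamma''$ with $\gamma''\gequiv\wordb\UT$; and (b) if $\gltsred{\wordb\TT}{a}{\gamma''}$ then $\ltsred TaU$ for some $\UT$ with $\gamma''\gequiv\wordb\UT$. Granting this, both parts of the lemma are immediate from the definition of bisimulation (\cref{def:bisimulation}) and transitivity of $\gequiv$: for part~1, (a) yields a matching grammar transition of $\wordb\TT$, which $\isbisimg{\wordb\TT}{\gamma}$ lifts to some $\gltsred\gamma a{\gamma'}$ with $\gamma'\gequiv\gamma''\gequiv\wordb\UT$; part~2 is the symmetric argument starting from a transition of $\gamma$, applying (b) afterwards.

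The proof of adequacy begins by disposing of types not in weak head normal form. By weak normalisation (\cref{thm:normalisation}) there is a unique whnf $U_0$ with $\normalred T{U_0}$; since no direct LTS rule applies to a reducible type, rule \rulename{L-Red} gives $\ltsred TaV$ iff $\ltsred{U_0}aV$. On the grammar side, the final clause of the construction endows $\wordb\TT$ with exactly the transitions of $\wordb{U_0}$ (and $\wordb\TT=\emptyword=\wordb{U_0}$ when $U_0=\Skip$), whence $\wordb\TT\gequiv\wordb{U_0}$. Hence adequacy for $\TT$ reduces to adequacy for the whnf $U_0$, and it remains to treat types already in weak head normal form by case analysis on the head, mirroring the very case split that drives the LTS and the definition of $\word$.

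The single auxiliary fact that makes the case analysis go through is the \emph{separation property} of the dead symbol $\nsymc\bot$: since $\nsymc\bot$ carries no productions, $\ntdelta\,\nsymc\bot\,\ntgamma\gequiv\ntdelta$ for all words $\ntdelta,\ntgamma$ (any path emptying $\ntdelta$ lands in the stuck state $\nsymc\bot\,\ntgamma$, bisimilar to the stuck $\emptyword$). With this in hand each case is a direct match. For a head variable $\alpha\,T_1\cdots T_m$, the production $\gltsred Y{\alpha_0}\emptyword$ matches \rulename{L-Var1} (target $\Skip$, and $\wordb\Skip=\emptyword$), while $\gltsred Y{\alpha_j}{\wordb{T_j}\nsymc\bot}$ matches \rulename{L-Var2} using $\wordb{T_j}\nsymc\bot\gequiv\wordb{T_j}$; constants, abstractions, $\MSGn{T}$, the four constant-applications, and the bare $\tdual{(\alpha\ T_1\ldots T_m)}$ are analogous, with $\nsymc\bot$ separating a payload from an empty continuation. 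The crucial cases are the sequential compositions $\semit{T'}{U'}$, where $\word$ produces the concatenation $\wordb{T'}\wordb{U'}$: here the head $\wordb{T'}$ drives the transition, and its \emph{reaching $\emptyword$} is exactly what hands control to the continuation $\wordb{U'}$, matching \rulename{L-VarSeq1}, \rulename{L-MsgSeq2}, \rulename{L-EndSeq} (where the $\nsymc\bot$ inside $\wordb\End$ absorbs $\wordb{U'}$, realising $\End$ as absorbing element), and \rulename{L-ChoiceSeq} (which uses $\wordb{\semit{T_j}{U'}}=\wordb{T_j}\wordb{U'}$); the payload branches \rulename{L-VarSeq2}, \rulename{L-MsgSeq1}, and \rulename{L-DualSeq1} again invoke $\nsymc\bot$ to cut off the trailing $\wordb{U'}$.

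I expect the main obstacle to be precisely this bookkeeping around continuations, i.e.\ showing $\gamma''\gequiv\wordb\UT$ when $\UT$ is a payload successor (the $\sharp_1$-, $\Duall_1$-, and $\alpha_j$-branches): there the grammar successor literally retains the trailing continuation word, and it is only the $\nsymc\bot$-separation property that collapses it to the intended payload translation — this is the role $\nsymc\bot$ plays in the construction. The converse direction (b) requires no new ideas, since the construction introduces grammar productions in one-to-one correspondence with the applicable LTS rules; reading the same case analysis backwards produces, for every production $\gltsred{\wordb\TT}{a}{\gamma''}$, a type transition $\ltsred TaU$ with $\gamma''\gequiv\wordb\UT$. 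Uniqueness of the whnf (so that $\word$ is well defined) and termination of the construction (\cref{lem:sgtermination}) guarantee that all these words and productions exist and are finite, completing the argument.
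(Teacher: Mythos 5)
Your proposal is correct and takes essentially the same route as the paper's proof: the same case analysis on whether $\TT$ is in weak head normal form (with reducible types handled through \rulename{L-Red} and the fact that $\wordb{\TT}$ copies the productions of the word of its normal form), the same head-driven case split for whnf types, and the same key fact that $\nsymc\bot$, having no productions, cuts off trailing words, i.e.\ $\wordb{\UT}\,\nsymc\bot\,\ntgamma\gequiv\wordb{\UT}$. The only difference is organisational: you factor out an adequacy statement relating $\TT$ to its own translation $\wordb{\TT}$ and then transport along $\isbisimg{\wordb{\TT}}{\gamma}$ by transitivity of $\gequiv$, whereas the paper threads this hypothesis directly through each case, showing coinductively that the relation $\{(\TT,\nsymc\gamma)\colon\isbisimg{\wordb{\TT}}{\gamma}\}$ is backward closed for the transition relation---the two formulations are equivalent.
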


\begin{proof}
The proof is by coinduction, showing that the relation
$$\R = \{(\TT,\nsymc\gamma)\colon\isbisimg{\wordb\TT}\gamma\}$$
is backward closed for the transition relation. This is done by a case analysis on $\TT$. First, assume that $\iswhnf\TT$. We show a few cases.
\begin{itemize}
\item If $\TT$ is $\tapp\talpha{\typec{T_1}\typec{\ldots}\typec{T_m}}$, then by rules \rulename{L-Var1} and \rulename{L-Var2} the LTS at $\TT$ has transitions $\ltsred{\TT}{\alpha_0}{\Skip}$ and $\ltsred{\TT}{\alpha_j}{T_j}$ for $1\leq j\leq m$. 
Similarly, $\wordb{\TT}=\Ynt$ for $\Ynt$ with productions $\gltsred Y{\alpha_0}{\emptyword}$ as well as $\gltsred Y{\alpha_j}{\wordb{\typec{T_j}}\bot}$ for $1\leq j\leq m$. 
Since $\isbisimg\Ynt\gamma$, the LTS at $\nsymc\gamma$ has transitions $\gltsred\gamma{\alpha_0}{\gamma_0}$ and $\gltsred\gamma{\alpha_j}{\gamma_j}$ for some $\nsymc{\gamma_0}$ \st $\isbisimg{\emptyword}{\gamma_0}$ and $\nsymc{\gamma_j}$ \st $\isbisimg{\wordb{\typec{T_j}}\bot}{\gamma_j}$ for $1\leq j\leq m$. 
Since $\wordb{\Skip}=\emptyword$, we get that $(\Skip,\nsymc{\gamma_0})\in\R$. 
Finally, since $\isbisimg{\wordb{\typec{T_j}}}{\wordb{\typec{T_j}}\bot}$, we get that $(\typec{T_j},\nsymc{\gamma_j})\in\R$ for $1\leq j\leq m$.
\item If $\TT$ is $\Skip$, then the LTS at $\TT$ has no transitions. 
Similarly, $\wordb{\TT}=\emptyword$ has no transitions. 
Since $\isbisimg\emptyword\gamma$, the LTS at $\nsymc\gamma$ has no transitions either.
\item If $\TT$ is $\semit UV$, then by rule \rulename{W-Seq2} $\iswhnf U$ and $\UT$ is neither $\Skip$ nor a sequential composition. 
We must perform a second case analysis on $\UT$. 
For example if $\UT$ is $\tapp\talpha{\typec{U_1}\typec{\ldots}\typec{U_m}}$, then by rules \rulename{L-VarSeq1} and \rulename{L-VarSeq2} the LTS at $\TT$ has transitions $\ltsred{\TT}{\alpha_0}{V}$ and $\ltsred{\TT}{\alpha_j}{T_j}$ for $1\leq j\leq m$. 
Similarly, $\wordb{\TT}=\wordb\UT\wordb{\VT}=\Ynt\wordb{\VT}$ for $\Ynt$ with productions $\gltsred Y{\alpha_0}{\emptyword}$ as well as $\gltsred Y{\alpha_j}{\wordb{\typec{U_j}}\bot}$ for $1\leq j\leq m$. 
Since $\isbisimg{\Ynt\wordb{\VT}}\gamma$, the LTS at $\nsymc\gamma$ has transitions $\gltsred\gamma{\alpha_0}{\gamma_0}$ and $\gltsred\gamma{\alpha_j}{\gamma_j}$ for some $\nsymc{\gamma_0}$ \st $\isbisimg{\wordb{\VT}}{\gamma_0}$ and $\nsymc{\gamma_j}$ \st $\isbisimg{\wordb{\typec{U_j}}\bot\wordb{\VT}}{\gamma_j}$ for $1\leq j\leq m$. 
We immediately get that $(\VT,\nsymc{\gamma_0})\in\R$. 
Finally, since $\isbisimg{\wordb{\typec{U_j}}}{\wordb{\typec{U_j}}\bot\wordb{\VT}}$, we get that $(\typec{U_j},\nsymc{\gamma_j})\in\R$ for $1\leq j\leq m$.
\end{itemize}
Next, assume that $\TT$ is not in whnf. There are two possibilities.
\begin{itemize}
\item If $\normalred T\Skip$, then by rule \rulename{L-Red} the LTS at $\TT$ has no transitions.
Similarly, $\word(\TT)=\emptyword$ has no transitions. 
Since $\isbisimg\emptyword\gamma$, the LTS at $\nsymc\gamma$ has no transitions either.
\item If $\normalred TU\neq\Skip$, then by rule \rulename{L-Red} the LTS at $\TT$ has a transition $\ltsred T a {U'}$ iff the LTS at $\UT$ has a corresponding transition $\ltsred U a {U'}$ (with the same $\typec{U'}$).
Similarly, $\word(\TT)=\Ynt$ for $\Ynt$ with productions $\gltsred Y a {\gamma'\delta}$ where $\Znt\ntdelta = \word(\UT)$ and $\Znt$ has productions $\gltsred Z a {\gamma'}$. 
Therefore, the LTS at $\wordb\TT$ has a transition $\gltsred{\wordb\TT}{a}{\delta'}$ iff the LTS at $\wordb{\UT}$ has a transition $\gltsred{\wordb{\UT}}{a}{\delta'}$ (with the same $\nsymc{\delta'}$). 
Hence $\isbisimg{\wordb\UT}{\wordb\TT}\isbisimg{}{\gamma}$, so that $(\UT,\nsymc{\gamma})\in\R$.
Finally, since $\iswhnf U$ the previous case analysis shows that $\UT$ and $\nsymc\gamma$ have matching transitions.
We conclude that $\TT$ and $\nsymc\gamma$ have matching transitions as well.
\end{itemize}
\end{proof}

As a consequence of the above result, we get soundness and completeness of the bisimilarity $\isbisimg{\wordb{\TT}}{\wordb{\UT}}$ with respect to the bisimilarity $\isbisim TU$. Indeed by \cref{lem:fullyabstracttranslation}, any sequence of transitions starting from $\TT$ can be matched by a sequence of transitions starting from $\word(\TT)$; and similarly for $\UT$. Thus $\isbisim TU$ iff $\isbisimg{\wordb{\TT}}{\wordb{\UT}}$.

\begin{restatable}{theorem}{decidabilitytypeequiv}
	\label{thm:decidability-type-equiv}
  The type equivalence problem is decidable for types in $\FMuAstSemiOmega$.
\end{restatable}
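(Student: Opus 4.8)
The plan is to reduce the type equivalence problem to the bisimilarity problem for simple grammars, which is known to be decidable \cite{DBLP:conf/tacas/AlmeidaMV20,DBLP:conf/fossacs/GayPV22}. Concretely, given two types $\TT,\UT\in\FMuAstSemiOmega$, the algorithm first runs the $\word$ construction on both types, producing their associated words of nonterminals together with the nonterminals and productions they generate. Since the construction reuses the same nonterminal (with the same productions) whenever a syntactically identical type is revisited, and since both runs share the single dead nonterminal $\nsymc\bot$, the two computations can be merged into one object $G=(\T,\N,\cdot,\R)$; the starting word plays no role in the definition of $\gequiv$, so its choice is immaterial and we treat $\word(\TT)$ and $\word(\UT)$ simply as two words over $\N$.

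First I would invoke \cref{lem:sgtermination} to guarantee that each of the two runs terminates and that the resulting productions form a simple grammar; the merge preserves both properties because shared subexpressions are assigned the same nonterminal, so no nonterminal ever acquires two productions with the same label. This determinism (at most one production per nonterminal/terminal pair) is exactly what is required for $G$ to be simple, and it can be read off directly from the defining clauses of $\word$: each fresh nonterminal is equipped with productions whose labels are pairwise distinct ($\alpha_0,\alpha_1,\dots$; $\sharp_1,\sharp_2$; $\Duall_1,\Duall_2$; the distinct $\iota_j$; and so on).

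Next I would transfer the equivalence question across the translation. By the full abstraction lemma (\cref{lem:fullyabstracttranslation}), applied within $G$ to each starting word, the LTS rooted at $\TT$ and the LTS rooted at $\word(\TT)$ have matching transitions, and likewise for $\UT$; combining the two directions gives $\isbisim TU$ iff $\isbisimg{\wordb{\TT}}{\wordb{\UT}}$, as already noted after the lemma. Finally, since the bisimilarity of two words in a simple grammar is decidable, running that decision procedure on $\word(\TT)$ and $\word(\UT)$ in $G$ decides whether $\isbisim TU$.

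I expect the crux of the argument to lie not in the assembly above but in the two facts on which it rests, both of which are special to the monomorphic fragment $\FMuAstSemiOmega$. The restriction of recursion to kind $\kast$ is what bounds the syntactically distinct subexpressions of the form $\tapp{\tmu{\kast}}{U}$ reachable under normalisation, and hence is what makes the $\word$ construction terminate with finitely many nonterminals (\cref{lem:sgtermination}); without it one only obtains a deterministic pushdown automaton, whose bisimilarity, though decidable, is not captured by the simple-grammar result. The one genuinely load-bearing step is that sequential composition is sent to concatenation, $\word(\semit \TT \UT)=\word(\TT)\word(\UT)$, so that the infinitely many types reachable via the LTS are split across the stack and only finitely many heads ever need fresh nonterminals; this is precisely the property that keeps the produced grammar both finite and simple, and thus places the problem within the decidable class.
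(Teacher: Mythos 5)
Your proposal is correct and follows essentially the same route as the paper's own proof: compute $\word(\TT)$ and $\word(\UT)$, invoke \cref{lem:sgtermination} for termination, use \cref{lem:fullyabstracttranslation} to obtain $\isbisim TU$ iff $\isbisimg{\wordb{\TT}}{\wordb{\UT}}$, and then apply a known decision procedure for bisimilarity of simple grammars. Your extra care about merging the two runs into a single grammar (sharing $\nsymc\bot$ and reused nonterminals) and your remarks on why the grammar is simple are sound clarifications of points the paper leaves implicit, not a different argument.
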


\begin{proof}
    Given types $\TT$, $\UT$ in $\FMuAstSemiOmega$, an algorithm for deciding
    $\isbisim TU$ is as follows. First compute $\word(\TT)$ and $\word(\UT)$; this
    terminates due to \cref{lem:sgtermination}. Then decide whether
    $\isbisimg{\wordb{\TT}}{\wordb{\UT}}$, using any known algorithm for
    bisimilarity of simple grammars, \eg Almeida \etal\cite{DBLP:conf/tacas/AlmeidaMV20} or Burkart \etal\cite{DBLP:conf/mfcs/BurkartCS95}, whose time complexity is doubly-exponential. Correctness follows from the discussion immediately preceding this theorem.
\end{proof}

For the remainder of this section, we look at the other classes of types in \cref{fig:fmudiagram} and examine the computation models they correspond to. Since class $\FMuSemi$ is contained in $\FMuAstSemiOmega$, we can express types without $\lambda$-abstractions with simple grammars as well. In this way we recover previous results in the literature \cite{DBLP:conf/tacas/AlmeidaMV20,DBLP:journals/corr/abs-2203-12877}.

Let us now look at the class $\FMuAstDotOmega$. In this class we do not have $\Skip$ nor sequential composition and message operators are binary ($\MSG T U$) rather than unary. 
Since we do not have sequential composition, there is no need to consider words of nonterminals, and instead it suffices to translate types into single symbols, \ie states in an automata. 
Moreover, since there is no recursion beyond $\tmu\kind$, only finitely many types can be reached from a given $\TT$. We could thus adapt our construction as follows for $\FMuAstDotOmega$.

In the definition of the LTS (\cref{fig:lts}):
\begin{itemize}
  \item discard all rules involving sequential composition;
  \item discard rules \rulename{L-Var1} for $m>0$ and \rulename{L-DualVar2} (they were only needed to distinguish types in sequential composition);
  \item discard case $\tiota=\End$ in rule \rulename{L-Const} (so that $\End$ no longer has transitions);
  \item replace $\Skip$ with $\End$ on the right-hand side of rules \rulename{L-Var1} with $m=0$ and \rulename{L-Const};
  \item discard rules \rulename{L-Msg1} and \rulename{L-Msg2} and treat $\tiota=\typec\sharp$ like the other constants in rule \rulename{L-ConstApp}.
\end{itemize}

We also replace the construction of $\word(\TT)$ into a construction of $\tstate(\TT)$, associating to each type $\TT$ a state in a finite-state automata. For each transition $\ltsred TaU$ we have the corresponding transition $\ltsred {\tstateb\TT}a{\tstateb\UT}$. Notice that the resulting automata is deterministic since the original LTS is also deterministic (for each type $\TT$ and label $\tsymc a$, there is at most one transition $\ltsred TaU$). Since bisimilarity of deterministic finite-state automata can be decided in polynomial time \cite{hopcroftkarp:1971}, we get the following results.

\begin{theorem}~
\label{thm:equivalenceFmucdotkast}
\begin{enumerate}
\item To each type $\TT$ in $\FMuAstDotOmega$ we can associate a finite-state automata corresponding to the (fragment of the) LTS generated by $\TT$.
\item The type equivalence problem is polynomial-time decidable for types in $\FMuAstDotOmega$. 
\end{enumerate}
\end{theorem}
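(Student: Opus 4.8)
The plan is to specialise the simple-grammar development that culminates in \cref{thm:decidability-type-equiv}. Because $\FMuAstDotOmega$ has no sequential composition and binary message constructors, the translation $\word$ never produces a genuine concatenation: every type maps to a word consisting of a single nonterminal, and the separator $\nsymc\bot$ (which only served to delimit the two descendants of a \emph{unary} send/receive $\semit{\MSG{}{}\TT}{\UT}$) is no longer needed, since the two arguments of a binary $\MSG\TT\UT$ are dispatched directly by \rulename{L-ConstApp}. Renaming $\word$ to $\tstate$, I would define $\tstateb{\TT}$ to be a single state and, for every transition $\ltsred TaU$ of the modified LTS, install the transition $\ltsred{\tstateb{\TT}}a{\tstateb{\UT}}$, with $\End$ now playing the role of the unique dead state (no outgoing transitions) previously played by $\Skip$. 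This fixes the shape of part~(1); what remains is to argue that only finitely many states arise.

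For that finiteness claim — which I expect to be the main obstacle — I would show that the set of types reachable from a fixed $\TT$ by alternately normalising and following LTS transitions is finite. The two restrictions of the class are exactly what makes this work. First, the absence of $\Semi$ means that each label target of a weak head normal form is a proper component (a strict subterm, up to substitution) of the normalised type rather than a freshly built sequential composition, so reachable types cannot grow by concatenation. Second, recursion is confined to kind $\kast$: as already exploited in \cref{thm:kindingdecidable}, this bounds the subexpressions of the form $\tapp{\tmu\kast}{U}$ that can appear during normalisation, so the $\mu$-unfoldings substituted for recursion variables are drawn from a finite pool. Combining the two, every reachable type is obtained from a subterm of $\TT$ by substituting members of this finite pool for its free recursion variables; there are finitely many such types up to syntactic identity (after renaming), hence finitely many states. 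This is precisely the higher-order analogue of the classical ``finitely many subterms up to unfolding'' property of contractive regular recursive types, and it is why finite-state automata, rather than the simple grammars of \cref{thm:decidability-type-equiv}, suffice here.

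For part~(2), I would first transport the full-abstraction argument of \cref{lem:fullyabstracttranslation} to this setting. Since each type now corresponds to a single state (no words, no $\nsymc\bot$, with $\Skip$ replaced by $\End$ as the empty case), the coinductive proof that the LTS of $\TT$ and the LTS of $\tstateb{\TT}$ have matching transitions goes through with fewer cases, yielding $\isbisim TU$ iff $\isbisim{\tstateb{\TT}}{\tstateb{\UT}}$ in the combined automaton. The original LTS is deterministic (for each $\TT$ and label $\tsymc a$ there is at most one $\UT$ with $\ltsred TaU$), so the automaton built by $\tstate$ is a deterministic finite-state automaton. Bisimilarity of states of such an automaton is decidable in polynomial time~\cite{hopcroftkarp:1971}, which already gives decidability; to upgrade this to polynomial time in the size of the input types, I would check that the closure computed in part~(1) yields a number of states polynomial in $|\TT|+|\UT|$ — each state corresponding to a subterm position together with a bounded recursion-variable environment, exactly as in the regular recursive case — so that both the construction of $\tstate(\TT)$, $\tstate(\UT)$ and the subsequent partition-refinement step run in polynomial time.
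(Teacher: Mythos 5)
Your construction is the same as the paper's: specialise the grammar translation so that each type becomes a single state ($\tstate$ in place of $\word$, no $\nsymc\bot$, $\End$ as the transitionless state), observe that the LTS---and hence the resulting automaton---is deterministic, and decide bisimilarity with Hopcroft--Karp~\cite{hopcroftkarp:1971}. The paper is merely more explicit about which LTS rules get discarded or modified (\rulename{L-Var1} for $m>0$, \rulename{L-DualVar2}, the $\End$ case of \rulename{L-Const}, folding $\typec\sharp$ into \rulename{L-ConstApp}), but your description amounts to the same construction, and your transport of \cref{lem:fullyabstracttranslation} is exactly the intended correctness argument.

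The gap lies in the two quantitative claims you add on top, neither of which the paper attempts (it asserts finiteness in one sentence and then invokes Hopcroft--Karp on the finished automaton). First, your characterisation of reachable types---``obtained from a subterm of $\TT$ by substituting members of a finite pool for its free recursion variables''---is wrong, because \rulename{L-Red} performs $\beta$-reduction at arbitrary kinds before transitions are read off. Already for the recursion-free type $\tapp{(\tabs\alpha{\karrow\kast\kast}{\tapp\alpha\Unit})}{(\tabs\beta\kast{\function\beta\beta})}$, the reachable whnf $\function\Unit\Unit$ is not a recursion-variable substitution instance of any subterm: reachable types are subterms of successive \emph{whnfs}, not of $\TT$. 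Finiteness does hold, but the argument must go through strong normalisation of the $\mu$-free reductions combined with the observation (as in \cref{thm:kindingdecidable}) that $\kast$-kinded unfolding regenerates the same expressions $\tapp{\tmu\kast}U$; your subterm count does not survive. Second, and fatally for your part~(2), the number of states is \emph{not} polynomial in $|\TT|+|\UT|$: type-level $\lambda$-terms at higher kinds admit the usual Church-numeral blow-up. An $O(n)$-size iterated exponentiation of type-level Church numerals normalises to $\tabs\alpha{\karrow\kast\kast}{\tabs\beta\kast{\tapp\alpha{(\tapp\alpha{\cdots(\tapp\alpha\beta)})}}}$ with a tower-of-exponentials number of nestings, and the nested applications are pairwise non-bisimilar, so \emph{every} correct automaton---not just the one produced by $\tstate$---has non-elementarily many states; by the same token (Statman's lower bound for simply-typed $\beta$-equality, with which bisimilarity coincides on normalising recursion-free types), no algorithm at all runs in time polynomial in the size of the input types. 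So part~(2) can only be read, as the paper implicitly does, as polynomial time in the size of the constructed automaton; your attempt to strengthen it to polynomial time in the size of the types is precisely the step that fails.
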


Clearly, \cref{thm:equivalenceFmucdotkast} applies to the subclasses of
$\FMuAstDotOmega$: $\FMu$, $\FMuDot$ and $\FMuAstOmega$.
In this way we recover previous results in the literature
\cite{DBLP:conf/popl/CaiGO16,DBLP:journals/corr/abs-2203-12877,DBLP:conf/fossacs/GayPV22}.

Finally, we consider the classes $\FMuOmega$, $\FMuDotOmega$ and $\FMuOmega$ involving arbitrarily-kinded recursion. We shall show that these classes are already powerful enough to simulate deterministic pushdown automata; hence, the type equivalence problem becomes impractical (\ie no practical implementation of an algorithm is known). We only focus on the simplest case $\FMuOmega$, as the others two classes are even more expressive. Instead of looking at deterministic pushdown automata, we instead look at deterministic first-order grammars, which are an equivalent model of computation~\cite{DBLP:journals/corr/abs-1010-4760}. The advantage of considering deterministic first-order grammars is to simplify our construction.
%
We say that a \emph{first-order grammar} is a tuple $(\X, \T, \N, \nsymc{E}, \R)$ where:
\begin{itemize}
\item $\X$ is a set of variables $\nsymc\alpha, \nsymc\beta, \ldots$; $\T$ is a set of terminal symbols $\tsymc a, \tsymc b, \ldots$; $\N$ is a set of nonterminal symbols $\Xnt, \Ynt,\ldots$
\item each nonterminal $\Xnt$ has an arity $m=\arity(\Xnt)\in \nbb$.
\item the set $\E$ of expressions over $\X$, $\N$ is inductively defined by two rules: any variable $\nsymc\alpha$ is an expression; if $\arity(\Xnt)=m$ and $\nsymc{E_1},\ldots,\nsymc{E_m}$ are expressions, then so is $\nsymc{X\ E_1 \ldots E_m}$. Whenever $m=0$, $\Xnt$ is called a constant.
\item $\Ent$ is an expression over $\N$, called the initial expression.
\item $\R$ is a set of productions. Each production is a triple
  $(\Xnt, \tsymc a, \Ent)$, written as
  $\gltsred{X\ \alpha_1 \ldots \alpha_m}aE$, where $m=\arity(\Xnt)$ and the
  variables in $\Ent$ must be taken from
  $\nsymc{\alpha_1},\ldots,\nsymc{\alpha_m}$.
\end{itemize}
A first-order grammar is \emph{deterministic} if, for every $\Xnt$ and $\tsymc a$, there is at most one production $(\Xnt, \tsymc a, \Ent)\in\R$.

Just as a simple grammar defines an LTS over words of nonterminals, a first-order grammar defines an LTS over the set $\E_0$ of closed expressions. For each production $\gltsred{X\ \alpha_1 \ldots \alpha_m}aE$ we have the labelled transition $\gltsred{X\ E_1 \ldots E_m}a{E\blk[\nsubs{E_1}{\alpha_1}\blk,\ldots\blk,\nsubs{E_m}{\alpha_m}\blk]}$.

Let $\isbisimg{\!\!}{\!\!}$ denote bisimilarity over closed expressions
according to a first-order grammar. We now present a fully abstract (\ie
preserving bisimilarity) translation of a deterministic first-order grammar into
a type in $\FMuOmega$. Each grammar variable $\nsymc\alpha$ has a corresponding
type variable $\talpha$ (of kind~$\tkind$). An expression
$\nsymc{X\ E_1 \ldots E_m}$ is represented as a type application
$\typec{X\ E_1 \ldots E_m}$. If $\Xnt$ has arity $m$ and the productions
$\gltsred{X\ \alpha_1 \ldots \alpha_m}{a_j}{E_j}$ for a range of $j$, then we
write the equation specifying $\XT$ as a record (since the first-order grammar
is deterministic, all record labels are distinct, and thus the right-hand side
on the equation specifying $\XT$ is well-formed).
\vspace*{-0.9ex} \begin{equation*} \iseqt X
  {\tabs{\alpha_1}\tkind{\ldots\tabs{\alpha_m}\tkind{\records{a_1\colon
          E_1,\ldots,a_m\colon E_m}}}}
\end{equation*}
This gives rise to a system of equations $\{\iseqt {X_i} {T_i}\}$, one for each nonterminal $\nsymc{X_i}$, where the nonterminals may appear in the right-hand sides $\typec{T_i}$. 
Finally, given an initial expression $\Ent$, it is standard how to convert it into a $\mu$-type using the  system above.

Using the above translation, we are able to simulate a transition $\gltsred E{a_j}F$
of the first-order grammar as a transition $\ltsred E{\records{\overline{a_i}}_j}F$ on the corresponding types. Therefore, the translation is fully abstract and we get the following result.

\begin{theorem}
  Let $\Ent$ and $\Fnt$ be closed expressions on a first-order grammar and
  $\typec E,\typec F$ the corresponding types. Then
  $\isbisimg\Ent\Fnt$ iff $\isbisim EF$.
\end{theorem}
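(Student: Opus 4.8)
The plan is to reduce the statement to a transition-level correspondence between the first-order grammar LTS over closed expressions and the type LTS, and then lift that correspondence to bisimilarity by coinduction, exactly in the style of the proof of \cref{lem:fullyabstracttranslation}. Writing $\typec E$ for the type corresponding to a closed expression $\Ent$, the central lemma I would prove is that, for every closed expression $\Ent$ whose head nonterminal $\Xnt$ carries the production labels $\overline{a_i}$,
\[
\gltsred E{a_j}F \quad\text{iff}\quad \ltsred E{\records{\overline{a_i}}_j}{F'},
\]
where $\typec{F'}$ is bisimilar to the type corresponding to $\Fnt$. Concretely, a closed expression has the form $\nsymc{X\ E_1\ldots E_m}$, and its type is the application $\typec{X\ E_1\ldots E_m}$ where $\XT$ abbreviates the $\mu$-type built from the equation $\iseqt X {\tabs{\alpha_1}\tkind{\ldots\tabs{\alpha_m}\tkind{\records{\overline{a_i\colon E_i}}}}}$. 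The first step is to show this type \emph{normalises} to a record: unfolding the recursion (rule \rmu) exposes the leading abstractions, the $m$ applications reduce by $m$ uses of rule \rbeta, and weak normalisation (\cref{thm:normalisation}) yields $\normalred{X\ E_1\ldots E_m}{\records{\overline{a_i\colon T_i}}}$ for suitable field types $\typec{T_i}$. Once in this weak head normal form, rule \rulename{L-ConstApp} supplies exactly one transition $\ltsred{}{\records{\overline{a_i}}_j}{T_j}$ per field, and rule \rulename{L-Red} transports these back to $\typec E$; this matches the single grammar transition $\gltsred{X\ E_1\ldots E_m}{a_j}{E_j[\nsubs{E_1}{\alpha_1},\ldots,\nsubs{E_m}{\alpha_m}]}$ granted by the determinism of the grammar.

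The one genuine calculation, and the step I expect to be the main obstacle, is verifying that the residual field type $\typec{T_j}$ is the type corresponding to the substituted expression $\nsymc{E_j[\nsubs{E_1}{\alpha_1},\ldots]}$. This amounts to proving that the translation commutes with substitution: substituting the argument types $\typec{E_1},\ldots,\typec{E_m}$ for $\typec{\alpha_1},\ldots,\typec{\alpha_m}$ inside the field $\typec{E_j}$ yields the type corresponding to $\nsymc{E_j[\nsubs{E_1}{\alpha_1},\ldots]}$. The delicate part is that rule \rbeta invokes $\rename$ and so does not return a syntactically identical type; I would therefore only claim the correspondence up to bisimilarity, appealing to the renaming and reduction invariances of \cref{lem:bisimrename} and \cref{lem:bisimred} of \cref{lem:bisimprop}, which suffice because the ambient relation is bisimilarity rather than syntactic equality. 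Care is also needed because the type-level label $\records{\overline{a_i}}_j$ records the \emph{whole} label set $\overline{a_i}$ of the head nonterminal, not merely the index $j$; this is harmless because bisimilar grammar expressions have identical sets of enabled terminals, so whenever two expressions are related their head nonterminals induce the same record label set and hence the same family of type labels.

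With the transition correspondence established, I would conclude by coinduction in both directions. For the forward implication I would show that $\{(\typec E,\typec F) : \isbisimg EF\}$ is a type bisimulation: given a transition $\ltsred E{\records{\overline{a_i}}_j}{E'}$, the correspondence produces $\gltsred E{a_j}{E_0}$ with $\typec{E_0}$ bisimilar to $\typec{E'}$; matching it by $\gltsred F{a_j}{F_0}$ via $\isbisimg EF$ and translating back gives the required $\ltsred F{\records{\overline{a_i}}_j}{F'}$, keeping the reducts in the relation. The converse is symmetric, showing $\{(\Ent,\Fnt) : \isbisim EF\}$ is a grammar bisimulation; here the fact that equal type labels force equal terminals, and conversely, is exactly what makes the relabelling $a_j \leftrightarrow \records{\overline{a_i}}_j$ a bijection on each related pair. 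Combining the two directions gives $\isbisimg\Ent\Fnt$ iff $\isbisim EF$, as required.
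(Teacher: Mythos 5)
Your proposal is correct and takes essentially the same route as the paper: the paper's entire proof is the observation, stated immediately before the theorem, that the translation turns each grammar transition $\gltsred E{a_j}F$ into the type transition $\ltsred E{\records{\overline{a_i}}_j}F$, whence full abstraction and the stated equivalence. Your write-up simply makes explicit the details the paper leaves implicit---normalisation of $\typec{X\ E_1\ldots E_m}$ to a record via \rmu and \rbeta, commutation of the translation with substitution up to renaming and bisimilarity (\cref{lem:bisimprop}), the bijection between grammar labels $a_j$ and type labels $\records{\overline{a_i}}_j$ on bisimilar pairs, and the concluding coinduction in both directions.
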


Let us work on an example to better understand the above translation. Consider the language $L_3 = \{\ell^nar^na \mid n\geq 0\} \cup \{\ell^nbr^nb \mid n\geq 0\}$ over the alphabet $\{a,b,\ell,r\}$. $L_3$ is a typical example of a language that cannot be described with a simple grammar, but can be accepted by a deterministic pushdown automaton~\cite{DBLP:conf/focs/KorenjakH66}. Consider the  first-order grammar with nonterminals $\Xnt, \Rnt, \Ant, \Bnt, \nsymc\bot$, initial expression $\Xnt\ \Ant\ \Bnt$, and productions
\begin{align*}
\gltsred {X\ \alpha\ \beta}\ell{X\ (R\ \alpha)\ (R\ \beta)}
&&
\gltsred {X\ \alpha\ \beta}a{\alpha}
&&
\gltsred {X\ \alpha\ \beta}b{\beta}
\\
\gltsred {R\ \alpha}r\alpha
&&
\gltsred Aa\bot
&&
\gltsred Bb\bot
\end{align*}
Note that $\nsymc\bot$ is a constant without productions. It is easy to see that the traces of this first-order grammar correspond exactly to the words in $L_3$. By following the steps in the above translation, we arrive at the  system of equations
\begin{align*}
&\iseqt X { \tabs \alpha \tkind {\tabs \beta \tkind {\records{\ell\colon X(R\alpha)(R\beta), a\colon\alpha, b\colon\beta}}}}
&&
\iseqt R { \tabs \alpha \tkind {\records{r\colon\alpha}}}
\\
&\iseqt A {\records{a\colon\bot}}
&&
\iseqt B {\records{b\colon\bot}}
&&
\iseqt \bot {\records{}}
\end{align*}
Therefore, the initial expression $\nsymc{X\ A\ B}$ becomes the type
$$\typec{(\tmuinfix{\xi}{\karrow\tkind{\karrow\tkind\tkind}}{\tabs\alpha\tkind{\tabs\beta\tkind{\records{\ell\colon \xi\records{r\colon\alpha}\records{r\colon\beta}, a\colon\alpha, b\colon\beta}}})\records{a\colon\records{}}\records{b\colon\records{}}}},$$
whose transitions simulate the transitions of the first-order grammar.


\section{The term language and its metatheory}
\label{sec:expressions}
\label{sec:processes}


\begin{figure}[t!]
\begin{align*}
  \termc v \grmeq\ & \termc c
  \grmor \termc x
  \grmor \eabs xTt
  \grmor \erec xTv
  \grmor \etabs \alpha \kind v
  \grmor \erecord {l_i}{v_i}
  \grmor \evariant lvT
  \\ & \etapp \receivek T
  \grmor \etapp {\etapp \receivek T} T
  \grmor \etapp \sendk T
  \grmor \eapp {\etapp \sendk T} v
  \grmor \etapp {\eapp {\etapp \sendk T} v} T
  \\ 
  \termc t \grmeq\ & \termc v
  \grmor \eapp tt
  \grmor \etapp tT
  \grmor \erecord {l_i}{t_i}
  \grmor \elet {l_i}{x_i} tt
  \\ &  \evariant ltT
  \grmor \ecase tt
  \grmor \ematch tt
  \\
  \termc p \grmeq\ & \thread t
  \grmor \termc p \PAR \termc p
  \grmor \NU xxp
  \\
  \termc E \grmeq\ & \termc{[]}
  \grmor \eapp Et
  \grmor \eapp v E
  \grmor \etapp ET
  \grmor \termc{\{l_1=v_1,\dots, l_j=E, \dots,l_n=e_n\}}
  \\ & \elet{l_1}{x_i}Et
  \grmor \evariant lET
  \grmor \ecase Et
  \grmor \ematch Et
\end{align*}
  \begin{align*}
    \termc c \grmeq& &&&& \text{Term constant}\\
	&\receivek && \foralltinfix \alpha \tkind {\foralltinfix \beta \skind
                  {\function{\IN\alpha\beta}{\pairt\alpha\beta}}} &&
                                                                     \text{receive
                                                                     on a channel}\\
	&\sendk && \foralltinfix \alpha \tkind {\function
               \alpha{\foralltinfix \beta \skind {\function{\OUT\alpha\beta}\beta}}} && \text{send on a channel}\\
	&\selectc{l_j}{\tinttchoice{l_i}{T_i}} && \function{\tinttchoice{l_i}{T_i}}{T_j} && \text{internal choice}\\
    &\closek && \function \End \Unit && \text{channel close}\\
	&\forkk && \function {(\function\Unit\Unit)} \Unit && \text{fork a new thread}\\
	&\newk && \foralltinfix \alpha \skind {\pairt \alpha {\tdual\alpha}}  && \text{channel creation}
  \end{align*}
  \caption{Terms, and typed for term constants.}
  \label{fig:term-constants}
\end{figure}


This section briefly introduces a concurrent functional language equipped with
$\FMuAstSemiOmega$ types, together with its metatheory. The results mostly follow from those in the literature, although explicit recursion at the term level and the unrestricted bindings in typing contexts are somewhat new in session types.

%
The syntax of values, terms, processes and call-by-value evaluation contexts are
defined by the grammar in \cref{fig:term-constants}.  The same figure introduces
types for the constants.
The term language is essentially the polymorphic lambda calculus with support
for session operators, formulated as in Almeida \etal and Cai \etal~\cite{DBLP:journals/corr/abs-2106-06658,DBLP:conf/popl/CaiGO16}.
From System $F$ it comprises terms and type abstractions, records and variants,
including constructors and destructors in each case.
The support for session operations and concurrency includes channel creation
($\newk$), the different channel operations ($\receivek$, $\sendk$, $\matchk$,
$\selectk$ and $\closek$) and thread creation ($\forkk$).
We program at the term level and use processes only for the runtime. Processes
include terms as threads, parallel composition and channel creation, all
inspired in the pi-calculus with double
binders~\cite{DBLP:journals/iandc/Vasconcelos12}. 



\begin{figure}[t!]
  \declrel{Term typing}{$\isterm \Delta \Gamma t T$}
  \begin{mathpar}
    \infer[\tconstr]{\istype \Delta {T_c} \kast
    }{
      \isterm \Delta \emptyCtx c {T_c}
    }

    \infer[\tvarr]{}{\isterm \Delta {\ebind xT} x T}

    %
    \infer[\ttappr]{
      \isterm \Delta {\Gamma_1} {t_1}{U \rightarrow T}
      \\
      \isterm \Delta {\Gamma_2} {t_2}{U}
    }{
      \isterm \Delta {\Gamma_1,\Gamma_2} {\eapp{t_1}{t_2}}{T}
    }

    \infer[\trecr]{
      \istype \Delta T \kast
      \\
      \isterm \Delta {\Gamma,\eubind x {T\rightarrow U}} {v} {T \rightarrow U}
    }{
      \isterm \Delta {\Gamma} {\erec x {T\rightarrow U} v} {T\rightarrow U}
    }

    \infer[\ttabsr]{
      \isterm {\Delta,\tbind \alpha \kind} \Gamma v {\tapp T \alpha}
    }{
      \isterm \Delta \Gamma {(\etabs \alpha \kind v)} {\tapp{\tforall\kind}{T}}
    }

    \infer[\trecordr]{
      \overline{\isterm \Delta {\Gamma_i} {t_i} {T_i}}
    }{
      \isterm \Delta {\overline\Gamma} {\erecord{l_i}{t_i}} {\trecord{l_i}{T_i}}
    }

    \infer[\tprojr]{
      \isterm \Delta {\Gamma_1} {t_1} {\trecord{l_i}{T_i}}
      \\
      \isterm \Delta {\Gamma_2,\overline{\ebind{x_i}{T_i}}} {t_2} {T}
    }{
      \isterm \Delta {\Gamma_1, \Gamma_2} {\elet{l_i}{x_i}{t_1}{t_2}} {T}
    }

    \infer[\tmatchr]{
      \isterm \Delta {\Gamma_1} {t_1} {\typec{\&\{\overline{l_i:\TT_i}\}}}
      \quad
      \isterm \Delta {\Gamma_2} {t_2}{\trecord{l_i}{T_i\rightarrow T}}
    }{
      \isterm \Delta {\Gamma_1, \Gamma_2} {\ematch{t_1}{t_2}} {T}
    }
    \quad
    %
    \infer[\teqr]{
      \isterm \Delta \Gamma t U
      \quad
      \istype \Delta U  \kast
      \quad
      \isbisim U T
    }{
      \isterm \Delta \Gamma t T
    }
    
    \infer[\tderr]{
      \isterm \Delta {\Gamma,\ebind xT} t U
    }{
      \isterm \Delta {\Gamma,\eubind xT} t U
    }

    \infer[\tweakr]{
      \isterm \Delta \Gamma t U
    }{
      \isterm \Delta {\Gamma,\eubind xT} t U
    }

    \infer[\tcontrr]{
      \isterm \Delta {\Gamma,\eubind yT,\eubind zT} {t} U
    }{
      \isterm \Delta {\Gamma,\eubind xT} {t\vsubs xy\vsubs xz} U
    }
  \end{mathpar}

  \declrel{Process typing}{$\isproc \Gamma p$}
  \begin{mathpar}
    \infer{\isterm \Empty \Gamma t \Unit}{\isproc \Gamma {\thread t}}

    \infer{\isproc{\Gamma_1}{p_1} \\
      \isproc{\Gamma_2}{p_2}}{\isproc{\Gamma_1,\Gamma_2}{p_1 \PAR p_2}}

    \infer{\isproc{\Gamma,\ebind xT,\ebind y{\tapp \Dual
          T}}{p}}{\isproc{\Gamma}{\NU xyp}}
  \end{mathpar}
  (\tappr, \tabsr: see Cai \etal\cite{DBLP:conf/popl/CaiGO16}; \tcaser adapt from \tmatchr)
  \caption{Typing.}
  \label{fig:typing}
\end{figure}


Term and process typing are in \cref{fig:typing}.
A judgement of the form $\isterm \Delta \Gamma t T$ records the fact that term
$\termc t$ has type $\typec T$ under contexts $\Delta$ (recording the kinds of
type variables) and $\Gamma$ (recording types for term variables).
The judgement for processes, $\isproc \Gamma p$, says that $\termc p$ is
well-typed under context $\Gamma$. The judgement simplifies that for terms, for
processes feature no free type variables and are assigned no particular type.
Here $\Unit$ is short for the empty record type
$\typec{\{\}}$, and $\pairt TU$ is short for the record type
$\typec{\records{\fstl\colon T,\sndl\colon U}}$.
Once again, the rules are adapted from the two above cited works. The difference
to Cai \etal~\cite{DBLP:conf/popl/CaiGO16} is that we work with in a linear
setting and hence axioms (\tconstr and \tvarr) work on an empty context, and
most of the other rules must split the context accordingly. Rule \ttabsr
simplifies that of Cai \etal~\cite{DBLP:conf/popl/CaiGO16}; we can easily show
that the two rules are interchangeable.
We support exponentials \cite{DBLP:journals/tcs/Girard87} for recursive
functions, so that one may write functions that feature more than one recursive
call (good for consuming binary trees, for example) and branches that do not use
the recursive function (for code that is supposed to terminate). Towards this
end, we add an unrestricted binding $\eubind xT$ in term variable contexts, an
explicit rule for $\reck$ (as opposed to making $\reck$ a constant as in Cai \etal~\cite{DBLP:conf/popl/CaiGO16}) and structural rules for unrestricted bindings
(\tderr, \tweakr and \tcontrr).
Thanks to the power of System $F$, most of the session and concurrency operators
are expressed as constants. For example, $\receivek$ receives a session type
$\OUT\alpha\beta$ with $\typec\alpha$, the payload of the message, an arbitrary type
and $\typec\beta$, the continuation, a session type, and returns a pair of the value
received and the continuation channel. As usual  $\foralltinfix \alpha \kind
T$ abbreviates the type $\tapp{\tforall \kind}{(\tabs \alpha \kind T)}$. The
exception is the external choice (\tmatchr) which can not be captured by a type
(similarly to \tcaser) and hence requires a dedicated typing rule.



\begin{figure}[t!]
  \declrel{Term reduction}{$\expred tt$}
  \begin{mathpar}
    \infer{}{\expred{(\eapp{\eabs xTt)}v}{\termc t \vsubs vx}}
    \quad
    \infer{}{\expred{(\etapp{\etabs \alpha \kind t)}T}{\termc t \subs T \alpha}}
    \quad
    \infer{}{\expred{\elet{l_i}{x_i}{\erecord{l_i}{v_i}}{t}}{\termc t\overline{\vsubs{v_i}{x_i}}}}

    \infer{}{\expred{\ecase{(\evariant{l_j}vT)}{\erecord{l_i}{t_i}}}{\eapp{t_j}v}}


    \infer{}{\expred{\eapp{(\erec xTv)}{u}}{\eapp{(v\vsubs{\erec xTv}{x})}{u}}}

    \infer{\expred{t_1}{t_2}}{\expred{E[t_1]}{E[t_2]}}
  \end{mathpar}
  \declrel{Structural congruence (congruence rules omitted)}{$\isscong pp$}
  \begin{gather*}
    \isscong{p_1 \PAR p_2}{p_2 \PAR p_1}
    \qquad
    \isscong{(p_1 \PAR p_2) \PAR p_3}{p_1 \PAR (p_2 \PAR p_3)}
    \\
    \isscong{\NU xy{p_1} \PAR p_2}{\NU xy{(p_1 \PAR p_2)}}
    \qquad
    \isscong{\NU wx{\NU yzp}}{\NU yz{\NU wxp}}
  \end{gather*}
  \declrel{Process reduction}{$\procred pp$}
  \begin{mathpar}
    \infer{\expred{t_1}{t_2}}{\procred{\thread{t_1}}{\thread{t_2}}}

    \infer{}{\procred{\thread{E[\eapp \forkk v]}}{\thread{E[\eunit]} \PAR
        \thread{\eapp v \eunit}}}

    \infer{}{\procred{\thread{E[\etapp \newk T]}}{\NU xy {\thread{E[\epair xy]}}}}
    
    \infer{}{\procred{\NU xy{(\thread{E_1[\eapp {\eapp {\etapp {\etapp
                  \receivek T} U} y}]} \PAR \thread{E_2[\eapp {\eapp {\etapp {\etapp
                  \sendk V} W} v} x]})}}{\NU
        xy {(\thread{E_1[\epair yv]} \PAR \thread{E_2[x]})}}}

    \infer{}{\procred{\NU xy{(\thread{E_1[\ematch y{\erecord{l_i}{t_i}}]} \PAR
          \thread{E_2[\eapp {(\selectc{l_j}{T})}x]})}}{\NU xy {\thread{E_1[\eapp
            {t_j} y]} \PAR \thread{E_2[x]}}}}

    \infer{}{\procred{\NU xy{(\thread{E_1[\eapp\closek y]}\PAR\thread{E_2[\eapp\closek x]})}}{\thread{E_1[\eunit]}\PAR\thread{E_2[\eunit]}}}

    \infer{\procred{p_1}{p_2}}{\procred{p_1\PAR q}{p_2\PAR q}}

    \infer{\procred{p_1}{p_2}}{\procred{\NU xy{p_1}}{\NU xy{p_2}}}

    \infer{\isscong{p_1}{p_2} \\ \procred{p_2}{p_3} \\ \isscong{p_3}{p_4}}{\procred{p_1}{p_4}}
  \end{mathpar}
  \caption{Term and process reduction.}
  \label{fig:exp-reduction}
\end{figure}


Term and process reduction are in \cref{fig:exp-reduction}. Term reduction
comprises the standard axioms in System F with records, variants and recursion.
Evaluation contexts greatly simplify the structural reduction rules and pave the
way to process reduction. Following Milner~\cite{DBLP:journals/mscs/Milner92} we
factor out processes by means of a structural congruence relation that accounts
for the associative and commutative nature of parallel composition, scope
extrusion and exchanging the order of channel bindings. The rules closely follow
Almeida \etal and Gay and
Vasconcelos~\cite{DBLP:journals/corr/abs-2106-06658,DBLP:journals/jfp/GayV10}.
One finds axioms for forking new threads, creating new channels, for
communication ($\receivek$/$\sendk$, $\matchk$/$\selectk$ and
$\closek$/$\closek$), as well as structural rules to allow reduction underneath
parallel composition, channel creation and structural congruence.



We now address the metatheory of our language, starting with preservation for
both terms and processes.

\begin{theorem}[Preservation]\
  \begin{enumerate}
  \item If $\isterm \Delta \Gamma t T$ and $\expred t{t'}$, then
    $\isterm \Delta \Gamma {t'} T$.
  \item If $\isproc \Gamma p$ and $\isscong p{p'}$, then $\isproc \Gamma {p'}$.
  \item If $\isproc \Gamma p$ and $\procred p{p'}$, then $\isproc \Gamma {p'}$.
  \end{enumerate}
\end{theorem}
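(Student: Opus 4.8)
The plan is to prove the three statements in order, with (3) relying on (1) and (2). Before the induction I would set up the usual auxiliary lemmas. First, an \emph{inversion} (generation) lemma for each term former: because rule \teqr may have been applied last, these must be stated up to bisimilarity, e.g.\ if $\isterm \Delta \Gamma {\eabs xTt} U$ then $\isbisim U{\function T{T'}}$ for some $\typec{T'}$ with $\isterm \Delta {\Gamma,\ebind xT} t{T'}$, and analogously for $\etabs\alpha\kind t$, records, variants and the channel constants. Second, two \emph{substitution} lemmas: a term substitution lemma stating that if $\isterm\Delta{\Gamma_1,\ebind xU}tT$ and $\isterm\Delta{\Gamma_2}vU$ with $\Gamma_1,\Gamma_2$ composable then $\isterm\Delta{\Gamma_1,\Gamma_2}{t\vsubs vx}T$ (with a variant for unrestricted bindings $\eubind xU$, where the hypothesis value is duplicable and stays available), and a type substitution lemma stating that if $\isterm{\Delta,\tbind\alpha\kind}\Gamma tT$ and $\istype\Delta U\kind$ then $\isterm\Delta{\Gamma\subs U\alpha}{t\subs U\alpha}{T\subs U\alpha}$. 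The type substitution lemma needs \cref{thm:normalisation} and \cref{lem:bisimprop} so that the bisimilarity in \teqr is preserved under substitution, and it must respect the minimal-renaming discipline of \cref{fig:rename}. I would also record that bisimilarity is a congruence for the session constructors and for $\Dual$ (a consequence of the LTS rules and \cref{lem:bisimprop}), and that context splitting $\Gamma_1,\Gamma_2$ is associative and commutative and treats unrestricted bindings via \tderr, \tweakr and \tcontrr.

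For (1) I would induct on the derivation of $\expred t{t'}$. The $\beta$-case $\expred{\eapp{(\eabs xTt)}v}{t\vsubs vx}$ follows by inversion on the application (rule \ttappr) and the term substitution lemma; the type-$\beta$ case $\expred{\etapp{(\etabs\alpha\kind t)}T}{t\subs T\alpha}$ uses inversion on \ttabsr and the type substitution lemma; the record-\letk, \casek and \reck cases are discharged by the corresponding substitution. The congruence case $\expred{E[t_1]}{E[t_2]}$ follows from a decomposition lemma for evaluation contexts (each context former types its hole and the hole's type is preserved) together with the induction hypothesis. Throughout, occurrences of \teqr are threaded along: whenever the typing of $\termc t$ ends in \teqr at type $\typec T$ with $\isbisim UT$, the induction hypothesis is applied at $\typec U$ and \teqr is reattached at $\typec T$.

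For (2) I would induct on the derivation of $\isscong p{p'}$; since $\equiv$ is symmetric, both directions are proved simultaneously. Commutativity and associativity of parallel composition reduce to commutativity and associativity of context splitting in the process rule for $\PAR$. Scope extrusion $\isscong{\NU xy{p_1}\PAR p_2}{\NU xy{(p_1\PAR p_2)}}$ is handled by the restriction and parallel rules, using that $\termc x,\termc y$ do not occur free in $\termc{p_2}$ (guaranteed, after $\alpha$-conversion, by the variable convention) so that the dual pair $\ebind xT,\ebind y{\tdual T}$ can be moved in or out of the parallel split without affecting $\termc{p_2}$; swapping binders is immediate since the two dual pairs occupy disjoint context positions.

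For (3) I would induct on the derivation of $\procred p{p'}$. The structural-rule cases (reduction under $\PAR$, under restriction, and under $\equiv$) follow from the induction hypothesis and, for the $\equiv$-case, from (2). The $\forkk$ and $\newk$ cases are routine: forking splits the thread using the type of $\forkk$, and channel creation introduces a restriction binding the dual pair $\ebind xT,\ebind y{\tdual T}$ dictated by the type of $\newk$. The crux is the communication cases. For $\receivek$/$\sendk$, inversion on the restriction and on the $\receivek$, $\sendk$ constants shows that the receiving endpoint has a type $\IN\alpha\beta$ while the sending endpoint has a type bisimilar to $\tdual{(\IN\alpha\beta)}$; by rules \rdin, \rdsemi and \cref{lem:bisimprop} the latter is bisimilar to $\OUT\alpha{\tdual\beta}$, so the transmitted value is typed at $\typec\alpha$ and the two residual endpoints acquire types $\typec\beta$ and $\tdual\beta$. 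The term substitution lemma inserts the received value and the continuation endpoint into the receiving thread, and the restriction rule re-forms because $\typec\beta$ and $\tdual\beta$ remain dual. The $\matchk$/$\selectk$ and $\closek$/$\closek$ cases are analogous, using the LTS and reduction behaviour of $\choice$ and $\End$. I expect these communication cases to be the main obstacle: they are where session fidelity must be re-established, and they crucially rely on bisimilarity being a congruence compatible with the internalised $\Dual$ operator, so that the dual-endpoint invariant of the restriction rule survives each reduction step.
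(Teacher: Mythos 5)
The paper gives no proof of this theorem: it is stated bare, with the remark that the results ``mostly follow from those in the literature'' and that the rules follow Almeida \etal and Gay and Vasconcelos, so there is no in-paper argument to compare yours against. Your outline is precisely the standard syntactic preservation proof that this remark defers to, and its key ingredients are correctly identified: inversion lemmas stated up to bisimilarity (to absorb trailing uses of \teqr), linear/unrestricted term substitution lemmas, a type substitution lemma compatible with the minimal-renaming discipline, congruence of $\sim$ with the session constructors and $\Dual$, and the duality computation $\tdual{(\IN\alpha\beta)}\sim\OUT\alpha{\tdual\beta}$ (via \rdsemi, \rdin and \cref{lem:bisimprop}) that re-establishes the restriction rule after a communication step. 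That last point is indeed the crux of part~(3), and your treatment of it is right: the residual endpoint types $\typec\beta$ and $\tdual\beta$ stay dual, so the restriction rule re-forms, with \teqr bridging the bisimilarity.

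Two places deserve more care than your sketch gives them, though neither invalidates the approach. First, the \reck case of part~(1): rule \trecr types the body under an unrestricted binding $\eubind x{T\rightarrow U}$, so unfolding substitutes $\erec x{T\rightarrow U}v$ for a variable that may have been duplicated by \tcontrr; your ``unrestricted variant'' of the substitution lemma therefore needs the substituted value itself to be typable in a context that can be shared across all copies, and this hypothesis should be made explicit (it interacts with how \trecr splits $\Gamma$). Second, scope extrusion in part~(2) read in the direction that pulls $\termc{p_2}$ out of the restriction: from $\isproc{\Gamma}{\NU xy{(p_1\PAR p_2)}}$ you must argue that the two endpoint bindings land on the $\termc{p_1}$ side of the context split; the variable convention gives $\termc x,\termc y\notin\fv(\termc{p_2})$, but in a linear system you still need the (routine) strengthening lemma that a linear binding in the context of a well-typed term/process must occur free in it, so that it cannot have been assigned to $\termc{p_2}$. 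With these two points spelled out, your plan is a faithful reconstruction of the proof the paper leaves to the literature.
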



Progress for the term language is assured only when the typing context contains
channel endpoints only. When $\Delta$ is understood from the context we write
$\Gamma^{\skind}$ to mean that $\Gamma$ contains only types of kind $\skind$,
that is $\istype \Delta T \skind$ for all types $\typec T$ in $\Gamma$. Well
typed terms are values, or else they may reduce or are ready to reduce at the
process level. Reduction in the case of session operations---$\receivek$,
$\sendk$, $\matchk$, $\selectk$, $\closek$---is pending a matching counterpart.

\begin{theorem}[Progress for the term language]
  If $\isterm \Delta{ \Gamma^{\skind}} t T$, then $\termc{t}$ is a value,
  $\termc{t}$ reduces, or $\termc{t}$ is stuck in one of the following
  forms: $\termc{E[\eapp \forkk v]}$,
  $\termc{E[\etapp \newk T]}$,
  $\termc{E[\eapp {\eapp {\etapp {\etapp \receivek T} U} v}]}$,
  $\termc{E[\eapp {\etapp {\eapp {\etapp \sendk U} T} v} x]}$,
  $\termc{E[\ematch y{\erecord{l_i}{t_i}}]}$,
  $\termc{E[\eapp {(\selectc{l_j}{T})}x]}$, or
  $\termc{E[\eapp\closek x]}$.
\end{theorem}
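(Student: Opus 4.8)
The plan is to prove the statement by rule induction on the derivation of $\isterm \Delta{\Gamma^{\skind}} t T$, in the standard progress style, supported by two auxiliary facts about the session-only fragment. First I would establish a \emph{canonical forms} lemma: for a value $\termc v$ with $\isterm \Delta{\Gamma^{\skind}} v T$, the outermost constructor of $\termc v$ is fixed by the shape of $T$ up to $\teq$. Concretely, if $\isbisim T{\function UV}$ then $\termc v$ is an abstraction $\eabs xUt$, a recursive value $\erec x{\function UV}{v'}$, or one of the partially applied constants of arrow type ($\etapp{\etapp\receivek U}V$, $\etapp{\eapp{\etapp\sendk U}{v'}}V$, $\selectc{l_j}{\cdot}$, $\closek$, or $\forkk$); if $\isbisim T{\foralltinfix\alpha\kind V}$ then $\termc v$ is a type abstraction $\etabs\alpha\kind{v'}$, or one of $\receivek$, $\eapp{\etapp\sendk U}{v'}$, or $\newk$; and if $T$ is bisimilar to a record (resp.\ variant) type then $\termc v$ is a record (resp.\ variant) value. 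The crucial ingredient is the earlier characterisation of $\sim$ (the fifteen properties), which guarantees that a type bisimilar to an arrow (resp.\ $\forall$, record, variant) normalises to the same kind of constructor, so that uses of rule \teqr, \tderr, \tweakr and \tcontrr in the derivation cannot disguise one constructor as another; the canonical forms lemma itself is then a routine induction where these administrative rules are discharged by the induction hypothesis. Second, I would record the sub-lemma that \emph{a value of session-kinded type is a variable}: no value constructor other than a variable can be typed at a type of kind $\skind$, and since $\Gamma^{\skind}$ assigns only session types, such a variable is a channel endpoint. I would also note the routine fact that evaluation contexts compose, so that if $\termc{t'}$ reduces or is one of the listed stuck forms then so is $\termc{E[t']}$.

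With these in place the induction is largely mechanical. The introduction rules --- constants (\tconstr), variables (\tvarr), term and type abstractions (\tabsr, \ttabsr), recursive values (\trecr), and the value cases of record and variant introduction (\trecordr and its variant analogue) --- yield values directly; when a record or variant subterm is not yet a value, applying the induction hypothesis to the leftmost non-value subterm together with the matching evaluation context gives a reduction or a stuck form. The real work is in the elimination forms. For an application $\eapp{t_1}{t_2}$ I push evaluation into $\termc{t_1}$ and then $\termc{t_2}$ using the contexts $\eapp E{t_2}$ and $\eapp{v_1}E$; once both are values, canonical forms on the arrow-typed $\termc{t_1}$ either exposes a $\beta$- or $\reck$-redex, or identifies $\eapp{t_1}{t_2}$ as one of the communication stuck forms --- and here the session sub-lemma forces the channel argument of $\receivek$, $\sendk$, $\selectk$ and $\closek$ to be a variable, matching the forms in the statement, while the argument of $\forkk$ is an arbitrary value thunk. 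For a type application $\etapp{t_1}T$, canonical forms on the $\forall$-typed value $\termc{t_1}$ gives either a type-$\beta$ redex ($\etabs\alpha\kind{v'}$), a further partial application that is \emph{itself} a value ($\receivek$, $\eapp{\etapp\sendk U}{v'}$, which the value grammar declares to be values), or the stuck form $\etapp\newk T$. Projection (\tprojr) and case (\tcaser) reduce once the scrutinee is a value, since canonical forms make a record-typed scrutinee a literal record and a variant-typed scrutinee a literal variant; for \tmatchr the scrutinee has external-choice (hence session) type, so it is a channel variable $\termc y$ and we land in the stuck form $\ematch y{\erecord{l_i}{t_i}}$.

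The conversion and structural rules are handled uniformly: for \teqr the subject term is unchanged and the induction hypothesis on the premise transfers verbatim; for \tderr and \tweakr the subject is again unchanged and relabelling or removing a binding preserves the $\Gamma^{\skind}$ hypothesis, so the induction hypothesis applies; for \tcontrr the subject changes only by the variable-for-variable substitution $\vsubs xy\vsubs xz$, and I would invoke a one-line observation that such a renaming maps values to values, redexes to redexes, and each listed stuck form to a stuck form of the same shape. The main obstacle I anticipate is not any single case but the global bookkeeping around the constants: I must align the exact list of partial applications that the value grammar declares to be values against the saturated forms that become communication redexes or stuck terms, and I must do this \emph{up to} $\teq$, which is precisely why the canonical forms lemma has to be proved through the characterisation of $\sim$ rather than by naive syntactic inversion on the typing rules. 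Making the session-context restriction do its job --- guaranteeing that every value of session kind is a channel variable, so that the communication operators are ``stuck waiting to synchronise'' rather than genuinely ill-formed --- is the conceptual heart of the argument, and it rests on session types being provably disjoint, under $\sim$ and kinding, from the functional, universal, record and variant types.
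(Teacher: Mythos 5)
The paper states this theorem \emph{without proof}: the metatheory section only remarks that the results ``mostly follow from those in the literature'' (Almeida \etal, Gay and Vasconcelos), and the appendix proves only the kinding-decidability theorem. So there is no proof of record to compare yours against; judged on its own merits, your proposal is the standard progress argument the authors evidently intend, and its structure is sound: rule induction on typing, a canonical-forms lemma proved through the characterisation of $\sim$ (so that \teqr cannot disguise one outermost constructor as another, since session, arrow, universal, record and variant types have disjoint transition labels in the LTS), composition of evaluation contexts, and---the genuinely load-bearing point, which you correctly isolate---the lemma that under $\Gamma^{\skind}$ every value of session-kinded type is a variable, which is what turns saturated applications of $\receivek$, $\sendk$, $\selectk$, $\closek$ and the scrutinee of $\matchk$ into the listed stuck forms rather than ill-formed terms. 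Your handling of the structural rules (\teqr, \tderr, \tweakr, \tcontrr with a renaming observation) is also right.

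Two bookkeeping points need repair before the case analysis is exhaustive. First, your canonical-forms inventory has small omissions: $\etapp\sendk T$ is a value of \emph{arrow} type, and bare $\sendk$ and $\etapp\receivek T$ are values of \emph{universal} type. Consequently, in the term-application case a third outcome is possible besides ``redex'' and ``stuck form'': $\eapp{t_1}{t_2}$ may itself be a value, namely $\eapp{\etapp\sendk T}{v}$. You acknowledge this phenomenon for type applications but not for term applications; the fix is trivial but must appear in the final write-up, and it is exactly the ``alignment of partial applications'' you flag as the main obstacle. Second, in the \tmatchr case you conclude with the stuck form $\ematch y{\erecord{l_i}{t_i}}$, which presupposes that the branches form a syntactic record literal; the typing rule only requires the second argument to \emph{have} record type, and the evaluation contexts do not evaluate the second position of $\matchk$ (nor of $\casek$), so a term such as $\ematch y{(\eapp{(\eabs x{U}{x})}{\erecord{l_i}{v_i}})}$ is well typed, is not a value, does not reduce, and matches no listed stuck form. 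This is a defect of the paper's own formulation (its Safety theorem needs the same implicit assumption, since a non-literal branch argument is classified as a runtime error) which your proof inherits rather than introduces; a complete proof should either restrict $\matchk$/$\casek$ typing to literal records of branches or add the corresponding evaluation contexts, and then your argument goes through.
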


In order to state our result on the absence of runtime errors we need a few
notions on the structure of terms and processes; here we follow
Almeida \etal~\cite{DBLP:journals/corr/abs-2106-06658}.
The \emph{subject} of an expression $\termc e$, denoted by $\subj e$, is $\termc
x$ in the following cases.
\begin{equation*}
  \eapp{\etapp{\etapp{\receivek}{T}}{U}}{x}
  \quad
  \eapp{\etapp{\eapp{\etapp{\sendk}{T}}{v}}{U}}{x}
  \quad
  \ematch{x}{t}
  \qquad
  \eapp{(\selectc{l_j}{T})}{x}
  \qquad
  \eapp \closek x
\end{equation*}

Two terms $\termc{e_1}$ and $\termc{e_2}$ \emph{agree} on channel $\termc{xy}$,
notation $\agree xy{e_1}{e_2}$, in the following cases (symmetric forms
omitted).
\begin{gather*}
  \agree xy
  {\eapp{\etapp{\etapp{\receivek}{T}}{U}}{x}}
  {\eapp{\etapp{\eapp{\etapp{\sendk}{V}}{v}}{W}}{y}}
  \qquad
  \agree xy {\eapp \closek x}{\eapp \closek y}
  \\
  \agree xy
  {\ematch{x}{\erecord{l_i}{t_i}_{i\in I}}}
  {\eapp{(\selectc{l_j}{T})}{y}}
  \quad j \in I
\end{gather*}

A closed process is a \emph{runtime error} if it is structural congruent to
some process that contains a subexpression or subprocess of one of the
following forms.
\begin{enumerate}
\item $\eapp vu$ where $\termc v$ is not a $\termc\lambda$ or a $\reck$,
  $\termc v \neq \etapp{\etapp{\receivek}{T}}{U}$,
  $\eapp{\etapp{\sendk}{T}}{u}$,
  $\etapp{\eapp{\etapp{\sendk}{T}}{w}}{U}$,
  $\selectc{l_j}{T}$, 
  $\closek$,
  $\forkk$;
\item $\etapp{v}{T}$ where $\termc v$ in not a $\termc\Lambda$,
  $\termc v \neq \receivek$,
  $\etapp \receivek U$,
  $\sendk$,
  $\etapp \sendk U$,
  $\newk$;
\item $\elet{l_i}{x_i}{v}{t}$ and $\termc v$ is not of the form
  $\erecord{l_i}{u_i}$;
\item $\ecase{v}{t}$ and $\termc v \neq \evariant{l_j}{u}{T}$ or
  $\termc t \neq \erecord{l_i}{u_i}_{\termc{i\in I}}$ with $j \notin I$;
\item\label{it:error-session} $\eapp{\etapp{\etapp{\receivek}{T}}{U}}{v}$ or
  $\eapp{\etapp{\eapp{\etapp{\sendk}{T}}{u}}U}{v}$ or
  $\ematch{v}{t}$ or
  $\eapp{(\selectc{l}{T})}{v}$ or
  $\eapp{\closek}{v}$ and
  $\termc v$ is not an endpoint $\termc x$;
\item\label{it:error-same-subj} 
  $\thread{E_1[e_1]} \PAR \thread{E_2[e_2]}$ and $\subj{e_1} =
  \subj{e_2}$;
\item\label{it:error-no-redex}
\sloppy{ $\NU{x}{y}{(\thread{E_1[e_1]} \PAR \thread{E_2[e_2]})}$ and
  $\subj{e_1} = \termc x$ and $\subj{e_2} = \termc y$ and
  $\neg\agree xy{e_1}{e_2}$.
}\end{enumerate}

The first five cases are standard to system $F$ with records and variants. The
support for session types and concurrency in the first two cases (term and type
application) are derived from the types of values for such operators
(\cref{fig:term-constants}). \Cref{it:error-session} addresses session operators
applied to non endpoints. \Cref{it:error-same-subj} is for two concurrent
session operators on the same channel end. Finally, \cref{it:error-no-redex} is
for mismatches on two session operations on two endpoints for the same channel.

\begin{theorem}[Safety]
  If $\isproc{\Gamma^{\skind}}{p}$, then $\termc{p}$ is not a runtime error.
\end{theorem}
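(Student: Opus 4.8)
The plan is to prove the contrapositive: assuming $\termc p$ is a runtime error, I exhibit a violation of the typing rules, contradicting $\isproc{\Gamma^{\skind}}{p}$. Because being a runtime error is defined up to structural congruence, I first invoke the preservation result (part~2) to replace $\termc p$ by the structurally congruent process that \emph{literally} contains one of the seven offending sub-terms or sub-processes; this congruent process is still typable under $\Gamma^{\skind}$, so it suffices to show that none of the seven forms can occur inside a typable process. Following the template of Almeida \etal~\cite{DBLP:journals/corr/abs-2106-06658}, the argument rests on two auxiliary lemmas, both proved by routine induction: a \emph{decomposition} lemma (if $\isterm\Delta\Gamma{E[e]}{T}$ then $\termc e$ is typable under a sub-context of $\Gamma$, at the type dictated by the hole of $E$), and a \emph{canonical forms} lemma read off the value grammar of \cref{fig:term-constants} and the rules of \cref{fig:typing}: a value of arrow type is a $\lambda$, a $\reck$ or a partially applied session constant; a value of $\typec\forall$-type is a $\Lambda$ or a partially applied session constant; a value of record type is a record; a value of variant type is a variant; and, decisively, the only values whose type has kind $\skind$ are variables, i.e.\ channel endpoints (no constant, abstraction, record or variant carries a session type).

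With these in hand the first five error forms fall out uniformly. In each case decomposition makes the offending redex typable under some context, and inversion of the governing rule (\tappr, \ttappr, \tprojr or \tcaser) followed by canonical forms forces the scrutinised value into exactly the shape that the definition of the error excludes---e.g.\ a value of arrow type that is neither a $\lambda$, a $\reck$, nor one of the listed partial applications cannot exist. Form~(5) is the same argument specialised to session operators: the operand sits at a session type, which by the session-kind canonical form must be an endpoint $\termc x$, contradicting the hypothesis that it is not.

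The two genuinely concurrent forms are handled through linearity and duality. For form~(6), inversion of the parallel-composition rule splits the context as $\Gamma_1,\Gamma_2$ with disjoint domains; since $\subj{e_1}=\subj{e_2}=\termc x$ is the subject of a session operation in each thread, decomposition forces $\termc x$ to occur with a (linear) session type in both $\Gamma_1$ and $\Gamma_2$, which is impossible for a disjoint split---only unrestricted function bindings admit the structural rules \tweakr, \tcontrr and \tderr, and endpoints under $\Gamma^{\skind}$ are linear. For form~(7), inversion of the restriction rule gives $\termc x\colon\typec T$ and $\termc y\colon\tdual T$; the operation $\termc{e_1}$ on $\termc x$ pins down, through rule \teqr and normalisation to weak head normal form, the head action of $\typec T$ (an input, an output, a choice, or $\End$, according to whether $\termc{e_1}$ is a $\receivek$, $\sendk$, $\selectk$/$\matchk$ or $\closek$), while the duality reductions of \cref{fig:reduction} invert precisely that head action in $\tdual T$; typability of $\termc{e_2}$ on $\termc y$ then forces $\termc{e_2}$ to be the matching dual operation, which is exactly the statement $\agree xy{e_1}{e_2}$---contradicting $\neg\,\agree xy{e_1}{e_2}$.

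I expect form~(7) to be the main obstacle. In the context-free setting the endpoint type $\typec T$ need not be a syntactic message prefix, so reading off its head action requires normalising to weak head normal form and then tracking the internalised $\Dual$ operator through the reductions of \cref{fig:reduction}, rather than appealing to a single syntactic inversion as in the tail-recursive development of Almeida \etal~\cite{DBLP:journals/corr/abs-2106-06658}; the remaining bookkeeping (the decomposition lemma and the verification that each session constant constrains the head of its subject's type) is routine.
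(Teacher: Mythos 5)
The paper gives you nothing to compare against here: the Safety theorem is stated bare at the end of \cref{sec:processes}, the section opening with the remark that the metatheory ``mostly follows from those in the literature'', and the error apparatus (subjects, agreement, the seven error forms) is explicitly credited to Almeida \etal. Your proposal reconstructs precisely the argument that citation points to, and its skeleton is sound: preservation under structural congruence to expose the offending form, typability of sub-terms and sub-processes, canonical forms (in particular that the only values inhabiting session-kinded types are variables, which holds because the transition labels of functional constructors are disjoint from those of session constructors, so \teqr cannot smuggle a $\termc\lambda$, record or variant across the kind boundary), disjointness of the context split at parallel composition for form~(6), and, for form~(7), normalising the endpoint's type to weak head normal form and pushing the internalised $\Dual$ through the reductions of \cref{fig:reduction}; the multi-case characterisation lemma of \cref{sec:type-equiv} is exactly what makes that last step go through, and you are right that it is the only place where the argument genuinely departs from the tail-recursive development of Almeida \etal. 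Two points to tighten in a full write-up. First, your decomposition lemma is stated for evaluation contexts $\termc E$, but error forms (1)--(5) quantify over \emph{arbitrary} subexpressions --- a bad application may sit under a $\termc\lambda$, not in evaluation position --- so you need the stronger (still routine) lemma that every subterm of a typable term is typable up to a renaming of term variables, the renaming caveat arising because \tcontrr rewrites subterms. Second, for form~(6) the linearity discussion is a detour: since the process-level split $\Gamma_1,\Gamma_2$ is disjoint and the free variables of a typable term lie in the domain of its context, a shared subject is impossible outright, whether the binding is linear or unrestricted.
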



An \emph{algorithmic typing system} can be easily extracted from the declarative
system for terms in \cref{fig:typing} via a bidirectional type system. The
system, formulated along the lines of Almeida \etal~\cite{DBLP:journals/corr/abs-2106-06658},
converts the non-syntax directed judgement $\isterm \Delta \Gamma t T$ into two
functional judgements: type synthesis
$\termsynth \Delta {\Gamma_1} t T {\Gamma_2}$ and type check
$\termagainst \Delta {\Gamma_1} t T {\Gamma_2}$. In the algorithmic type system,
context $\Gamma_2$ contains the unused part of context $\Gamma_1$
\cite{DBLP:journals/iandc/Vasconcelos12,walker:substructural-type-systems,DBLP:conf/forte/ZalakainD21}.



\section{Related Work}
\label{sec:related}

We briefly discuss work that is closer to ours.\smallskip

\noindent \textit{Equirecursion in system $F$. }
In first investigations on equirecursive types, the notion of type equivalence
is often formulated in a coinductive
fashion~\cite{DBLP:conf/popl/AmadioC91,DBLP:journals/fuin/BrandtH98,DBLP:conf/lics/ColazzoG99,DBLP:conf/icfp/GapeyevLP00,DBLP:conf/esop/Glew02}.
Two types are equivalent if they unroll into the same infinite tree. Whenever
this unrolling is the only type-level computation, such trees are regular,
enabling efficient decision procedures. Some authors have studied equirecursion
together with other notions of type-level computation. Solomon considers parameterized
type definitions, which correspond to higher-order
kinds~\cite{DBLP:conf/popl/Solomon78}. These implicitly correspond to $\lambda$-terms, 
since reduction occurs as types are allowed to call other types. Some
authors consider equirecursion in system $F_\omega$, with weaker or stronger
notions of
equality~\cite{DBLP:phd/de/Abel2007,DBLP:conf/tacs/BruceCP97,DBLP:conf/popl/CaiGO16,DBLP:journals/scp/Hinze02}.
Regarding equirecursion in system $F$, the model of
Cai \etal~\cite{DBLP:conf/popl/CaiGO16} is the closest to ours, and indeed our results up
to $\FMuAstDotOmega$ can be seen as a generalization of theirs.
However, Cai \etal depart from the usual setting by
allowing non-contractive types (which most authors forbid, including this work),
requiring a sort of infinitary lambda calculus. Moreover, this work further
extends additional equivalence properties by including session types with their
distinctive semantics, such as sequential composition and duality. 
\smallskip

\noindent \textit{Session type systems. }
Session types were introduced in the 90s by Honda \etal~\cite{DBLP:conf/concur/Honda93,DBLP:conf/esop/HondaVK98,DBLP:conf/parle/TakeuchiHK94}.
Equirecursion was the first approach used to construct infinite session types,
which often allows type equality to be interpreted according to a coinductive
notion of bisimulation~\cite{DBLP:journals/mscs/KozenS17}. In this vein, Keizer \etal~\cite{DBLP:conf/esop/KeizerB021} utilize coalgebras to represent session
types. Since the inception of session types,
there has been an interest in extending the theory to nonregular
protocols~\cite{DBLP:conf/europar/Puntigam99,DBLP:conf/europar/RavaraV97,DBLP:conf/soco/Sudholt05}.
Context-free session types emerged as a natural extension, as it still allowed
for practical type equality
algorithms~\cite{freest,DBLP:journals/corr/abs-1904-01284,DBLP:conf/tacas/AlmeidaMV20,DBLP:journals/corr/abs-2203-12877,DBLP:journals/toplas/Padovani19,DBLP:conf/icfp/ThiemannV16}.
Other approaches that go beyond regular session types include nested session
types~\cite{DBLP:conf/esop/DasDMP21} as well as 1-counter, pushdown and
2-counter session types~\cite{DBLP:conf/fossacs/GayPV22}. However, the
more expressive notions are not amenable to practical type equivalence
algorithms, just like the higher-order types present in our system
$\FMuOmega$. Polymorphism in session types has also been a topic of interest,
with or without
recursion~\cite{DBLP:conf/esop/CairesPPT13,DBLP:journals/corr/Dardha14,DBLP:journals/iandc/DardhaGS17,DBLP:journals/mscs/Gay08,griffith2016polarized}.
\smallskip

\noindent\textit{Dual type operator. }
This work is, to the best of our knowledge, the first that internalises duality as a type constructor. Other settings, such as the language Alms~\cite{tov2012practical}, consider duality for session types as a user-definable, not built in, type function. Our $\Dual$ is a type operator, not a type function. The difference is that a type function involves a type-level computation, which converges to a type written without dual. For example, in Alms we would have $\tkeyword{dual}(\OUT\Int\End) = \IN\Int\End$ (as a type-level computation), both sides being the \emph{same} type. In our setting, $\tdual{(\semit{\OUTn\Int}\End)}$ is a type on its own, which happens to be equivalent to $\semit{\INn\Int}\End$. At the same time, our setting allows for types such as $\tdual\alpha$, or $\semit{(\tdual\alpha)}{\semit{T_1}{T_2}}$, which do not reduce.
\smallskip

\noindent\textit{Type equivalence algorithms. }
Algorithms for deciding the equivalence of types must inherently be related to
the computational power of the corresponding type system. This has been used
implicitly or explicitly to obtain decidability results. As already explained,
if equirecursion is the only type-level computation, types can be represented as
finite-state automata (or equivalently, infinite regular trees). Although some
exponential time algorithms were first
proposed~\cite{DBLP:journals/acta/GayH05}, it has been established that the
problem can be solved in quadratic time~\cite{DBLP:conf/tacas/LangeY16}, which
is to be expected as it matches the corresponding problem of bisimulation of
finite-state automata~\cite{hopcroftkarp:1971}; see also Pierce~\cite{DBLP:books/daglib/0005958}.

The next `simplest' model of computation is that of simple grammars, which
intuitively correspond to deterministic pushdown automata with a single
state~\cite{DBLP:conf/fossacs/GayPV22}.
Almeida \etal~\cite{DBLP:conf/tacas/AlmeidaMV20} provided a practical algorithm for checking the bisimilarity of simple grammars. By dropping the determinism assumption, we arrive at Greibach normal form grammars, which are equivalent to basic process algebras~\cite{DBLP:conf/parle/BaetenBK87,DBLP:journals/jacm/BaetenBK93}. Bisimilarity algorithms have been studied extensively in this setting~\cite{DBLP:conf/mfcs/BurkartCS95,DBLP:journals/iandc/ChristensenHS95,DBLP:journals/corr/abs-1207-2479,DBLP:journals/ipl/Kiefer13}; presently it is known that the complexity of the problem lies between EXPTIME and 2-EXPTIME, which does not exclude the possibility of a polynomial time algorithm for the simpler model of simple grammars.

In this paper we present a reduction from first-order grammars to $\FMuOmega$-types, showing that the more expressive type systems ($\FMuOmega$, presented here and in Cai \etal~\cite{DBLP:conf/popl/CaiGO16}, as well as its extensions) are at least as powerful as deterministic pushdown automata. 
As far as we know, the closest result to ours is by Solomon~\cite{DBLP:conf/popl/Solomon78}, which shows conversions between a universe of ``context-free types'' and deterministic context-free languages. The universe of types studied by Solomon is different from $\FMuOmega$. With some work we could prove that Solomon's types can be embedded into $\FMuOmega$, which would entail our result as a corollary. However, it is easier and simpler to prove directly the reduction as we did.

The equivalence problem for deterministic pushdown automata was a notorious open problem for a long time, until Sénizergues showed it to be decidable~\cite{DBLP:conf/icalp/Senizergues97,DBLP:conf/icalp/Senizergues02}. Since his proof, many authors have tried to refine the result in an attempt to arrive at an implementable algorithm~\cite{DBLP:journals/corr/abs-1010-4760,DBLP:journals/tcs/Stirling01,DBLP:conf/icalp/Stirling02}.
\smallskip

\noindent\textit{Concurrent term languages. }
The usefulness of a type system is directly related to its capability to be used
in a programming language. Type systems such as the ones discussed in this work
lend themselves quite readily to functional term
languages~\cite{DBLP:conf/icalp/ImNP13}. For session types, existing term
languages are either inspired in the pi
calculus~\cite{DBLP:journals/lmcs/DasP22,DBLP:journals/iandc/Vasconcelos12,DBLP:conf/ppdp/ToninhoCP11}
or in the lambda
calculus~\cite{DBLP:journals/jfp/GayV10,DBLP:conf/icfp/LindleyM16,DBLP:conf/esop/ToninhoCP13},
or even the two~\cite{DBLP:journals/toplas/ToninhoY21}. The system presented in
this paper is linear, meaning that resources must be used exactly
once~\cite{DBLP:journals/toplas/KobayashiPT99,walker:substructural-type-systems}.
Some authors go beyond linearity by considering unrestricted type
qualifiers~\cite{DBLP:conf/esop/KeizerB021,DBLP:journals/iandc/Vasconcelos12} or
manifest sharing~\cite{DBLP:journals/pacmpl/BalzerP17}.



\section{Conclusion and future work}
\label{sec:conclusion}


This paper introduces an extension of system $F$ which includes equirecursion,
lambda abstractions, and context-free session types. We present type equivalence
algorithms, and a term language and its metatheory. Although we have defined a
rather general system, it turns out that for practical purposes one must
restrict recursion to $\tmu\kast$, that is, to type-level monomorphic recursion.
In any case, the main system $\FMuAstSemiOmega$ is a non-trivial extension of
(the contractive fragment of) $\FMuAstOmega$ (studied by Cai
\etal~\cite{DBLP:conf/popl/CaiGO16}) as well as $\FMuSemi$ (studied by Almeida
\etal~\cite{DBLP:journals/corr/abs-2203-12877}).

We have only considered polymorphic types of a functional nature: type
$\foralltinfix\alpha\kind T$ must always be of kind $\tkind$. It is worth
investigating polymorphism over session types, as it would allow further
additional behaviour. For example, we could be interested in streaming values of
heterogeneous nature, as in type
$\tmuinfix\alpha\skind{\extchoice\records{\donel\colon\Skip,\morel\colon\semit{\foralltinfix\beta\tkind{\INn\beta}}{\alpha}}}$.
It is however unclear whether this extension would still allow a translation into a simple grammar.

We proved that the type equivalence problem for systems $\FMuOmega$, $\FMuDotOmega$, $\FMuSemiOmega$ is at least as hard as a non-efficiently-decidable problem. We conjecture that these systems have the same power as deterministic pushdown automata (and hence, admit decidable type equivalence), but we do not have a construction to prove this result. In any case, our proof that the type equivalence problem is at least as hard as the bisimilarity of deterministic pushdown automata is enough to justify focus on the significant fragment with restricted recursion.

We study either full recursion (for theoretical results) or recursion limited to
kind $\kast$ (for algorithmic results). It would be interesting to study
in-between kinds of recursion; the next natural example is
$\tmu{\karrow\kast\kast}$. What model of computation would we arrive at if we
consider types written with this recursion operator? We conjecture that types
$\FMuOmega$ and $\FMuDotOmega$, when restricted to recursion of kind
$\karrow\kast\kast$, would still be expressible as simple grammars, whereas such
a restriction in the more powerful $\FMuSemiOmega$ would take us beyond this
model, but perhaps without reaching the expressivity of deterministic pushdown
automata.



\bibliographystyle{splncs04}
\bibliography{references}

\appendix

\section{Proof of \texorpdfstring{\cref{thm:kindingdecidable}}{decidability of kinding}}
\label{app:kinding}

This section is devoted to proving \cref{thm:kindingdecidable}, that kinding is decidable. Since kinding requires normalisation (due to rule \rulename{K-TApp}), we must first investigate that problem.

Even without considering sequential composition and recursion, it is well-known that normalisation may not terminate ($\typec{(\lambda x.x\ x)\ (\lambda x.x\ x)}$ is the typical example); in fact it is undecidable whether a type normalises. 
The standard approach is to consider kinded types, which are strongly normalising. 
However, in our model kinding itself requires normalisation, which leads to a ``chicken-and-egg'' situation. The solution is to consider a notion of \emph{pre-kinding}. We will write $\prekind \Delta T \kind$ to mean that $\TT$ is pre-kinded with kind $\kind$. The rules for pre-kinding are the same as for kinding, with the exception of rule \rulename{K-TApp} which loses the normalisation proviso, becoming rule \rulename{PK-TApp}:
\begin{equation*}
\infer[\ktapp]{
\istype \Delta {T} {\karrow \kind {\kind'}} 
\premspace
\istype \Delta U \kind 
\premspace
\isnormalised {\tapp TU}}
{\istype \Delta {\tapp T U} {\kind'}}
\Rightarrow
\infer[\rulename{PK-TApp}]{
\prekind \Delta {T} {\karrow \kind {\kind'}} 
\premspace
\prekind \Delta U \kind}
{\prekind \Delta {\tapp T U} {\kind'}}
\end{equation*}

\begin{lemma}
\label{lem:prekinddecidable}
$\prekind \Delta T \kind$ is decidable (in linear time).
\end{lemma}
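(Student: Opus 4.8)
The plan is to exploit the fact that, once the normalisation proviso is dropped, the pre-kinding rules become completely syntax-directed, so that pre-kinding reduces to a deterministic kind-synthesis routine driven by the shape of the type. First I would observe that the four rules---\kconst, \kvar, \ktabs and \rulename{PK-TApp}---are pairwise distinguished by the outermost constructor of $\TT$: a constant matches only \kconst, a variable only \kvar, an abstraction only \ktabs, and an application only \rulename{PK-TApp}; moreover every premise mentions a proper subterm of $\TT$ (for \ktabs the body, checked under the enlarged context $\Delta+\tbind\alpha\kind$). From this I would extract a recursive procedure $\mathrm{synth}(\Delta,\TT)$ returning the pre-kind of $\TT$ or reporting failure: for a constant $\tiota$ return $\kind_\iota$; for a variable $\talpha$ return the kind recorded for $\alpha$ in $\Delta$, failing if $\alpha\notin\Delta$; for $\tabs\alpha\kind{U}$ compute $\kindp=\mathrm{synth}(\Delta+\tbind\alpha\kind,\UT)$ and return $\karrow\kind\kindp$; and for $\tapp{T_1}{T_2}$ compute the kinds of $T_1$ and $T_2$, check that the former is an arrow kind $\karrow\kind\kindp$ whose domain coincides with the kind of $T_2$, and return $\kindp$ (failing otherwise).

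Next I would prove correctness by structural induction on $\TT$, in two directions: soundness (if $\mathrm{synth}(\Delta,\TT)=\kind$ then $\prekind\Delta\TT\kind$) and completeness together with determinism (if $\prekind\Delta\TT\kind$ then $\mathrm{synth}(\Delta,\TT)=\kind$, so that a type has \emph{at most one} pre-kind). Both halves are immediate, since each inductive case mirrors exactly one rule and that rule is invertible. Termination is automatic: every recursive call is made on a strict subterm of $\TT$, so the recursion tree is bounded by the syntax tree of $\TT$.

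For the linear-time bound I would note that each node of the syntax tree of $\TT$ is visited exactly once and performs $O(1)$ work: a constant lookup, a single context update or lookup, or one kind comparison. Representing $\Delta$ as a mutable map indexed by the well-ordered variable set $\{\typec{\upsilon_1},\typec{\upsilon_2},\ldots\}$, together with a stack recording shadowed bindings so that \ktabs can restore the previous entry for $\alpha$ upon exit, makes these context operations constant-time; the kind comparison in the application case is likewise $O(1)$ if kinds are hash-consed. Hence $\mathrm{synth}$ runs in time linear in the size of $\TT$, which yields the claimed bound.

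I do not anticipate a genuine obstacle here: the entire point of isolating pre-kinding is that discarding the side condition $\isnormalised{\tapp TU}$ of \ktapp turns a premise that is not obviously decidable into a purely structural one. The only points requiring care are the context bookkeeping under the renamed-variable discipline (so that $\Delta+\tbind\alpha\kind$ correctly shadows an earlier binding of the same name) and the data-structure choices needed to justify the word \emph{linear} rather than merely \emph{polynomial}. This lemma is only the first step towards \cref{thm:kindingdecidable}; the real difficulty lies afterwards, in showing that normalisation itself terminates so that the full proviso of \ktapp can be decided.
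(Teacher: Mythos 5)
Your proposal is correct and follows essentially the same route as the paper: the paper's proof likewise observes that dropping the normalisation proviso makes the rules syntax-directed, so a pre-kind can be synthesised by a single pass over the abstract syntax tree of $\TT$, using $\Delta$ for variables, giving linear time. Your version is in fact more careful than the paper's one-paragraph argument, since you spell out the soundness/completeness induction and address the cost of kind comparison (via hash-consing) and context shadowing, details the paper glosses over when asserting the linear bound.
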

\begin{proof}
Pre-kinding uses rule \rulename{PK-TApp} instead of \rulename{PK-TApp}, and therefore does not require normalisation. Therefore, given $\Delta$ and $\TT$ a pre-kind can be inferred by traversing the abstract syntax tree defining $\TT$, and using the context $\Delta$ to infer the kind of variables.
This processes requires a single pass and thus terminates in linear time. Therefore $\prekind \Delta T \kind$ is also decidable in linear time.
\end{proof}

We can now use pre-kinding to look at normalisation. 
Recall that, by \cref{fig:reduction}, there are essentially four ways to reduce a type $\TT$: the usual $\beta$-reduction for an application of a $\lambda$-term (\rulename{R-$\beta$}); the reduction for the recursion operator (\rulename{R-$\mu$}); the reductions for sequential composition (\rulename{R-Seq1}, \rulename{R-Seq2} and \rulename{R-Assoc}); and the reductions for duals. 
Let us separate reductions arising from \rulename{R-$\mu$} from the other rules, \ie let us use $\muarrow$ for reduction under recursion and $\lseqarrow$ for the usual $\beta$-reduction as well as reductions for sequential composition and duals. 
Let us also extend these reductions to functions on an application, \ie we have $\mured{T\ V}{U\ V}$, $\lseqred{T\ V}{U\ V}$ resp. whenever $\mured TU$, $\lseqred TU$ resp.
In this way, we have $\betaarrow\ =\ \lseqarrow\cup\muarrow$. We also extend the notion of weak head normal form and normalisation into the reductions $\lseqarrow$ and $\muarrow$, writing $\lseqnormalred TU$ and $\munormalred TU$.

\begin{lemma}[Normalisation]
\label{lem:prekindstrongnormal}
If $\prekind \Delta T \kind$ then there exists a unique $\UT$ \st $\lseqnormalred TU$.
\end{lemma}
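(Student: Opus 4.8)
The plan is to prove that the reduction $\lseqarrow$ is both strongly normalising and confluent on pre-kinded types; existence of $\UT$ then follows because a maximal $\lseqarrow$-reduction starting from $\TT$ must terminate at a $\lseqarrow$-irreducible type --- the weak head normal form for $\lseqarrow$ --- while uniqueness follows from confluence. Before any of this I would record a subject-reduction property for pre-kinding under $\lseqarrow$, the analogue of the Preservation clause of \cref{thm:normalisation}: if $\prekind\Delta T\kind$ and $\lseqred TU$, then $\prekind\Delta U\kind$. This needs only a routine substitution lemma for pre-kinding together with the observation that the renaming inserted by rule \rbeta is an alpha-conversion, hence preserves both pre-kind and the number and shape of redexes; in particular renaming is irrelevant to termination and we may reason up to it.

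The decisive step is strong normalisation, and here the whole point of separating $\muarrow$ from $\lseqarrow$ pays off: once rule \rmu is removed, the recursion operator $\tmu\kind$ becomes an inert constant with no reductions, and the residual calculus is a simply-kinded lambda calculus (with $\skind,\tkind$ as base kinds and $\karrow\kind{\kind'}$ as the only constructor) augmented by constants and their first-order rewrites. I would prove $\lseqarrow$-termination by a Tait-style reducibility argument indexed by kinds: define reducibility at a base kind ($\skind$ or $\tkind$) to be strong normalisation under $\lseqarrow$, and at $\karrow\kind{\kind'}$ by the usual clause that $\TT$ is reducible iff $\tapp TU$ is reducible whenever $\UT$ is. The fundamental lemma states that if $\prekind\Delta T\kind$, then $\TT$ is reducible under every reducible substitution for $\Delta$; its cases for variables, abstraction (\ktabs) and application are the standard ones, using that $\lseqarrow$ performs only weak reductions (never under a $\lambda$, nor inside an argument or a choice component), which if anything shrinks the set of redex positions to consider.

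The main obstacle is discharging the fundamental lemma for the two constants whose reductions are genuinely algebraic, namely $\Semi$ (rules \rseqone, \rassoc and the congruence \rseqtwo) and $\Dual$ (rules \rdsemi, \rdskip, \rdend, \rdin, \rdout, \rdextc, \rdintc, \rddvar and the congruence \rdctx). The difficulty is that $\beta$-reduction can duplicate such redexes by substituting an argument in several places, so a naive syntactic measure is not $\beta$-decreasing; this is exactly what the reducibility predicate absorbs. Concretely I would show that for reducible (hence $\lseqarrow$-strongly-normalising) session types $\TT,\UT$, the types $\semit TU$ and $\tapp\Dual T$ are again $\lseqarrow$-strongly-normalising, using an auxiliary well-founded measure counting the pending sequential-composition and dual redexes (for $\Dual$, the sum over occurrences $\tapp\Dual V$ of the size of $V$, which strictly decreases under each dual rule, including the push-inward rules \rdextc, \rdintc and the double-dual rule \rddvar); interleaving these finitely many head steps with the already-bounded internal reductions of $\TT$ and $\UT$ yields termination, and reducibility at kind $\skind$ is then immediate. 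With the fundamental lemma in hand, instantiating the identity substitution gives $\lseqarrow$-strong normalisation of every pre-kinded type.

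Finally I would obtain uniqueness from confluence. Having strong normalisation, by Newman's lemma it suffices to check local confluence, i.e.\ that the finitely many critical pairs join. The overlaps among the monoidal rules are precisely those already reconciled in the Confluence clause of \cref{thm:normalisation} (for instance the $(\semit\Skip\TT);\UT$ overlap between \rseqtwo and \rassoc); the remaining overlaps involve $\beta$ in function position against \rtappl, and the dual rules against \rdctx, all of which join by direct inspection. Confluence then yields a unique $\lseqarrow$-normal form, which is the required $\UT$ with $\lseqnormalred TU$.
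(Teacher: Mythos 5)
Your proposal is correct and follows essentially the same route as the paper: the paper's proof is a terse appeal to the Tait-style normalisation argument for the simply-typed lambda calculus (Pierce, Chapter 12), with kinds playing the role of simple types and the sequential-composition and $\Dual$ rewrites dismissed as benign additions, which is exactly the skeleton you flesh out. The differences are ones of detail only: you make explicit a well-founded measure for the $\Semi$/$\Dual$ rules (where the paper merely asserts, somewhat loosely, that these reductions do not increase the depth of the syntax tree --- a claim that is in fact delicate for \rassoc, so your measure is the more robust justification) and you obtain uniqueness from local confluence plus Newman's lemma, a point the paper leaves implicit.
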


\begin{proof}
Straightforward extension of the normalisation result for simply-typed lambda calculus~\cite[Chapter 12]{DBLP:books/daglib/0005958}. The fact that we include reduction under sequential composition and duality does not invalidate the standard proof, since these reductions simplify the type by attempting to bring a type constructor to the front; in particular, the depth of the abstract syntax tree defining the type does not increase along such reductions.
\end{proof}

Notice that we cannot extend the above lemma to full $\beta$-normalisation since reduction for the recursion operator may increase the size of the resulting type. The simplest example is $\tmu\skind\ (\tabs\alpha\skind\alpha)$, which is pre-kinded as $\prekind {}{\tmu\skind\ (\tabs\alpha\skind\alpha)}{\skind}$ but does not normalise:
$$\mured{\tmu\skind\ (\tabs\alpha\skind\alpha)}{(\tabs\alpha\skind\alpha)\ (\tmu\skind\ (\tabs\alpha\skind\alpha))}\lseqred{}{\tmu\skind\ (\tabs\alpha\skind\alpha)}\mured{}{\cdots}$$

\kindingdecidable*

\begin{proof}
Let $\Delta$, $\TT$, $\kind$ be given. We can first determine whether $\prekind \Delta T \kind$ (\cref{lem:prekinddecidable}). 
If $\TT$ is not pre-kinded, then it is also not kinded. 
Otherwise we need to determine whether $\TT$ normalises, or equivalently, whether there is an infinite sequence of reductions $\TT = \betared{\typec{T_0}}{\typec{T_1}}\betared{}{\typec{T_2}}\betared{}{\cdots}$
By \cref{lem:prekindstrongnormal}, such a sequence would have to contain an infinite amount of $\mu$-reductions, where between two $\mu$-reductions there must be a finite number of the other reductions. In other words, we can construct (any finite prefix of) the sequence 
$$\TT = \lseqnormalred{\typec{T_0}}{\typec{T'_0}}\mured{}{\typec{T_1}}\lseqnormalred{}{\typec{T'_1}}\mured{}{\typec{T_2}}\lseqnormalred{}{\typec{T'_2}}\mured{}{\cdots}$$
where for each $i$, $\typec{T'_i}$ is necessarily reached from $\typec{T_i}$ after finitely many steps.

In the above sequence, the only possible reductions that can be applied to $\typec{T'_i}$ are $\mu$-reductions, since $\typec{T'_i}$ is whnf with respect to the other reductions. If $\typec{T'_i}$ does not have any $\mu$-reduction, the sequence terminates and we can correctly determine that $\TT$ normalises. Otherwise if $\typec{T'_i}$ (only) has a $\mu$-reduction, then it must fit into one of the following cases (this is where we use the restriction to recursion of kind $\kast$):
\begin{equation*}\begin{aligned}
\typec{T'_i} &= \mured{\tapp{\tmu\kast}\UT}{\tapp\UT{(\tapp{\tmu\kast}\UT)}}
\\
\typec{T'_i} &= \mured{\semit{(\tapp{\tmu\kast}\UT)}{V}}{\semit{(\tapp U{(\tapp{\tmu\kast}\UT)})}{V}}
\\
\typec{T'_i} &= \mured{\tdual{(\tapp{\tmu\kast}\UT)}}{\tdual{(\tapp U {(\tapp{\tmu\kast}\UT)})}}
\\
\typec{T'_i} &= \mured{\semit{(\tdual{(\tapp{\tmu\kast}\UT)})}{V}}{\semit{(\tdual{(\tapp U {(\tapp{\tmu\kast}\UT)})})}{V}}
\end{aligned}\end{equation*}
Note that, in each of the four cases above, the expression $\tmu\kast\ \UT$ reappears after the $\mu$-reduction (without change). Therefore, the number of different subexpressions $\tmu\kast\ \UT$ that might appear in the sequence of reductions is finite, and we can detect an infinite sequence by `tagging' the expression $\tmu\kast\ \UT$ and stopping once $\tmu\kast\ \UT$ reappears. Therefore, we can devise an algorithm for deciding whether $\TT$ normalises: follow the sequence of reductions, terminating: as soon as no reductions are possible, or as soon as we revisit a previously `tagged' $\tmu\kast\ \UT$.
\end{proof}



\end{document}